\begin{document}

\title{\LARGE Can one design a geometry engine? \\
}
\subtitle{ On the (un)decidability of certain affine Euclidean geometries}

\titlerunning{Decidability of geometry}        

\author{Johann A. Makowsky        
}

\authorrunning{J.A. Makowsky} 

\institute{J.A. Makowsky \at
              Faculty of Computer Science, Technion--Israel Institute of Technology, Haifa, Israel \\
              \email{janos@cs.technion.ac.il}           
}
\date{Received: \today / Accepted: date}

\maketitle
\setcounter{tocdepth}{2}

\begin{abstract}
We survey the status of decidabilty of the consequence relation in various
axiomatizations of Euclidean geometry.
We draw attention to a widely overlooked result by Martin Ziegler from 1980,
which proves Tarski's conjecture on the undecidability of finitely axiomatizable theories of fields.
We elaborate on how to use Ziegler's theorem to show that the consequence relations
for the first order theory of the Hilbert plane and the Euclidean plane are undecidable.
As new results we add:
\begin{description}
\item[(A)]
The first order consequence relations for Wu's orthogonal and metric geometries
(Wen-Ts\"un Wu, 1984),
and for the axiomatization of Origami geometry 
(J. Justin 1986, H. Huzita 1991)
are undecidable.
\end{description}
It was already known that  the universal theory of Hilbert planes and 
Wu's orthogonal geometry is decidable.
We show here using elementary model theoretic tools that
\begin{description}
\item[(B)]
the universal first order consequences of any geometric theory $T$ of Pappian planes
which is consistent with the analytic geometry of the reals
is decidable.
\end{description}
The techniques used were all known to experts in mathematical logic and geometry in the 
past  but no detailed proofs are easily accessible for practitioners of 
symbolic computation or automated theorem proving.

\keywords{Euclidean Geometry \and Automated Theorem Proving \and Undecidability}
\end{abstract}

\newif\ifskip
\skiptrue

\ifskip\else
\newtheorem{definition}{Definition}
\newtheorem{definitions}[definition]{Definitions}
\newtheorem{notation}[definition]{Notation}
\newtheorem{lemma}[definition]{Lemma}
\newtheorem{theorem}[definition]{Theorem}
\newtheorem{proposition}[definition]{Proposition}
\newtheorem{obs}[definition]{Observation}
\newtheorem{corollary}[definition]{Corollary}
\newtheorem{fact}[definition]{Fact}
\newtheorem{examples}[definition]{Examples}
\newtheorem{result}[definition]{Result}
\newtheorem{propex}[definition]{Proposition-Exercise}
\newtheorem{o_Problem}[definition]{Hard Problem}
\newtheorem{observation}[definition]{Observation}
\newtheorem{Digression}{Digression}
\newtheorem{Dtheorem}[Digression]{Theorem}
\newtheorem{Dconjecture}[Digression]{Conjecture}
\newtheorem{Dproposition}[Digression]{Proposition}
\newtheorem{Dexample}[Digression]{Example}
\newtheorem{Dproblem}[Digression]{Problem}
\newcommand{\nwl}{\vspace{5mm}\newline}
\newcommand{\nwp}{\newpage\ \newline} 

\newenvironment{renumerate}{\begin{enumerate}}{\end{enumerate}}
\renewcommand{\theenumi}{\roman{enumi}}
\renewcommand{\labelenumi}{(\roman{enumi})}
\renewcommand{\labelenumii}{(\roman{enumi}.\alph{enumii})}
\newtheorem{exercises}[definition]{Exercises}
\newtheorem{convention}[definition]{Convention}
\newtheorem{principle}[definition]{Principle}
\newtheorem{hint}[definition]{Hint}
\fi 

\newcommand{\N}{{\mathbb N}}
\newcommand{\bE}{{\mathbf E}}
\newcommand{\bO}{{\mathbf O}}
\newcommand{\bV}{{\mathbf V}}
\newcommand{\bN}{{\mathbf N}}
\newcommand{\bR}{{\mathbf R}}
\newcommand{\HF}{\mbox{\bf HF}}
\newcommand{\CNF}{\mbox{\bf CNF}}
\newcommand{\PNF}{\mbox{\bf PNF}}
\newcommand{\QF}{\mbox{\bf QF}}
\newcommand{\DNF}{\mbox{\bf DNF}}
\newcommand{\DISJ}{\mbox{\bf DISJ}}
\newcommand{\CONJ}{\mbox{\bf CONJ}}
\newcommand{\Ass}{\mbox{Ass}}
\newcommand{\Support}{\mbox{Support}}
\newcommand{\V}{\mbox{\bf Var}}
\newcommand{\fA}{{\mathfrak A}}
\newcommand{\fB}{{\mathfrak B}}
\newcommand{\fN}{{\mathfrak N}}
\newcommand{\fZ}{{\mathfrak Z}}
\newcommand{\fQ}{{\mathfrak Q}}
\newcommand{\Aa}{{\mathfrak A}}
\newcommand{\Bb}{{\mathfrak B}}
\newcommand{\Cc}{{\mathfrak C}}
\newcommand{\Gg}{{\mathfrak G}}
\newcommand{\Ww}{{\mathfrak W}}
\newcommand{\Rr}{{\mathfrak R}}
\newcommand{\Nn}{{\mathfrak N}}
\newcommand{\Zz}{{\mathfrak Z}}
\newcommand{\Qq}{{\mathfrak Q}}
\newcommand{\F}{{\mathbf F}}
\newcommand{\T}{{\mathbf T}}
\newcommand{\Z}{{\mathbb Z}}
\newcommand{\R}{{\mathbb R}}
\newcommand{\C}{{\mathbb C}}
\newcommand{\Q}{{\mathbb Q}}
\newcommand{\bP}{{\mathbf P}}
\newcommand{\bQ}{{\mathbf Q}}
\newcommand{\bFP}{{\mathbf{FP}}}
\newcommand{\bNP}{{\mathbf{NP}}}
\newcommand{\MT}{\mbox{MT}}
\newcommand{\TT}{\mbox{TT}}
\newcommand{\card}{\mathrm{card}}
\newcommand{\SOLEVAL}{{\mathrm{SOLEVAL}}}
\newcommand{\MSOLEVAL}{{\mathrm{MSOLEVAL}}}
\newcommand{\CMSOLEVAL}{{\mathrm{CMSOLEVAL}}}
\newcommand{\MSOL}{{\mathrm{MSOL}}}
\newcommand{\CMSOL}{{\mathrm{CMSOL}}}
\newcommand{\CSOL}{{\mathrm{CSOL}}}
\newcommand{\CFOL}{{\mathrm{CFOL}}}
\newcommand{\FOL}{{\mathrm{FOL}}}
\newcommand{\SOL}{{\mathrm{SOL}}}
\newcommand{\cL}{{\mathcal{L}}}
\newcommand{\cF}{{\mathcal{F}}}
\newcommand{\cS}{{\mathcal{S}}}
\newcommand{\cO}{{\mathcal{O}}}
\newcommand{\fF}{{\mathfrak{F}}}
\newcommand{\fH}{{\mathfrak{H}}}
\newcommand{\fU}{{\mathfrak{U}}}
\newcommand{\fE}{{\mathfrak{E}}}
\newcommand{\cA}{{\mathcal{A}}}
\newcommand{\cB}{{\mathcal{B}}}
\newcommand{\cR}{{\mathcal{R}}}
\newcommand{\cH}{{\mathcal{H}}}
\newcommand{\cZ}{{\mathcal{Z}}}
\newcommand{\ATM}{\mathrm{ATM}}
\newcommand{\STM}{\mathrm{STM}}
\newcommand{\GTG}{\mathrm{GTG}}
\newcommand{\GTC}{\mathrm{GTC}}
\newcommand{\cT}{\mathcal{T}}
\newcommand{\WFF}{{\mathrm{WFF}}}

\newcommand{\TFOL}{\mbox{\bf TFOL}}
\newcommand{\FOF}{\mbox{\bf FOF}}
\newcommand{\NNF}{\mbox{\bf NNF}}
\newcommand{\Var}{\mathrm{\bf Var}}
\newcommand{\Term}{\mathrm{\bf Term}}

\newpage
\tableofcontents
\section{Introduction}
\label{se:intro}


\begin{quote}
\raggedleft
{\em ``Truth cannot be demonstrated, only invented.''}
\\
\footnotesize
The bishop in Max Frisch's play\\
{\em Don Juan or the love of geometry}, Act V.
\end{quote}

Since the beginning of the computer era automated theorem proving
in geometry remained a central topic and challenge for artificial intelligence.
Already in the late 1950s, \cite{gelernter1959realization,gelernter1960empirical},
H. Gelernter presented a machine implementation of a theorem prover for Euclidean Geometry.

The very first idea for mechanizing theorem proving in Euclidean geometry
came from the fact that till not long ago high-school students were rather proficient
in proving theorems in planimetry using Euclidean style deductions.
A modern treatment of Euclidean Geometry was initiated by D. Hilbert at the 
end of the 19th century \cite{hilbert1902foundations}, and a modern reevaluation of Euclidean Geometry
can be found in \cite{bk:Hartshorne2000}.
Formalization in first order logic is thoroughly discussed in \cite{avigad2009formal},
and conceptual issues are discussed in
\cite[Part III]{bk:Baldwin2018}.

On the high-school level one distinguishes between {\em Analytic Geometry} which is the geometry
using coordinates  ranging over the real numbers, and {\em Synthetic Geometry} which deals
with points and lines with their incidence relation augmented by various other relations
such as {\em equidistance, orthogonality, betweenness, congruence of angles} etc.
A geometric statement is, in the most general case, a formula in {\em second order logic $\SOL$}
using these relations. However, it is more likely that for practical purposes
full second order logic is rarely used. 
In fact, all the geometrical theorems proved in \cite{hilbert1902foundations} are expressible 
by formulas of of first order logic $\FOL$ with very few qunatifier alterntaions, cf. also
\cite{avigad2009formal,miller2007euclid}.
Instead, one uses statements expressed in a suitable
fragment $\fF$ of second order logic, which can be full first order logic $\FOL$
(the {\em Restricted Calculus} in the terminology of \cite{bk:HilbertAckermann-eng})
or an even more restricted fragment, 
such as the universal $\forall$-formulas $\fU$,
the existential $\exists$-formulas $\fE$,
or $\forall\exists$-Horn formulas $\fH$
of first order logic.

Many variants of Synthetic Euclidean Geometry are axiomatized in the language of first order logic
by a finite set of axioms or axiom schemes $T \subseteq \FOL$ if {\em continuity requirements are discarded}.
It follows from the Completeness Theorem of first order logic that
the first order consequences of  $T$ are recursively (computably) enumerable.
If full continuity axioms, which not $\FOL$-expressible, are added even the first order consequences
of $T$ are not necessarily recursively enumerable.

A first order statement in the case of Analytic Geometry over the reals
is a first order formula $\phi$ in the language of ordered fields and we ask whether $\phi$
is true in the ordered field of real numbers.
By a celebrated theorem of A. Tarski  announced in \cite{ar:Tarski1931},
and proven in \cite{bk:Tarski1951},
this question is mechanically decidable using quantifier elimination. 
However, the complexity of the decision procedure given
by Tarski uses an exponential blowup for each elimination of a quantifier. This has been  dramatically improved
by G.E. Collins in 1975, reprinted in \cite{caviness2012quantifier},
giving a doubly exponential algorithm in the size of the input formula.
Further progress was and is slow. For a state of the art discussion, cf. \cite{bapr:03,caviness2012quantifier}.
However, it is unlikely that a polynomial time algorithm exists for quantifier elimination over the
ordered field of real numbers for existential formulas, because this would imply that
in the computational model of Blum-Shub-Smale over the reals $\R$, \cite{bk:BCSS},
we would have $\bP_{\R} = \bNP_{\R}$, \cite{bk:Poizat,prunescu2006fast}, which is one of the
open  Millennium Problems,
\cite{carlson2006millennium}.
For formulas with unrestricted alternation of quantifiers a doubly exponential lower bound 
was given in \cite{dahe:88}. 
Simply exponential upper bound for existential formulas 
were given by several authors. For a survey, see \cite{bapr:03}.

So what can a geometry engine for Euclidean Geometry try to achieve?

For a fixed fragment $\fF$ of $\SOL$ in the language of Analytic or Synthetic Geometry
we look at the following possibilities:
\begin{description}
\item[\bf Analytic Tarski Machine $\ATM(\fF)$:] \ 
\begin{description}
\item[Input:]
A first order formula $\phi \in \fF$ in the language of ordered fields.
\item[Output:]
{\bf true} if $\phi$ is true in the ordered field of real numbers,
and {\bf false} otherwise.
\end{description}
\item[\bf Synthetic Tarski Machine $\STM(\fF)$:] \ 
\begin{description}
\item[Input:]
A first order formula $\psi \in \fF$ in a language of synthetic geometry.
\item[Output I:] a translation $\phi = cart(\psi)$ of $\psi$ into the language
of analytic geometry.
\item[Output II:]
{\bf true} if $\phi$ is true in the ordered field of real numbers,
and {\bf false} otherwise.
\end{description}
\item[\bf Geometric Theorem Generator $\GTG(\fF)$:] \ 
\begin{description}
\item[Input:]
A recursive set of first order formulas $\T \subseteq \FOL $ (not necessarily in $\fF$), 
in the language of some synthetic  geometry.
\item[Output:] A non-terminating sequence $\phi_i: \in \N$ 
of formulas in $\fF$ the language of the same synthetic  geometry which are consequences of $\T$.
\end{description}
\item[\bf Geometric Theorem Checker $\GTC(\fF)$:] \ 
\begin{description}
\item[Input:]
A recursive set of first order formulas $\T$ and another formula $\phi \in \fF$ 
in the language of some synthetic  geometry.
\item[Output:] 
{\bf true} if $\phi$ is a consequence of $\T$,
and {\bf false} otherwise.
\end{description}
\end{description}

In the light of the complexity of quantifier elimination over the real numbers, \cite{bk:Poizat,prunescu2006fast},
designing  {\em computationally feasible} 
Analytic or Synthetic Tarski Machines for various fragments $\fF$ with the exception of $\fU$ is
a challenge both for Automated Theorem Proving (ATP) as well as for Symbolic Computation (SymbComp).
Designing  
Geometric Theorem Generators $\GTG(\fF)$
is possible but
seems pointless, 
because it will always output long sub-sequences of geometric theorems in which we
are not interested.

In this paper we will concentrate on the challenge of designing
Geometric Theorem Checkers $\GTC(\fF)$. 
This is possible only for very restricted fragments $\fF$ of $\FOL$, such as the universal formulas $\fU$.

The main purpose of this paper is to bring {\em negative results} 
concerning Geometric Theorem Checkers $\GTC(\fF)$
to a wider audience.

The negative results are based on a  correspondence between sufficiently strong
axiomatizations of Synthetic Euclidean Geometries and certain theories of fields consistent
with the theory of the ordered field of real numbers.

A model of incidence geometry is an incidence structure which satisfies the axioms I-1, I-2 and I-3
from Section \ref{se:incidence}.
An {\em affine plane} is a model
of incidence geometry satisfying the Parallel Axiom (ParAx).
An affine plane is {\em Pappian} is it additionally satisfies the axiom of Pappus (Pappus).
In this paper an axiomatization of geometry $T$ is
sufficiently strong if all its models are affine planes.

Let $\cF$ be a field of characteristic $0$.
One can construct an Cartesian plane 
%
$\Pi(\cF)$ over $\cF$
which satisfies the Pappian axiom, and where all the lines are infinite.
This construction is an example of a transduction as defined in Section \ref{se:transductions}.
On the other side, if $\Pi$ is an Pappian plane which has no finite lines 
then one can define inside $\Pi$ its coordinate field $\cF(\Pi)$ which is of characteristic $0$.

\begin{proposition}[F. Schur \cite{schur1909grundlagen} and E. Artin \cite{artin1957geometric}]
\label{th:Artin}
\label{th:onto}
\begin{enumerate}[(i)]
\item
$\cF$ is a field of characteristic $0$ iff $\Pi(\cF))$ is a 
Pappian plane with no finite lines.
\item
$\Pi$ is a 
Pappian plane with no finite lines
iff
$\cF(\Pi))$ is a field of characteristic $0$. 
\item
The fields $\cF$ and $\cF(\Pi(\cF))$ are isomorphic.
\item
The Pappian planes 
$\Pi$ and $\Pi(\cF(\Pi))$ are isomorphic as incidence structures.
\end{enumerate}
\end{proposition}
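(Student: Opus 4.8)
The plan is to treat this as the classical coordinatization theorem for affine planes --- Hilbert's \emph{Streckenrechnung}, in the streamlined form due to Schur and Artin --- by exhibiting the two constructions $\cF \mapsto \Pi(\cF)$ and $\Pi \mapsto \cF(\Pi)$ and checking that they are mutually inverse up to isomorphism; parts (iii) and (iv) carry the real content, and (i) and (ii) then fall out. For the construction $\Pi(\cF)$ I would take as points the elements of $\cF \times \cF$, as lines the zero sets of affine forms $aX + bY + c = 0$ with $(a,b) \neq (0,0)$, and as incidence the membership relation, and then verify: the incidence axioms I-1--I-3 (two points lie on a unique line; every line carries at least two points --- in fact $|\cF|$ many, which is infinite since a field of characteristic $0$ contains $\Q$; there exist three non-collinear points); the Parallel Axiom (parametrise lines by slope and intercept); and the Pappus configuration. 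Only the last requires a genuine computation: writing the nine points of a Pappus configuration in coordinates, the collinearity of the three diagonal intersection points reduces to an algebraic identity that holds precisely because multiplication in $\cF$ is commutative. This already gives the right-to-left direction of (i), the infinitude of every line being exactly the ``no finite lines'' clause.

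For the construction $\cF(\Pi)$ I would fix in $\Pi$ an affine frame, i.e.\ three non-collinear points $O, I, J$, and equip the line $\ell = OI$ with an addition and a multiplication defined by the usual parallel-projection recipes (addition read off from the translation group of $\Pi$, multiplication from projections through $J$, the line $IJ$, and their parallels). The verification that these operations are well defined and satisfy the field axioms splits exactly as in Hilbert--Artin: associativity, distributivity and the existence of additive and multiplicative inverses are instances of the Theorem of Desargues --- available to us because the Pappus axiom implies the Desargues axiom --- while commutativity of multiplication is an instance of the Pappus axiom itself. Hence $\cF(\Pi)$ is a field. If moreover $\Pi$ has no finite lines then $\ell$ is infinite, so $\cF(\Pi)$ is infinite; in particular $1,\, 1+1,\, 1+1+1,\dots$ are pairwise distinct on $\ell$, so $\cF(\Pi)$ has characteristic $0$. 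This gives the left-to-right direction of (ii); its converse is subsumed by the previous paragraph.

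Parts (iii) and (iv) are coherence checks. For (iii): take the standard frame $O=(0,0)$, $I=(1,0)$, $J=(0,1)$ in $\Pi(\cF)$ and unravel the parallel-projection constructions on the line $Y=0$; one finds that the sum and product of $(a,0)$ and $(b,0)$ are $(a+b,0)$ and $(ab,0)$, so $t \mapsto (t,0)$ is a field isomorphism $\cF \to \cF(\Pi(\cF))$. For (iv): given $\Pi$ with no finite lines and a frame $O,I,J$, assign to each point $P$ the pair of coordinates obtained by projecting $P$ onto the axes $OI$ and $OJ$ parallel to the frame lines; using the segment arithmetic one shows that three points of $\Pi$ are collinear if and only if their coordinate pairs satisfy a common affine equation $aX + bY + c = 0$ over $\cF(\Pi)$, whence the assignment is an isomorphism of incidence structures $\Pi \to \Pi(\cF(\Pi))$.

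The main obstacle --- which I would not write out but cite from Hilbert \cite{hilbert1902foundations}, Artin \cite{artin1957geometric}, or Hartshorne \cite{bk:Hartshorne2000} (the chapters on Desarguesian and Pappian planes and on segment arithmetic) --- is the bookkeeping in the construction of $\cF(\Pi)$: showing that the parallel-projection definitions of $+$ and $\cdot$ do not depend on the auxiliary choices, that they obey every field axiom, and that incidence in $\Pi$ is governed by linear equations over $\cF(\Pi)$. Each of these is, individually, a concrete instance of a Desargues or Pappus configuration, but marshalling them into a clean argument is where the work lies. By contrast the construction of $\Pi(\cF)$ is routine coordinate algebra, and the dichotomy ``characteristic $0$'' versus ``no finite lines'' is immediate once the coordinate field is in hand, since a line of $\Pi(\cF)$ is in bijection with $\cF$.
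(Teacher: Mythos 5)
Your overall route---coordinatize a Pappian plane by parallel-projection arithmetic on a fixed axis, use Hessenberg's fact that Pappus implies Desargues for the field axioms and Pappus itself for commutativity, and then check the two constructions are mutually inverse---is the classical Schur--Artin argument, which is exactly what the paper does: it does not prove Proposition~\ref{th:Artin} at all, but cites \cite{schur1909grundlagen,artin1957geometric} (and, for the first-order refinement it actually needs later, replaces your segment-style construction by Hall's planar ternary rings, Theorems~\ref{th:PTR} and~\ref{th:szmielew}). So the shape of your argument is acceptable.

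There is, however, a genuine gap in your handling of characteristic $0$, i.e.\ in both directions of (i) and (ii). You argue: ``if $\Pi$ has no finite lines then $\ell$ is infinite, so $\cF(\Pi)$ is infinite; in particular $1,\,1+1,\,1+1+1,\dots$ are pairwise distinct, so $\cF(\Pi)$ has characteristic $0$.'' The last inference is false: an infinite field can have positive characteristic, e.g.\ $\mathbb{F}_p(t)$, and the affine plane over $\mathbb{F}_p(t)$ is a Pappian plane in which every line is infinite while the coordinate field has characteristic $p$; symmetrically, in (i) ``all lines of $\Pi(\cF)$ are infinite'' does not certify characteristic $0$. What the paper means by ``no finite lines'' is its axiom schema (InfLines) of Section~\ref{se:incidence}, which is explicitly stated there to be \emph{stronger} than the existence of infinitely many points: it asserts that the iterated parallel-transport points $A_1, A_2, \dots$ on a line are pairwise distinct. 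Under coordinatization this configuration is precisely the geometric transcription of $\mathbf{n} \neq 0$ for all $n \in \N$, and that is the statement you must verify in both directions: in (ii) you should show that distinctness of the $A_n$ forces $1+\cdots+1 \neq 0$ in $\cF(\Pi)$ (not merely that $\ell$ is infinite), and in (i) you should check that the schema (InfLines) holds in $\Pi(\cF)$ when $\mathrm{char}\,\cF = 0$, by computing the points $A_n$ in coordinates. Without this correction the equivalences in (i) and (ii), as you have argued them, do not go through.
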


A theory (set of formulas) $T \subseteq \FOL(\tau)$
is  {\em axiomatizable} 
if the set of consequences of $T$ 
is computably enumerable.
$T$ is {\em decidable}
if the set of consequences of $T$ 
is computable.
$T$ is {\em undecidable} if it is not decidable.
$T$ is {\em complete}
if for every formula $\phi \in \FOL(\tau)$ without free variables either 
$T \models \phi$ or
$T \models \neg \phi$.
We note that if $T$ is axiomatizable and complete, the $T$ is decidable.

On the side of theories of fields we have several undecidability results:

\begin{proposition}[J. Robinson, 1949 \cite{ar:JRobinson49}]
\label{th:JRobinson}
\begin{enumerate}[(i)]
\item
The theory of fields is undecidable.
The same holds for fields of characteristic $0$.
\item
The theory of ordered fields is undecidable.
\item
The theory of the field of rational numbers $\langle \Q, +, \times \rangle$
is undecidable.
\end{enumerate}
\end{proposition}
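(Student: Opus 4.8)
The plan is to reduce all three parts to the undecidability of first order arithmetic $\mathrm{Th}(\langle\N;+,\times\rangle)$ (classical, by G\"odel's incompleteness and Church's theorem) by means of first order interpretations, the linchpin being Julia Robinson's theorem that $\Z$ is first order definable, \emph{without parameters}, in the bare field $\langle\Q;+,\times\rangle$. I would start with that definability result. One exhibits an existential formula $\varphi(a,b,n)$ --- the solvability in $\Q$ of a carefully chosen ternary quadratic equation whose coefficients are polynomials in the parameters $a,b$ and in $n$ --- such that, by the Hasse--Minkowski local--global principle for quadratic forms over $\Q$, for each $a,b\in\Q$ the set $S_{a,b}=\{\,n\in\Q:\Q\models\varphi(a,b,n)\,\}$ contains $0$ and is closed under $n\mapsto n+1$ and under $n\mapsto n-1$, while $\bigcap_{a,b}S_{a,b}=\Z$. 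Then the formula
\[
\iota(x)\ :\equiv\ \forall a\,\forall b\;\bigl(\,\varphi(a,b,0)\ \wedge\ \forall m\bigl(\varphi(a,b,m)\to\varphi(a,b,m+1)\wedge\varphi(a,b,m-1)\bigr)\ \longrightarrow\ \varphi(a,b,x)\,\bigr)
\]
defines $\Z$ inside $\Q$. The one genuinely technical step is the number theoretic verification that these quadratic predicates really do pin down $\Z$; it is entirely classical --- this is Robinson's 1949 computation --- and I would cite \cite{ar:JRobinson49} for it rather than redo it.

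Part (iii) is then immediate: relativizing quantifiers to $\iota$ is a computable operation $\sigma\mapsto\sigma^{\iota}$ with $\langle\Z;+,\times\rangle\models\sigma$ iff $\langle\Q;+,\times\rangle\models\sigma^{\iota}$, so $\langle\Z;+,\times\rangle$ is interpreted in the field $\Q$; moreover $\langle\N;+,\times\rangle$ is interpreted in $\langle\Z;+,\times\rangle$ by Lagrange's four square theorem ($n\ge 0$ iff $\exists x_1 x_2 x_3 x_4\ n=x_1^2+x_2^2+x_3^2+x_4^2$), so $\mathrm{Th}(\langle\Z;+,\times\rangle)$ is undecidable, and hence so is $\mathrm{Th}(\langle\Q;+,\times\rangle)$. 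Consequently the theory of the ordered field $\langle\Q;+,\times,\le\rangle$ is undecidable as well, its $\{+,\times\}$-reduct being $\langle\Q;+,\times\rangle$; indeed the two theories are computably equivalent, since $\le$ is $\emptyset$-definable already in the plain field $\Q$ --- a rational is $\ge 0$ exactly when it is a sum of four rational squares (one direction is clear; for the other, write a positive rational as $ab/b^{2}$ with $a,b$ positive integers and apply Lagrange to $ab$).

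What remains --- and the step I expect to be the genuine obstacle --- is the descent from the single (ordered) field $\Q$ to the first order theory of the \emph{class} of all fields, all fields of characteristic $0$, and all ordered fields. This does not follow formally from (iii): to deduce undecidability of a class theory one must reduce some undecidable (hence recursively enumerable) set --- for instance the halting problem, which reduces to $\mathrm{Th}(\langle\Q;+,\times\rangle)$ by the interpretations just used --- \emph{into} the theory of the class, and there is \emph{no single} first order interpretation of arithmetic valid in every field: finite fields leave no room for one, algebraically closed fields have only finite or cofinite definable subsets, and the prime subfield is not uniformly definable, so Robinson's $\iota$, tailored to $\Q$, behaves quite differently elsewhere. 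The route, which I would take from \cite{ar:JRobinson49} rather than improvise, is a more delicate, non-uniform coding that reflects the arithmetic of $\Q$ inside the class, exploiting the recursive axiom scheme for ``characteristic $0$'' and the distinguished role of $\Q$ as the prime field; getting this coding to see the ambient field only through its prime subfield is precisely where the real work lies, and in a survey it is legitimate to establish parts (i) and the class version of (ii) at this level and to leave the bookkeeping to the original paper.
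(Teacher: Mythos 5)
The paper does not prove this proposition at all: it is stated as a classical result with a citation to \cite{ar:JRobinson49}, so there is no internal proof to compare against. Your sketch of (iii) matches the standard route (Julia Robinson's definability of $\Z$ in $\langle\Q,+,\times\rangle$, interpretation of $\langle\N,+,\times\rangle$ via Lagrange's four-square theorem, undecidability of arithmetic by G\"odel--Church), and your remark that $\leq$ is $\emptyset$-definable in the plain field $\Q$ by sums of four squares is correct, so (iii) and the ordered-field-of-rationals variant are in good shape modulo the number theory you rightly delegate to Robinson.

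Where your account goes astray is the final paragraph on (i) and (ii). You are right that undecidability of the class theories does not follow formally from undecidability of $\mathrm{Th}(\Q)$, and right that no single interpretation of arithmetic works uniformly in all fields; but the missing step is not a ``delicate non-uniform coding that sees the ambient field only through its prime subfield.'' The standard repair --- and the one this paper itself has on hand as Lemma \ref{le:beklemishev} (the Tarski--Mostowski--Robinson/Beth machinery, also used for the geometric theories) --- is this: take a finitely axiomatizable, \emph{essentially} undecidable theory such as Robinson arithmetic; by the interpretation you built in (iii) it is interpretable in the \emph{complete} theory of the field $\Q$. Since $\Q$ (respectively $\langle\Q,+,\times,\leq\rangle$) is a model of the theory of fields, of fields of characteristic $0$, and of ordered fields, that complete theory is a consistent extension of each of them, so Robinson arithmetic is weakly interpretable in each; Lemma \ref{le:beklemishev} then yields that each of these theories (indeed every subtheory in the same vocabulary) is undecidable. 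No coding valid in arbitrary fields, no appeal to the characteristic-$0$ axiom scheme, and no uniform definability of the prime subfield are needed --- the whole point of weak interpretability is that one well-chosen model suffices. Your deferral to \cite{ar:JRobinson49} is legitimate in a survey, but as written it would mislead a reader about where the work lies; the short argument via Lemma \ref{le:beklemishev} is what should be supplied for (i) and (ii).
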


To show that the first order theory of affine geometry is undecidable we would like to use a classical tool from
decidability theory, the details of which we explain in Section \ref{se:undecidability}.

\begin{proposition}[\cite{Rabin65}, based on \cite{bk:TarskiMostowskiRobinson53}]
\label{th:undecidable}
Let $I$ be a first order translation scheme with associated transduction $I^*$
which maps $\tau$-structures into $\sigma$-structures.
Furthermore,
let $S$ be an undecidable first order theory over a relational vocabulary $\sigma$
and let $T$ be a theory over $\tau$.
Assume that
$I^*$ maps the models of $T$ {\bf onto} the models of $S$, and that $S$ is undecidable,
then $T$ is also undecidable.
\end{proposition}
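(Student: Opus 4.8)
The plan is to exploit the \emph{Fundamental Property of translation schemes}. A first order translation scheme $I$ from $\tau$ to $\sigma$ induces, besides the map $I^*$ on structures, a purely syntactic and \emph{computable} map $I^\sharp$ on formulas: it sends each $\sigma$-formula $\phi$ to the $\tau$-formula $I^\sharp(\phi)$ obtained by replacing every $\sigma$-atom by its defining $\tau$-formula and relativizing all quantifiers to the domain formula of $I$. Because $\sigma$ is relational this rewriting is entirely routine, and it has the semantic commutation property that for every $\tau$-structure $\mathfrak{A}$ on which $I^*$ is defined,
\[
\mathfrak{A} \models I^\sharp(\phi) \quad\Longleftrightarrow\quad I^*(\mathfrak{A}) \models \phi .
\]
The details of $I^*$ and $I^\sharp$ are exactly those set up in Section~\ref{se:transductions}; what matters here is only that $\phi \mapsto I^\sharp(\phi)$ is effective on codes of sentences.

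First I would prove the reduction identity: for every $\sigma$-sentence $\phi$,
\[
S \models \phi \quad\Longleftrightarrow\quad T \models I^\sharp(\phi).
\]
For the direction from left to right, take an arbitrary $\mathfrak{A} \models T$. Since $I^*$ maps models of $T$ into models of $S$, we have $I^*(\mathfrak{A}) \models S$ and hence $I^*(\mathfrak{A}) \models \phi$; the Fundamental Property then gives $\mathfrak{A} \models I^\sharp(\phi)$, and as $\mathfrak{A}$ was arbitrary, $T \models I^\sharp(\phi)$. For the converse, assume $T \models I^\sharp(\phi)$ and let $\mathfrak{B} \models S$; here the \textbf{onto} hypothesis enters, supplying a model $\mathfrak{A} \models T$ with $I^*(\mathfrak{A}) \cong \mathfrak{B}$. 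Then $\mathfrak{A} \models I^\sharp(\phi)$, so $I^*(\mathfrak{A}) \models \phi$, and since the truth of a sentence is invariant under isomorphism, $\mathfrak{B} \models \phi$; as $\mathfrak{B}$ was arbitrary, $S \models \phi$.

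The conclusion is then immediate. The identity above exhibits $\phi \mapsto I^\sharp(\phi)$ as a computable many-one reduction of the consequence relation of $S$ to the consequence relation of $T$: if $T$ were decidable, one could decide whether $S \models \phi$ by computing $I^\sharp(\phi)$ and invoking the decision procedure for $T$. This contradicts the undecidability of $S$, so $T$ is undecidable.

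The one point that deserves care — and the main obstacle, such as it is — is the exact meaning of ``maps the models of $T$ onto the models of $S$''. It must be read as the conjunction of two conditions: (a) $I^*(\mathfrak{A}) \models S$ for every $\mathfrak{A} \models T$, which drives the left-to-right direction, and (b) every $\mathfrak{B} \models S$ is \emph{isomorphic} to $I^*(\mathfrak{A})$ for some $\mathfrak{A} \models T$, which drives the right-to-left direction. If $I$ is allowed to use parameters, then $I^\sharp$ and the Fundamental Property must be phrased with an existential quantifier over the parameter tuple; since we apply everything only to sentences, the reduction identity and its proof are unaffected. All remaining points are bookkeeping already discharged where $I^*$ and $I^\sharp$ are introduced.
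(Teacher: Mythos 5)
Your proof is correct and follows essentially the same route the paper takes (in the contrapositive form of Lemma \ref{le:onto}(iv)): use the computable translation $I^\sharp$ together with the Fundamental Property (Proposition \ref{pr:fundamental}) to get $S \models \phi$ iff $T \models I^\sharp(\phi)$, invoking the onto condition exactly where the paper does, namely to pull a model of $S$ back to a model of $T$. Your write-up is, if anything, a bit more careful about the quantification over models than the paper's displayed chain of equivalences, but the underlying argument is the same.
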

The onto-condition needed for our purpose is rarely stated in textbooks. However, 
it is explicitely stated in \cite{bk:Hodges93}.

Propositions \ref{th:Artin} and  \ref{th:JRobinson} are not enough to prove that 
first order theory of affine geometry is undecidable.
We have to verify all the conditions of Proposition \ref{th:undecidable}.

In particular, we have to show:

\begin{enumerate}[(A)]
\item
There is a first order translation scheme $RF_{field}$
such that for every Pappian plane $\Pi$ the structure $RF_{field}^*(\Pi)$ is a field.
\item
There is a first order translation scheme $PP_{\in}$
such that for every field $\cF$ the structure $PP_{\in}^*(\Pi)$ is an
Pappian plane. 
\item
For every field $\cF$ we have 
$$
RF_{field}^*(PP_{\in}^*(\cF)) \simeq  \cF.
$$
\end{enumerate}
All this is shown in detail in Section \ref{se:ptr}.
While the existence of $PP_{\in}$ is rather straightforward, the existence of $RF_{field}$
with the necessary properties (B) and (C)
requires the first order definability of the coordinatization of affine planes.
If $\Pi$ is a Hilbert plane or a Euclidean plane, coordinatization can be achieved through
segment arithmetic, which can be achieved via a first order translation scheme $FF_{field}$,
which is somehow simpler that $RR_{field}$.

Only after having established (A) and (C) we can conclude:
\begin{theorem}
\label{th:affine-undec}
\begin{enumerate}[(i)]
\item
The first order theory of Pappian planes undecidable.
\item
The first order theory of affine geometry is undecidable.
\end{enumerate}
(ii) follows from (i) because Pappian planes are obtained from affine planes by adding a finite
number of axioms in the language of incidence geometry.
\end{theorem}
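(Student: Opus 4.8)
The plan is to apply Proposition~\ref{th:undecidable} (Rabin's transfer theorem) with $S$ the theory of fields of characteristic $0$, which is undecidable by Proposition~\ref{th:JRobinson}(i), and $T$ the first order theory of Pappian planes. By Proposition~\ref{th:Artin}, the class of models of $T$ with no finite lines is, up to isomorphism, exactly the image under the coordinatization construction of the class of fields of characteristic $0$, and conversely. What remains is to make both directions of this correspondence \emph{first order definable} in the sense of translation schemes, and to check the onto-condition carefully, since a Pappian plane may a priori have finite lines and the bare Proposition~\ref{th:Artin} only speaks about planes without finite lines.

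First I would exhibit the translation scheme $PP_{\in}$: given a field $\cF$ presented as a structure in the vocabulary of fields, interpret points as pairs $(a,b)$ (a $2$-dimensional interpretation with domain formula ``true''), lines as pairs $(m,c)$ together with the ``vertical'' lines, and write down the incidence relation $\in$ by the usual first order formula expressing $b = ma + c$ or $a = c$. One then verifies in first order logic, using only the field axioms, that $PP_{\in}^*(\cF)$ satisfies incidence axioms I-1, I-2, I-3, the Parallel Axiom, and Pappus; since $\mathrm{char}\,\cF = 0$ the lines are infinite, so $PP_{\in}^*(\cF)$ is a Pappian plane with no finite lines. This is the straightforward direction. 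Next I would exhibit $RF_{field}$: inside a Pappian plane one fixes (definably, after naming a few parameters, or by quotienting out a parameter choice) an ``axis'' line with two marked points $O$ and $E$, defines addition and multiplication of points on this line by the classical geometric constructions via parallels, and checks — this is the substantive model-theoretic content, carried out in Section~\ref{se:ptr} — that these constructions are expressible by first order formulas in the incidence language and that the resulting structure is a field of characteristic $0$. The isomorphism $RF_{field}^*(PP_{\in}^*(\cF)) \simeq \cF$ of item~(C) is then a routine but necessary verification that the geometric sum and product on the coordinate axis of the Cartesian plane $\Pi(\cF)$ reproduce the field operations of $\cF$.

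The main obstacle is the onto-condition in Proposition~\ref{th:undecidable}: I need that $RF_{field}^*$ maps the models of $T$ \emph{onto} all models of $S$, i.e.\ every field of characteristic $0$ arises as the coordinate field of some Pappian plane. This is delivered by composing the two schemes: given any field $\cF \models S$, the plane $PP_{\in}^*(\cF) \models T$ has $RF_{field}^*(PP_{\in}^*(\cF)) \simeq \cF$ by item~(C), so surjectivity onto the models of $S$ holds. One must also make sure that $RF_{field}^*$ always lands \emph{inside} the models of $S$ — but since every Pappian plane in which all lines are infinite has coordinate field of characteristic $0$ by Proposition~\ref{th:Artin}(ii), and one can either restrict attention to the (finitely axiomatizable, in the $\FOL$ language) subclass of Pappian planes with infinite lines or argue that finite Pappian planes still coordinatize over finite fields, whose theory together with the characteristic-$0$ fields remains undecidable, the hypotheses of Proposition~\ref{th:undecidable} are met. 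Hence $T$, the first order theory of Pappian planes, is undecidable, proving~(i).

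For part~(ii), I would simply note, as the statement already indicates, that the axioms defining Pappian planes among affine planes — the Parallel Axiom is already built into ``affine plane'', so only (Pappus) is added — form a finite set $\Sigma$ of first order sentences in the incidence vocabulary. If the first order theory of affine geometry were decidable, then for any sentence $\phi$ we could decide ``$\Sigma \to \phi$'', which is a consequence of affine geometry iff $\phi$ is a consequence of the theory of Pappian planes; this would decide the latter, contradicting~(i). Therefore the first order theory of affine geometry is undecidable as well.
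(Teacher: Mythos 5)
Your proposal is correct and takes essentially the same route as the paper: it establishes the correspondence (A)--(C) via the translation schemes $PP_{\in}$ and $RF_{field}$ (planar ternary ring coordinatization, Section~\ref{se:ptr}), secures the onto-condition of Proposition~\ref{th:undecidable} by composing the two schemes so that every field of characteristic $0$ is hit, transfers J.~Robinson's undecidability result, and obtains (ii) from (i) by the deduction-theorem argument, exactly as the paper does. The only slip is peripheral: the class of Pappian planes with no finite lines is \emph{not} finitely axiomatizable in $\FOL(\tau_{\in})$ (the paper's (InfLines) is an infinite scheme), but since $T_{pappus}$ as defined already includes (InfLines) this aside is unnecessary and your main argument is unaffected.
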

The ingredients for proving
Theorem \ref{th:affine-undec} were all implicitly available when Proposition \ref{th:JRobinson}
was published. I would also assume that Theorem \ref{th:affine-undec} was known in Berkeley,
but no detailed proof was written down.
A. Tarski presented the result for projective planes at the 11th Meeting of the Association of Symbolic
Logic already in 1949, \cite{tarski-asl-1949}.
An incomplete sketch of a proof 
Theorem \ref{th:affine-undec}
was published in 1961 by 
W. Rautenberg \cite{rautenberg1961unentscheidbarkeit}.
His more detailed proof of the projective case in \cite{rautenberg1962unentscheidbarkeit}
uses Proposition \ref{th:JRobinson} and Lemma \ref{le:interpretability}, but fails to note that something like
Theorem \ref{th:szmielew} is needed to complete the argument. We discuss this in detail 
at the end of Section \ref{se:undecidability}.
It also seems that W. Szmielew planned to include a proof of Theorem \ref{th:affine-undec} 
in her unfinished and posthumously published \cite{szmielew1983affine}.
The only complete proof of Theorem \ref{th:affine-undec} I could find in the literature
appears in \cite{balbiani2007logical}. However, the arguments contain some fixable errors\footnote{
In \cite[Section 7]{balbiani2007logical} the undecidability of affine and projective spaces
is stated (Corollary 7.38). It also discusses Proposition \ref{th:Artin} in  \cite[Section 6]{balbiani2007logical},
but fails to mention that Proposition \ref{th:Artin} is needed
to prove Theorem \ref{th:affine-undec} (Theorem 1.37 in \cite[Section 7]{balbiani2007logical}).
It also attributes Proposition \ref{th:JRobinson} erroneously  to A. Tarski.
}.
One of the purposes of this paper is to give a conceptually clear account of what is needed to prove
Theorem \ref{th:affine-undec}.

To repeat this argument for other axiomatizations of extensions of affine geometry we
need the following theorem of M. Ziegler: 

\begin{theorem}[M. Ziegler, 1982 \cite{ar:ziegler,BeesonZiegler}]
\label{th:Ziegler}
\begin{enumerate}[(i)]
\item
Let $T$ be a finite subset of the theory of the reals $\langle \R, +, \times \rangle$
and let $T^* = T \cup \{ \mathbf{n} \neq 0, n \in \N \}$, where $\mathbf{n}$
is shorthand for $\underbrace{1+ \ldots + 1}_{n}$.
Both $T$ and $T^*$ are undecidable.
\item
The same holds if $T$ is a finite subset of the theory of the complex numbers $\langle \C, +, \times \rangle$.
\end{enumerate}
\end{theorem}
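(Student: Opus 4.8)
The plan is to show $T$ is undecidable by reducing a fixed non-recursive problem to its consequence set $\mathrm{Cons}(T)=\{\phi:T\models\phi\}$, exploiting the fact that finiteness of $T$ leaves a great deal of freedom in the choice of models. By the Matiyasevich--Robinson--Davis--Putnam theorem, fix a polynomial $P(x,y_1,\dots,y_k)\in\Z[x,\bar y]$ such that $A=\{n\in\N:\exists\bar y\in\Z^k\ P(n,\bar y)=0\}$ is recursively enumerable but not recursive. Since $\mathrm{Cons}(T)$ is recursively enumerable by the Completeness Theorem, it suffices to produce a recursive map $n\mapsto\sigma_n$ into sentences of the language of rings with $n\notin A\iff T\models\neg\sigma_n$: then $n\mapsto\neg\sigma_n$ is a many--one reduction of the non-recursive set $\N\setminus A$ to $\mathrm{Cons}(T)$, so the latter is not recursive. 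This equivalence splits into a \emph{model-existence} half ($n\in A\Rightarrow T\cup\{\sigma_n\}$ has a model) and a \emph{soundness} half (every field satisfying $T\wedge\sigma_n$ has $n\in A$).

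The sentence $\sigma_n$ will assert that there are elements $\bar z$ of a fixed parameter-free definable set $\iota(\cdot)$ with $P(\mathbf n,\bar z)=0$, where $\iota$ is designed so that in every model $M$ of $T$ the set $\iota(M)$ carries a copy of $(\Z,+)$. Here the finiteness of $T$ is both the enemy and the friend. Because $T$ is finite and contained in $\mathrm{Th}(\C)$ (resp. $\mathrm{Th}(\R)$), it is implied by the field axioms together with finitely many ``closure'' axioms --- ``every polynomial of degree $\le N$ has a root'' and ``$\mathbf p\ne0$ for primes $p\le N$'' in the complex case; formal reality, ``positive elements are squares'', and ``odd-degree polynomials of degree $\le N$ have roots'' in the real case. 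These axioms make the models of $T$ \emph{almost} algebraically (resp. real) closed, which destroys the classical ways of defining $\Z$ inside a field (nontrivial valuations, anisotropic quadratic forms, J.~Robinson's scheme, \dots). The device --- this is essentially Ziegler's route --- is to use instead the Mordell--Weil group of an auxiliary elliptic curve $E/\Q$ of positive rank with a rational point $Q$ of infinite order: the cyclic group generated by $Q$ is isomorphic to $\Z$, and one arranges that it is cut out by a single first-order formula $\iota$ that stays ``honest'' under exactly the field extensions that the weak finite theory $T$ can force.

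For model-existence, given $n\in A$ with witness $\bar y\in\Z^k$, one builds a field $M_n\models T$ with $M_n\models\sigma_n$ by a tower argument: starting from a number field $k_0$ carrying the points of $E$ that correspond to $\bar y$, form $k_0\subseteq k_1\subseteq\cdots$, at each step adjoining roots of the finitely many low-degree polynomials whose solvability $T$ demands, but choosing each extension generically so as not to create unintended algebraic relations among the relevant points of $E$; the union $M_n$ then satisfies all of $T$ while $\iota(M_n)$ is still just the cyclic group on $Q$, so $\bar y$ genuinely witnesses $\sigma_n$. In the real case one keeps the whole tower formally real and fixes an ordering, which restricts the admissible extensions and accounts for the extra care there; in the complex case one works in the algebraically closed direction, which gives part (ii). The variant $T^*=T\cup\{\mathbf n\ne0:n\in\N\}$ needs nothing new, since every $M_n$ constructed already has characteristic $0$.

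The main obstacle is the interaction of the two halves through $\iota$: one must choose $E$ and $\iota$ so that the ``integer points'' of $E$ are simultaneously (a)~definable by \emph{one} formula valid in \emph{every} model of the weak finite theory $T$, including pathological fields, and (b)~genuinely a copy of $\Z$, with no extraneous solutions of $P(\mathbf n,\bar z)=0$, in the models one builds. Reconciling (a) and (b) --- ruling out spurious witnesses in exotic models of $T$ while still being able to realise the true witnesses --- is the delicate number-theoretic and model-theoretic core of Ziegler's argument; the reduction machinery around it is soft.
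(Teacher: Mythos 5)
The paper does not prove this theorem at all: it is quoted as Ziegler's result from \cite{ar:ziegler}, so your attempt can only be measured against Ziegler's known argument, and against it your proposal has a fatal structural gap. Your reduction needs both directions of $n\notin A \iff T\models\neg\sigma_n$, and the soundness half quantifies over \emph{all} models of $T$. But $\langle \R,+,\times\rangle$ (resp.\ $\langle\C,+,\times\rangle$) is itself a model of $T$, since $T\subseteq \mathrm{Th}(\R)$. In $\R$ every ring-definable subset of the line is a finite union of points and intervals, and in $\C$ every definable subset of the line is finite or cofinite; so no formula $\iota$ can carve out a copy of $\Z$ (or of the cyclic group generated by your point $Q$ — in $E(\R)$ resp.\ $E(\C)$ that subgroup is dense, hence undefinable). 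Since the existence half forces the genuine integer witnesses to lie in $\iota$, the set $\iota(\R)$ (resp.\ $\iota(\C)$) is much larger than $\Z$, and then for typical $n\notin A$ the equation $P(\mathbf n,\bar z)=0$ still has solutions with coordinates in $\iota(\C)$ (by the Nullstellensatz) or in $\iota(\R)$, so $T\not\models\neg\sigma_n$ and the reduction collapses. This is not a repairable detail: the decidability of $\mathrm{Th}(\R)$ and $\mathrm{Th}(\C)$ is precisely the statement that no such uniform Diophantine-style reduction can be sound over the full class of models of $T$. (Your claim that finite $T$ follows from finitely many closure/characteristic axioms is fine — it is just compactness applied to the recursive axiomatizations of $RCF$ and $ACF_0$ — but it does not help with this obstruction.)

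The correct architecture, and the one Ziegler actually uses, avoids quantifying over all models: it suffices to produce \emph{one} model $M\models T$ (of characteristic $0$, which also handles $T^*$) in which $\N$ with its arithmetic is first-order definable; then $\mathrm{Th}(M)$ is a consistent extension of $T$ interpreting Robinson's essentially undecidable, finitely axiomatizable arithmetic, i.e.\ that arithmetic is \emph{weakly} interpretable in $T$, and Lemma~\ref{le:beklemishev} yields that $T$ and all its subtheories are undecidable. The standard models $\R$ and $\C$ then pose no problem because nothing is required of them. The hard work is the construction of such an $M$ satisfying an \emph{arbitrary} finite $T\subseteq\mathrm{Th}(\R)$ while still carrying a definable valuation-like structure from which $\Z$ is recovered; Ziegler does this with (iterated) power-series/valued fields, not with Mordell--Weil groups, and in your sketch the single-formula definability of $\langle Q\rangle$ even in your tower $M_n$ is asserted rather than proved — it is exactly the point where the proposal would have to do the real work.
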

We paraphrase this theorem, following \cite{shlapentokh2014definability},
by saying that the theory of real closed (algebraically closed) fields
of characteristic $0$ is {\em finitely hereditarily undecidable}.

Theorem \ref{th:Ziegler} was conjectured\footnote{
In \cite{hauschild1974rekursive} a proof was announced, which later was found containing in irreparable
mistake, cf. \cite{ar:ziegler}.  }
by A. Tarski, but only proved in 1982 by M. Ziegler.
Ziegler's Theorem remained virtually unnoticed, having been published in German in the Festschrift in honor of
Ernst Specker's 60th birthday, published as a special issue of {\em L'Enseignement Math\'ematique}.
In \cite{bk:Schwabhaeuser1983} the significance of the results of \cite{ar:ziegler} is recognized. 
However, the book is written in German and is usually quoted for its presentation of Tarskian geometry.
The discussion of Theorem \ref{th:Ziegler} is buried there in the second part of the book
dealing with metamathematical questions of geometry.
This part of the book is difficult to absorb, both because of its pedantic style
and its length. 
In short, the only reference to Theorem \ref{th:Ziegler} within the the framework of ATP and SC
is \cite{beeson2012proof}. 
A very short and casual
mention of Theorem \ref{th:Ziegler} 
can also be found in \cite{balbiani2007logical}.

The present paper gives a survey on the status of decidability of various axiomatizations
of Euclidean Geometry, including Wu's metric geometry and  the  Origami geometry which are all
undecidable, see Theorems 
\ref{th:euclid-undec},
\ref{th:wu-undecidable} and
\ref{th:undec-origami}.
None  of these theorems are technically new. They all could have been proven with
the tools used in the proof of Theorem \ref{th:affine-undec} together with Ziegler's Theorem \ref{th:Ziegler}.
However, Theorem \ref{th:euclid-undec} is stated and proved only in \cite{bk:Schwabhaeuser1983},
and Theorems
\ref{th:wu-undecidable} and
\ref{th:undec-origami} could not have been stated before the corresponding geometries
were axiomatized. For Wu's orthogonal geometry this would be 1984 respectively 1994 , when
the first translation from Chinese appeared \cite{bk:Wu1994}, or 1986 \cite{wen1986basic}.
For Origami geometry this would be at the earliest in 1989, \cite{justin1989resolution},
but rather in 2000 with \cite{alperin2000mathematical}.

The purpose of this paper is to discuss undecidability results in geometry
addressing practitioners in Automated Theorem Proving, Articial Intelligence, 
and Symbolic Computation.
Although many variants of these results were stated and understood already in the early 1950s,
I could not find references with detailed proofs which could be easily understood and reconstructed
by graduate students of Logic in Computer Science. 
On the other hand the techniques described in this paper are well known in the 
mathematical logic community. Theorem \ref{th:affine-undec}
and some of its variations are given as an exrecise in \cite[Exercise 10 of Section 5.4]{bk:Hodges93}.
Although the Theorems \ref{th:wu-undecidable} and \ref{th:undec-origami} are strictly speaking new, their proofs  use the same techniques,
together with Ziegler's Theorem \ref{th:Ziegler} from 1982.

We hope that our 
presentation of this material 
is sufficiently  concise and transparent 
in showing the limitations of automatizing theorem proving in affine geometry.
We restrict our discussion here to theories of affine Euclidean geometries.
However, the methods can be extended to projective and hyperbolic geometries.

\subsection*{Outline of the paper}

In Section \ref{se:fdecidability} we summarize what is known about the
(un-)decidability of theories of fields. 
Theorems \ref{th:Ziegler} and \ref{th:macintyre} show
that the decidability of the theory of real closed
fields and its elimination of quantifiers are very specific properties of this theory.

In Section \ref{se:axioms} describe the geometrical theories which are the center
of our discussion: Affine incidence geometry, Hilbert-style Euclidean geometry, Wu's orthogonal
geometry and Origami geometry.

In Section \ref{se:undecidability} we spell out the subtleties needed to derive undecidability of geometrical
theories from the undecidability of corresponding theories of fields.
Although the general idea is very intuitive, the argument given frequently in the literature
tends to overlook that this reduction depends on deep theorems specific to geometry.
Besides the one-one correspondence between geometrical theories and theories of fields one also needs
the first order definability of the coordinatization theorem for affine incidence geometry.
In Section \ref{se:coord} we do discuss the role coordinatizations play in the undecidability proofs.
and show that coordinatization is first order definable.
In Section \ref{se:undecidable} we finally give the complete proofs of undecidability
of our geometrical theories, and in Section \ref{se:decidable} we 
show that the consequences in the universal fragment $\fU$
of these geometries are still decidable.
In Section \ref{se:conclu} we summarize what we have achieved and propose some open problems.

\subsection*{An after-thought concerning computer-checkable proofs}
As some authors misquote Proposition \ref{th:undecidable} by omitting the condition that
$I^*$ has to  map the models of $T$ {\bf onto} the models of $S$, it would be interesting to
know whether a proof-checking system would have helped in discoving the exact nature of the gap
in the published incomplete proofs of Theorem \ref{th:affine-undec}.

\section{Decidable and undecidable theories of fields}
\label{se:fdecidability}
\subsection{Background on fields}
Let $\tau_{field}$ be the purely relational vocabulary consisting of a 
ternary relation $Add(x,y,z)$ for addition with $Add(x,y,z)$ holds if $x+y=z$,
a ternary relation $Mult(x,y,z)$ for multiplication with $Mult(x,y,z)$ holds if $x\cdot y=z$,
and two constants for the neutral elements $0$ and $1$. 
A field $\cF = \langle A, Add_A, Mult_A, 0_A, 1_A \rangle $ is a $\tau_{field}$-structure
satisfying the usual field axioms, which we write for convenience in the usual notation with
$+$ and $\cdot$.
Let $\tau_{ofield}$ be the purely relational vocabulary $\tau_{field} \cup \{\leq\}$
where $\leq$ is a binary relation symbol.
An ordered field $\cF = \langle A, Add_A, Mult_A, 0_A, 1_A, \leq_A \rangle $ is a $\tau_{ofield}$-structure
satisfying the usual axioms of ordered fields.

We sometimes also look at (ordered) fields as structures over a vocabulary containing
function symbols.
Let
$\tau_{f-field}$ 
be the vocabularies
with
binary functions for addition and multiplication,
unary functions for negatives $-x$ and inverses $\frac{1}{x}$,
and
$\tau_{f-ofields} = \tau_{f-field} \cup \{\leq\}$.

The difference between the relational and functional version
lies in the notion of substructure. 
In the functional version substructures of (ordered fields are (ordered) fields.
Formulas in the functional version can be translated into formulas in the relational
version but this requires the use of existential quantifiers.

Let $B(x_1, \ldots, x_m, \bar{y})$
be a quantifier free formula with free variables $x_1, \ldots, x_m, \bar{y}$.
A formula $\phi$ 
with free variables $\bar{y}$
is {\em universal} if it is of the form
$$
\phi = \forall x_1, \ldots, \forall x_m B(x_1, \ldots, x_m, \bar{y}).
$$
A formula $\psi$ is
{\em existential} if it is of the form
$$
\psi = \exists x_1, \ldots, \exists x_m B(x_1, \ldots, x_m, \bar{y})
$$
Note that when translating a quantifier-free formula in $\FOL_{f-field}$
into an equivalent formula in $\FOL_{field}$, the result is not quantifier-free but in general 
an existential formula. Translating a universal formula results in an $\forall\exists$-formula.

Let $\cF$ be a field.
\begin{enumerate}[(i)]
\item
For $p$ a prime,
$\cF$ is of {\em characteristic $p$} if $\underbrace{1+ \ldots +1}_{p} = 0$.
\item
$\cF$ is of {\em characteristic $0$} if for all $n \in \N$ we have
that $\underbrace{1+ \ldots +1}_{n} \neq 0$.
\item
$\cF$ is {\em Pythagorean} if every sum of two squares is a square,
$$
\forall x \forall y \forall z (x = y^2 + z^2 \rightarrow \exists u (u^2 =x)).
$$
\item
$\cF$ is a {\em Vieta field} if every polynomial with coefficients in $\cF$ of degree at most $3$ has
a root in $\cO$.
\item
$\cF$ is {\em formally real} if $0$  cannot be written as a sum of nonzero squares,
i.e., for all $n \in \N$ we have
$$
\forall x_1, \ldots , x_n 
(
\sum_{i=1}^n x_i^2 = 0
\rightarrow
\bigwedge_{i=1}^n (x_i^2 = 0) 
)
$$
\item
$\cF$ is {\em algebraically closed} if
every non-constant polynomial 
with coefficients in $\cF$ has a root in $\cF$. 
We denote by $ACF_0$ the first order sentences of fields describing an algebraically closed field
of characteristic $0$.
\end{enumerate}

An ordered field $\cO$ is a field $\cF$ with an additional binary relation $\leq$
which is compatible with the arithmetic relations of $\cF$.
An ordered field is always of characteristic $0$.

Let $\cO$ be an ordered field.
\begin{enumerate}[(i)]
\item
$\cO$ is {\em Euclidean} if every positive element has a square root,
$$
\forall x ( x \geq 0 \rightarrow \exists y (y^2 =x)).
$$
\item
An ordered field is Pythagorean (Vieta, formally real) if it is an ordered field and
as a field is Pythagorean (Vieta, formally real).
\item
An ordered field
$\cO$ is {\em real closed}  if $\cO$ is formally real,
every positive element in $\cO$ has a square root, 
$$
\forall x \exists y (y^2=x)
$$
and every polynomial of odd degree
with coefficients in $\cO$ has a root in $\cO$. 
$$
\forall x_0, \ldots x_{2n+1} ( x_{2n+1} \neq 0) \rightarrow
\exists y
\sum_{i=0}^{2n+1} x_i y^i =0
)
$$
We denote by $RCF$ the first order sentences of ordered fields describing a real closed field.
\end{enumerate}

\subsection{Undecidable theories of fields}
We now are ready to  apply Ziegler's Theorem (Theorem \ref{th:Ziegler}) in order to show the following:

\begin{theorem}
Let $T$ one of the first order theories over the vocabulary of (ordered) fields
listed below.
Then the set of first order consequences of $T$ is undecidable 
(not computable but computably enumerable).
\begin{enumerate}[(i)]
\item
The theory of  fields and of ordered fields.
\item
The theory of Pythagorean fields and ordered Pythagorean fields.
\item
The theory of Vieta fields and ordered Vieta fields.
\item
The theory of Pythagorean fields and ordered Pythagorean fields
of characteristic $0$.
\item
The theory of ordered Euclidean fields.
\end{enumerate}
\end{theorem}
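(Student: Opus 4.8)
The plan is to derive every item directly from Ziegler's Theorem (Theorem~\ref{th:Ziegler}), using only two features shared by all the theories $T$ in the list. First, each such $T$ is finitely axiomatizable, or finitely axiomatizable together with the recursive scheme $\Sigma_0=\{\mathbf{n}\neq 0 : n\in\N\}$ expressing characteristic $0$. Second, each $T$ has a model that is a real closed field (for the theories over the vocabulary of ordered fields) or an algebraically closed field of characteristic $0$, such as $\C$ (for the remaining, unordered cases); equivalently, the finite axiom set $T_0$ of $T$ is a finite subset of $\mathrm{Th}(\langle\R,+,\times,\leq\rangle)=RCF$, respectively of $\mathrm{Th}(\langle\C,+,\times\rangle)=ACF_0$. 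Granting this, the paraphrase of Ziegler's Theorem recorded right after Theorem~\ref{th:Ziegler} --- that $RCF$ and $ACF_0$ are finitely hereditarily undecidable --- shows $T_0$, and in the characteristic-$0$ variants the set $T_0\cup\Sigma_0=T_0^{*}$, to be undecidable; hence $T$ is undecidable. That the consequences of $T$ are computably enumerable follows from the Completeness Theorem, since $T$ is presented by a recursive set of axioms; so ``undecidable'' means here ``computably enumerable but not computable,'' as asserted.

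It remains to check that each item fits this template. For (i), the field axioms are a finite subset of $ACF_0$ (and of $\mathrm{Th}(\langle\R,+,\times\rangle)$), hence the theory of fields is undecidable; adjoining $\leq$ and the finitely many order axioms, true in $\langle\R,+,\times,\leq\rangle$, gives the ordered case. (Item (i) also follows from Proposition~\ref{th:JRobinson}, but the argument above fixes the pattern used below.) For (ii) and (v), one enlarges these finite sets by the single Pythagorean axiom $\forall x\forall y\forall z(x=y^2+z^2\rightarrow\exists u\,u^2=x)$, respectively the single Euclidean axiom $\forall x(x\geq 0\rightarrow\exists y\,y^2=x)$, both true in $\R$, so that the enlarged theory remains a finite subset of $RCF$. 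For (iv), one adjoins $\Sigma_0$: in the unordered case this is exactly the passage from $T_0$ to $T_0^{*}$ in Theorem~\ref{th:Ziegler}(i), while in the ordered case characteristic $0$ is automatic and (iv) collapses to (ii). For (iii), the Vieta axioms assert solvability of cubic equations (finitely many sentences); $\C$ is a model, handling the unordered case, and a real closed field such as $\R$ is a model once the Vieta condition is read for polynomials of genuine degree $3$, handling the ordered case.

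Since Theorem~\ref{th:Ziegler} is assumed, the only point of real care is the vocabulary. Theorem~\ref{th:Ziegler} is stated over the pure field vocabulary $\langle+,\times\rangle$, while the ordered theories in (i)--(v) involve $\leq$; this is already absorbed by the quoted form ``$RCF$ is finitely hereditarily undecidable,'' as $RCF$ is a theory over the vocabulary of ordered fields. If one prefers to reduce strictly to the pure-field statement, one replaces every occurrence of $\leq$ in the finite axiom set of $T$ by $\delta(x,y)\equiv\exists z\,(x+z^2=y)$, which defines $\leq$ in every real closed field; the substituted sentences form a finite subset of $\mathrm{Th}(\langle\R,+,\times\rangle)$, and for the Euclidean theory the equivalence $x\leq y\leftrightarrow\exists z\,z^2=y-x$ is moreover provable from the axioms, so that the two theories share the same $\langle+,\times\rangle$-consequences and undecidability transfers back by the elementary interpretation argument underlying Proposition~\ref{th:undecidable}. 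I expect this vocabulary bookkeeping, rather than any genuine mathematical difficulty, to be the only delicate part; all the substance of the theorem sits inside the assumed Theorem~\ref{th:Ziegler}.
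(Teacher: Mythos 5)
Your proposal is correct and follows essentially the same route as the paper: the paper's proof simply observes that each listed theory has the (ordered) field of real numbers as a model and is either finite or of the form $T \cup \{ \mathbf{n} \neq 0, n \in \N \}$ with $T$ finite, and then invokes Ziegler's Theorem \ref{th:Ziegler}. Your extra vocabulary bookkeeping for $\leq$ (and the caveat that the Vieta condition must be read for genuine degree $3$ so that $\R$ is a model) goes beyond what the paper writes down, but your main line --- leaning on the paraphrase that $RCF$, respectively $ACF_0$, is finitely hereditarily undecidable --- is exactly the paper's argument.
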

\begin{proof}
First we note that each of these theories has the field of (ordered) real numbers as a model.
Furthermore each of them is either finite, or of the form
$$
T^* = T \cup \{ \mathbf{n} \neq 0, n \in \N \}
$$ 
with $T$ finite.
Hence we can apply Theorem \ref{th:Ziegler}.
\hfill $\Box$
\end{proof}

\subsection{Decidable theories of fields}
In order to prove decidability of the theory Tarskian Geometry
A. Tarski (and A. Seidenberg independently) proved the following:

\begin{proposition}[A. Tarski and A. Seidenberg \cite{basu2014algorithms}]
\label{th:seidenberg}
The first order theory  $RCF \subseteq \FOL_{f-ofield}$ 
is recursiveley axiomatized, complete and admits elimination of quantifiers, and therefore is decidable.
\end{proposition}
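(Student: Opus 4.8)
The plan is to obtain all four claims at once from an \emph{effective quantifier-elimination} procedure for $RCF$ over $\tau_{f-ofield}$: recursive axiomatizability is immediate, and decidability then follows from completeness by the standard fact recalled in Section~\ref{se:intro}. First I would check that the axiom list for $RCF$ given above --- the ordered-field axioms, the axiom $\forall x\,(x \geq 0 \rightarrow \exists y\, y^2 = x)$, and the scheme asserting that every polynomial of odd degree has a root --- is a recursive set of sentences, so $RCF$ is recursively axiomatized. The substantive goal is then: for every $\FOL_{f-ofield}$-formula $\phi(\bar y)$ there is a quantifier-free $\phi^*(\bar y)$ with $RCF \models \forall\bar y\,(\phi \leftrightarrow \phi^*)$, and the passage from $\phi$ to $\phi^*$ is effective.

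By a routine induction on the structure of $\phi$ (distributing, renaming, and replacing $\forall$ by $\neg\exists\neg$) it suffices to eliminate a single existential quantifier from a formula $\exists x\, \theta(x,\bar y)$ in which $\theta$ is a conjunction of atomic and negated atomic formulas. Over an ordered field each such conjunct can be rewritten as a \emph{sign condition} $p(x,\bar y)\,\sigma\,0$ with $\sigma\in\{=,>,<\}$ and $p$ a polynomial in $x$ whose coefficients are polynomials in $\bar y$. So the problem reduces to: given finitely many polynomials $p_1,\dots,p_k$ in $x$ over the ring generated by $\bar y$, express ``some $x$ realizes the prescribed sign pattern of $p_1,\dots,p_k$'' by an explicit Boolean combination of sign conditions on polynomials in $\bar y$ alone.

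This is where the specific strength of $RCF$ enters. In a real closed field polynomials satisfy the intermediate value property --- this is exactly what the square-root axiom together with the odd-degree axiom deliver --- so the realizable sign patterns of the $p_i$ are determined by the relative order of their real roots, and the number and order of those roots is in turn governed by Sturm sequences, equivalently by the signs of the principal subresultant coefficients, the discriminants, and the pairwise resultants of the $p_i$. These auxiliary quantities are polynomials in $\bar y$ produced by a terminating computation (iterated pseudo-division), and the theory of Sturm chains shows that $\exists x\,\theta$ is equivalent, modulo $RCF$, to an explicit Boolean combination of their sign conditions. I expect the case analysis to be the main obstacle: one must split on whether a leading coefficient vanishes for a given $\bar y$ (hence on the actual degree of each $p_i$), handle coincident roots and roots common to several of the $p_i$, and treat the ``endpoints'' $\pm\infty$ of the realizing interval --- and all of this must be done \emph{uniformly}, so that the resulting condition on $\bar y$ is genuinely quantifier-free. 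Everything outside this bookkeeping is routine induction.

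Finally, completeness is a corollary: a sentence has no free variables, so $\phi^*$ is a quantifier-free sentence over $\{0,1\}$, i.e.\ a Boolean combination of (in)equalities among the numerals $\mathbf{n}$, each of which $RCF$ evidently proves or refutes; hence $RCF \models \phi$ or $RCF \models \neg\phi$. Being recursively axiomatized and complete, $RCF$ is decidable. Since the elimination procedure is effective, one in fact reads off directly a decision algorithm for $RCF$, which is Tarski's original formulation.
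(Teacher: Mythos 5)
Your proposal is correct and follows essentially the route the paper relies on: the paper gives no proof of Proposition~\ref{th:seidenberg} but cites the classical Tarski--Seidenberg effective quantifier elimination (as in \cite{basu2014algorithms}), which is exactly your reduction to sign conditions handled via Sturm/subresultant data, with completeness and decidability then read off from the elimination. Your closing step of deciding the resulting quantifier-free sentences also matches the Remark following Propositions~\ref{th:seidenberg} and~\ref{th:tarski}, namely that one must additionally check decidability of (in)equalities and $\leq$-comparisons among constant $\tau_{f-ofield}$-terms (which include negatives and inverses, not just numerals, though these still evaluate to rationals).
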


A first order theory $T \subseteq \FOL(\tau)$ over some vocabulary $\tau$ is complete
if $T$ is satisfiable, and
for every formula  $\phi \in \FOL(\tau)$ without free variables we have either $T \models \phi$
or $T \models \neg \phi$.

\begin{proposition}[A. Tarski \cite{bk:Tarski1951}]
\label{th:tarski}
The first order theory $ACF_0 \subseteq \FOL_{f-field}$ 
is recursively axiomatized, complete and
admits elimination of quantifiers, and therefore is decidable.
\end{proposition}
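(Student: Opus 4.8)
\emph{Plan.} The idea is to run, for $ACF_0$, the same three-part argument as in Proposition~\ref{th:seidenberg}: recursive axiomatizability is immediate, quantifier elimination is the substantive step, and completeness and decidability then follow formally.

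\emph{Recursive axiomatizability.} I would axiomatize $ACF_0$ by the field axioms, the scheme $\{\mathbf{n}\neq 0 : n\geq 1\}$, and, for each $n\geq 1$, the sentence $\forall a_0\cdots\forall a_{n-1}\,\exists y\,(y^n+a_{n-1}y^{n-1}+\cdots+a_0=0)$. This is plainly a computable set of sentences, and it is consistent since $\langle\C,+,\times\rangle$ is a model.

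\emph{Quantifier elimination.} By the standard reduction it suffices to eliminate a single existential quantifier from a conjunction of literals, i.e.\ to produce, uniformly in the parameters $\bar y$, a quantifier-free formula equivalent over $ACF_0$ to
$$
\exists x\ \Bigl(\bigwedge_{i=1}^{n} p_i(x,\bar y)=0\ \wedge\ \bigwedge_{j=1}^{m} q_j(x,\bar y)\neq 0\Bigr),
$$
where the $p_i,q_j$ are polynomials in $x$ whose coefficients are terms in $\bar y$. First one splits into finitely many cases according to which leading coefficients of the $p_i$ and $q_j$ vanish; within each case the surviving polynomials have determined degrees in $x$, and the Euclidean algorithm --- each step a polynomial division whose remainder has coefficients polynomial in those of divisor and dividend once the divisor's leading coefficient is inverted --- collapses $\bigwedge_i p_i=0$ to a single equation $p(x,\bar y)=0$ with the same solution set. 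Over an algebraically closed field the residual statement $\exists x\,(p=0\wedge\bigwedge_j q_j\neq 0)$ holds exactly when $p\neq 0$ and the squarefree part $p/\gcd(p,p')$ fails to divide $\prod_j q_j$ (and, in the degenerate case $n=0$, exactly when each $q_j$ is not the zero polynomial, an algebraically closed field being infinite); since these gcd's and divisibilities turn into degree conditions on iterated remainders, the result is quantifier-free in the coefficients. The real work --- and the step I expect to be the main obstacle --- is organizing this branching on vanishing leading coefficients into one genuinely quantifier-free formula uniform in $\bar y$; this is exactly where the axiom ``every nonconstant polynomial has a root'' is consumed, and exactly what is unavailable over $\langle\Q,+,\times\rangle$, whose theory is undecidable by Proposition~\ref{th:JRobinson}. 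A less explicit alternative I could use instead is to first show $ACF_0$ is model complete (a witness in an extension field is either algebraic over the base, hence already in it, or the disjunct it satisfies imposes only finitely many non-vanishing conditions, which the infinite base field realizes) and then deduce quantifier elimination from model completeness together with amalgamation of two models over a common substructure --- which holds because a domain of characteristic $0$ embeds into the algebraic closure of its fraction field and any two field extensions of a common subfield embed into a third.

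\emph{Completeness and decidability.} With quantifier elimination in hand, every sentence of $\FOL_{f-field}$ is $ACF_0$-provably equivalent to a quantifier-free sentence, i.e.\ to a Boolean combination of equations between variable-free terms; such terms evaluate in the prime field $\Q$ (each $\mathbf n\neq 0$, so the inverses occurring in them are defined), each of these equations is a decidable identity of rational arithmetic, and the characteristic-$0$ axioms prove or refute it. Hence $ACF_0$ decides every sentence, i.e.\ it is complete; being also recursively axiomatized, it is decidable by the observation recorded above. Completeness can alternatively be read off from the {\L}o\'s--Vaught test, since $ACF_0$ has no finite models and is categorical in every uncountable cardinality --- algebraically closed fields of characteristic $0$ being classified up to isomorphism by transcendence degree --- though that route does not by itself deliver quantifier elimination.
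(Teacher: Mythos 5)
The paper offers no proof of this proposition at all—it is quoted as Tarski's classical result—and your main argument (recursive axiomatization, quantifier elimination for one existential quantifier over a conjunction of literals via case-splitting on leading coefficients and gcd/divisibility conditions, then completeness and decidability by evaluating closed terms) is exactly the standard route behind the cited result, so the approach matches and the sketch is essentially correct. One small inaccuracy: your claim that the inverses occurring in variable-free terms are defined "since each $\mathbf{n}\neq 0$" overlooks closed subterms like $1+(-1)$ whose inverse is taken, so deciding identities between constant terms of $\tau_{f-field}$ needs the extra care that the paper itself flags in the Remark following Propositions \ref{th:seidenberg} and \ref{th:tarski}.
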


\begin{remark}
To prove decidability one has to prove additionally
in both Propositions \ref{th:seidenberg} and \ref{th:tarski}
that equality and inequality  (and comparison by $\leq$) of constant terms of $\tau_{f-field}$
($\tau_{f-ofield}$) is decidable.
We also note that quantifier elimination is not possible if the theories are expressed
in $\FOL_{ofield}$ respectively $\FOL_{field}$.
\end{remark}

However, even in $\FOL_{f-ofield}$ respectively $\FOL_{f-field}$ the method  
of quantifier elimination cannot be used for other theories 
compatible with the theories $RCF$ or $ACF_0$.

\begin{theorem}[\cite{macintyre1983elimination}]
\label{th:macintyre}
Assume 
$T \subseteq \FOL_{f-field}$ 
($T \subseteq \FOL_{f-ofield}$)
is a theory of (ordered fields) which has the complex (real) numbers as a model,
and $T$ admits elimination of quantifiers, then $T$ is equivalent to 
$ACF_0$ ($RCF$).
\end{theorem}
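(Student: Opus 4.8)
The plan is to strip the statement to its real content---every model of $T$ is algebraically closed in the field case, real closed in the ordered case---and then to read off the identification with $ACF_0$, respectively $RCF$, essentially for free. Once one knows that every model of $T$ satisfies the (recursive) axioms of $ACF_0$ ($RCF$), one has $T \vdash ACF_0$ ($T \vdash RCF$); conversely, $ACF_0$ and $RCF$ are complete (Propositions \ref{th:tarski} and \ref{th:seidenberg}), so every model of $ACF_0$ ($RCF$) is elementarily equivalent to $\C$ ($\R$) and hence, since $\C \models T$ ($\R \models T$), is a model of $T$, giving $ACF_0 \vdash T$ ($RCF \vdash T$). In the ordered case every model has characteristic $0$ automatically; in the field case that is the only extra bookkeeping, and it is what makes $ACF_0$ rather than plain $ACF$ the target. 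So I would spend the whole proof on: \emph{if $T$ admits quantifier elimination and $\C \models T$ (respectively $\R \models T$), then every model $K$ of $T$ is algebraically closed (respectively real closed)}.

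Here is the mechanism for the field case. For each $n \ge 2$ let $\psi_n$ be the first-order sentence ``there is a monic polynomial of degree $n$ which is not a product of two monic polynomials of smaller positive degree'': existentially quantifying the $n$ coefficients and universally quantifying the coefficients of the would-be factors makes this a genuine $\exists\forall$ sentence, and $\psi_n$ holds in a field $K$ exactly when $K$ has a monic irreducible polynomial of degree $n$. Quantifier elimination gives $T \vdash \psi_n \leftrightarrow \chi_n$ for some quantifier-free sentence $\chi_n \in \FOL_{f-field}$. But a quantifier-free sentence of $\FOL_{f-field}$ is a Boolean combination of equalities between variable-free terms, and in any field of characteristic $0$ a variable-free term evaluates to a fixed element of the prime field $\Q$, computed the same way in every such field; so $\chi_n$ has a single truth value across all fields of characteristic $0$. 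Since $\C$ is algebraically closed it satisfies $\neg\psi_n$, hence $\neg\chi_n$ (as $\C \models T$), hence $\neg\chi_n$ holds in every characteristic-$0$ field, hence $\neg\psi_n$ holds in every characteristic-$0$ model $K$ of $T$. Thus $K$ has no irreducible polynomial of degree $\ge 2$, i.e. every nonconstant polynomial over $K$ has a root in $K$: $K$ is algebraically closed.

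The ordered case runs the same way in $\FOL_{f-ofield}$, using instead the sentences that can fail precisely in a non-real-closed ordered field: ``some nonnegative element has no square root'' and, for each $k \ge 1$, ``some monic polynomial of degree $2k+1$ has no root''. A quantifier-free sentence of $\FOL_{f-ofield}$ is now a Boolean combination of $=$ and $\le$ comparisons of variable-free terms, which again take fixed values---with the standard order of $\Q$---in every ordered field; $\R$ refutes all these sentences and $\R \models T$, so every model $K$ of $T$ refutes them as well. Since an ordered field is automatically formally real, the definition of real closedness recalled above reduces to exactly those two properties, so $K$ is real closed, and then $T \equiv RCF$ by completeness of $RCF$ together with $\R \models T$, with no characteristic subtlety.

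The real difficulty here is conceptual rather than computational: the proof is short once one has hit on the right family of sentences $\psi_n$ to feed to quantifier elimination, the point being that quantifier elimination forces the truth value of such combinatorial assertions to be dictated by the prime field, where the analytic model $\C$ ($\R$) decides them. For orientation I note that the \emph{unconditional} version of Macintyre's theorem---an arbitrary infinite field (ordered field) admitting quantifier elimination is already algebraically closed (real closed), with no reference to $\C$---is genuinely harder: one shows from quantifier elimination that every parametrically definable subset of $K$ in one variable is finite or cofinite, deduces $(K^\times)^n = K^\times$ for all $n$ and, in characteristic $p$, that $\{x^p - x : x \in K\} = K$ (a proper finite-index subgroup would have infinitely many infinite cosets), and must then invoke the Artin--Schreier theorem and analyse the absolute Galois group of $K$---ruling out the real-closed case via $(K^\times)^2 = K^\times$---to conclude $K$ has no proper finite extension at all. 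That Galois-theoretic analysis, not anything about quantifier elimination as such, is where the work would lie.
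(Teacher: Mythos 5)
The paper itself gives no proof of Theorem \ref{th:macintyre} (it simply cites \cite{macintyre1983elimination}), so the only question is whether your argument is sound. Your core mechanism is the right elementary one for the conditional statement: with quantifier elimination, every sentence is $T$-equivalent to a quantifier-free one, and quantifier-free sentences of $\FOL_{f-ofield}$ have a fixed truth value in every ordered field because closed terms evaluate in the ordered prime field $\Q$ (modulo fixing a convention such as $0^{-1}=0$). Hence every model of $T$ is elementarily equivalent to $\R$, so real closed, and completeness of $RCF$ together with $\R\models T$ gives $T\equiv RCF$. The ordered case is therefore correct and complete, and you are right that this is far easier than the unconditional Macintyre--McKenna--van den Dries theorem you sketch at the end. (The detour through the particular sentences $\psi_n$ is unnecessary: once quantifier-free sentences are constant on the relevant class of fields, QE gives elementary equivalence to $\R$, resp.\ $\C$, outright.)

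The field case, however, has a genuine gap. Your constancy argument is an assertion about fields of characteristic $0$, so it only shows that every \emph{characteristic-$0$} model of $T$ is algebraically closed; the step ``every model of $T$ satisfies $ACF_0$, hence $T\vdash ACF_0$'' is never established for models of positive characteristic. You dismiss this as ``the only extra bookkeeping'', but it cannot be done from the stated hypotheses: the theory $\mathrm{ACF}$ of algebraically closed fields of arbitrary characteristic admits quantifier elimination and has $\C$ as a model, yet it has models of characteristic $p$ and is not equivalent to $ACF_0$. So either the theorem must be read with the additional hypothesis that all models of $T$ have characteristic $0$ (in the spirit of Proposition \ref{pr:f-universal}), in which case your argument goes through verbatim, or the conclusion must be weakened to ``$T\cup\{\mathbf{n}\neq 0 : n\in\N\}$ is equivalent to $ACF_0$''. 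As written, your proof of the field half is incomplete, and the honest move would have been to flag this characteristic issue explicitly rather than to assert that it is routine.
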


\begin{problem}
Is there a decidable (infinite) theory $T$ of ordered fields which has no real closure?
\end{problem}

Inside the field of real numbers there exists a minimal Pythagorean  $\bP$ 
(Euclidean $\bE$, Vieta $\bV$)
field, which is the intersection of all Pythagorean (Euclidean, Vieta) subfields  in $\R$.
The theory of the minimal field of characteristic $0$, the field $\bQ$ of the rationals $\Q$ 
is undecidable by Proposition \ref{th:JRobinson}(iii).

\begin{problem}
Are the complete theories of (ordered) fields of $\bP$, $\bE$ or $\bO$ undecidable?
\end{problem}

Theorem \ref{th:Ziegler} holds not only for finite subtheories of real or algebraically closed fields,
of characteristic $0$, but also for finite characteristic and for certain formally $p$-adic fields.
In
\cite{shlapentokh2014definability}, many more infinitely axiomatizable theories of fields are
shown to be finitely hereditarily undecidable.

\subsection{The universal consequences of a theory of fields}
Our next observation concerns the universal consequences of a theory of fields.

The following is a special case
of Tarski's Theorem for universal formulas proven in every
textbook on model theory, e.g., \cite{bk:Hodges93}.
\begin{lemma}
\label{le:substructures}
Let $\cF$ be a field and $\cF_0$ be a subfield.
Let $\theta \in \FOL(\tau_{f-field})$ be a universal formula with parameters from $\cF_0$, 
and $\cF \models \theta$
Then $\cF_0 \models \theta$.
\\
The same also holds for ordered fields.
\end{lemma}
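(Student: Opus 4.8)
The plan is to reduce the statement to the standard model-theoretic fact that universal formulas are preserved under passage to substructures. The only work is in lining up the hypotheses of that fact with the situation at hand. First I would recall that, in the functional vocabulary $\tau_{f-field}$, a subfield $\cF_0$ of $\cF$ is precisely a substructure of $\cF$ in the sense of first order logic: the carrier of $\cF_0$ is closed under $+$, $\cdot$, $-x$ and $\tfrac{1}{x}$ (on nonzero elements) and contains $0,1$, and all operations and constants are the restrictions of those of $\cF$. This is exactly the point made in the discussion preceding the lemma, where it is observed that in the functional version substructures of fields are fields.

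Next I would address the parameters. Let $\bar a$ be the tuple of elements of $\cF_0$ appearing as parameters in $\theta$, so that $\theta = \theta(\bar a)$ with $\theta(\bar y) = \forall x_1 \cdots \forall x_m\, B(x_1,\ldots,x_m,\bar y)$ and $B$ quantifier free. Since $\bar a \in \cF_0 \subseteq \cF$ and $\cF \models \theta(\bar a)$, we have $\cF \models B(b_1,\ldots,b_m,\bar a)$ for every choice of $b_1,\ldots,b_m \in \cF$; in particular this holds for every $b_1,\ldots,b_m \in \cF_0$. Now the key step: a quantifier-free formula is both preserved and reflected between a structure and any substructure, evaluated at tuples from the substructure. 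Concretely, for $b_1,\ldots,b_m,\bar a \in \cF_0$ one shows by induction on terms that every $\tau_{f-field}$-term takes the same value whether computed in $\cF_0$ or in $\cF$ (because the operations agree), and then by induction on the Boolean structure of $B$ that $\cF_0 \models B(\bar b,\bar a)$ iff $\cF \models B(\bar b,\bar a)$; the atomic case is just equality of two terms, which holds in $\cF_0$ iff it holds in $\cF$ since the term values coincide and $\cF_0$ inherits the identity relation. Hence $\cF_0 \models B(b_1,\ldots,b_m,\bar a)$ for all $b_1,\ldots,b_m \in \cF_0$, which is exactly $\cF_0 \models \theta(\bar a)$, i.e.\ $\cF_0 \models \theta$.

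For the ordered case one argues identically, noting that an ordered subfield $\cF_0$ of an ordered field $\cF$ is a substructure for $\tau_{f-ofield}$: the relation $\leq$ on $\cF_0$ is the restriction of $\leq$ on $\cF$, so an atomic formula $t_1 \leq t_2$ holds in $\cF_0$ iff it holds in $\cF$ once the term values are known to agree. The induction on $B$ then goes through verbatim. I do not expect any real obstacle here; the statement is deliberately a special case of a textbook theorem (as the authors note, it appears in every model theory text, e.g.\ \cite{bk:Hodges93}). The only thing one must be slightly careful about is the presence of the unary inverse function $\tfrac{1}{x}$: its interpretation on $0$ is a matter of convention, but as long as $\cF_0$ and $\cF$ use the same convention (or one simply treats $\tfrac 1 x$ as a partial operation and works in $\tau_{field}$ where it is defined by an existential formula), the term-agreement induction is unaffected, since $\cF_0$ is closed under inverses of its nonzero elements.
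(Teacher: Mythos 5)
Your proof is correct and matches the paper's route: the paper simply invokes the textbook preservation theorem for universal formulas under substructures (citing Hodges), and your argument is exactly the standard proof of that fact, correctly noting that in the functional vocabulary $\tau_{f-field}$ a subfield is a substructure and handling the parameters and the inverse-at-zero convention appropriately. No gaps.
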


\begin{proposition}
\label{pr:f-universal}
\begin{enumerate}
\item
Let $F$ be a set of $\tau_{f-field}$-sentences
such that all its models are fields of characteristic $0$, and $F$ is consistent with $ACF_0$,
then  for every universal $\theta \in \FOL_{f-field}$
we have:
$$
F \models \theta
\mbox{ iff }
ACF_0 \models \theta.
$$
Hence the universal consequences of $F$ are decidable.
\item
Let $F_o$ be a set of $\tau_{f-ofield}$-sentences
such that all its models are ordered fields, and $F_o$ is consistent with $RCF$,
then  for every universal $\theta \in \FOL_{f-ofield}$
we have:
$$
F_o \models \theta
\mbox{ iff }
RCF \models \theta.
$$
Hence the universal consequences of $F_o$ are decidable.
\end{enumerate}
\end{proposition}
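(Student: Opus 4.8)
The plan is to prove both equivalences by the same two-step argument, resting on three facts: (a) $ACF_0$ is complete (Proposition~\ref{th:tarski}), and $RCF$ is complete (Proposition~\ref{th:seidenberg}); (b) every field of characteristic $0$ embeds, as a $\tau_{f-field}$-substructure, into a model of $ACF_0$ --- namely its algebraic closure --- and every ordered field embeds, as a $\tau_{f-ofield}$-substructure, into a model of $RCF$, namely its real closure; and (c) universal sentences pass down to substructures, which is the no-parameter case of Lemma~\ref{le:substructures}. Since $\theta$ is a sentence, the parameter clause of that lemma is not needed.

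For part~(1), consider first the implication ``$ACF_0 \models \theta$ implies $F \models \theta$''. Let $\mathcal{M}$ be an arbitrary model of $F$. By hypothesis $\mathcal{M}$ is a field of characteristic $0$, so it has an algebraic closure $\bar{\mathcal{M}}$; this is again of characteristic $0$, hence $\bar{\mathcal{M}} \models ACF_0$, and $\mathcal{M}$ is a $\tau_{f-field}$-substructure of $\bar{\mathcal{M}}$ because the latter is a field extension whose field operations (with the usual convention $0^{-1}=0$ for the inversion function) restrict to those of $\mathcal{M}$. From $\bar{\mathcal{M}} \models \theta$ and the universality of $\theta$, Lemma~\ref{le:substructures} gives $\mathcal{M} \models \theta$; as $\mathcal{M}$ was arbitrary, $F \models \theta$. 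For the converse, use that $F$ is consistent with $ACF_0$ to pick a structure $\mathcal{N} \models F \cup ACF_0$. Since $F \models \theta$ we get $\mathcal{N} \models \theta$, so $ACF_0$ does not prove $\neg\theta$, and completeness of $ACF_0$ forces $ACF_0 \models \theta$. Decidability of the universal consequences of $F$ is then immediate: by Proposition~\ref{th:tarski} it is decidable whether $ACF_0 \models \theta$, and by the equivalence just proved this answers whether $F \models \theta$.

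Part~(2) is proved verbatim, replacing ``field of characteristic $0$'' by ``ordered field'', ``algebraic closure'' by ``real closure'', $\tau_{f-field}$ by $\tau_{f-ofield}$, $ACF_0$ by $RCF$, Proposition~\ref{th:tarski} by Proposition~\ref{th:seidenberg}, and invoking the ordered-field version of Lemma~\ref{le:substructures}; the input used here is that the real closure of an ordered field is a real-closed ordered field extension of it, hence a model of $RCF$ containing the given field as a $\tau_{f-ofield}$-substructure.

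None of the steps involves a calculation; the only thing that needs attention is keeping the three ingredients aligned. The hypothesis that all models of $F$ are fields of characteristic $0$ is precisely what guarantees that the ambient algebraically closed extension is a model of $ACF_0$ rather than of some $ACF_p$, and it is the \emph{completeness} of $ACF_0$ --- not merely its decidability --- that drives the direction $F\models\theta \Rightarrow ACF_0\models\theta$. So I expect the ``obstacle'' to be purely bookkeeping: checking that the closures really are substructures in the relevant functional vocabulary and that the completeness/consistency hypotheses are used in the right places.
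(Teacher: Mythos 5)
Your proof is correct and follows essentially the same route as the paper: one direction uses completeness of $ACF_0$ (resp.\ $RCF$) together with consistency of $F$ (resp.\ $F_o$), and the other uses preservation of universal sentences under substructures (Lemma~\ref{le:substructures}) plus embedding each model into an algebraically closed (resp.\ real closed) extension, with decidability then coming from Propositions~\ref{th:tarski} and~\ref{th:seidenberg}. The only cosmetic difference is that the paper first deduces $ACF_0 \models F$ from completeness and consistency, whereas you argue via a single joint model of $F \cup ACF_0$; the two formulations are interchangeable.
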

\begin{proof}
(i):
As $ACF_0$ is complete and $F$ is consistent with $ACF_0$ we have that
$ACF_0 \models F$.
Let $F \models \theta$. Then also $ACF_0 \models \theta$.
\\
Conversely, assume $ACF_0 \models \theta$. Now we use that $\theta$ is universal.
By Lemma \ref{le:substructures}, 
$\theta$ holds in every subfield $\cF$ of an algebraically closed fields of characteristic $0$.
By a classical theorem of Algebra, \cite{ar:steinitz},
every field of characteristic $0$ has an algebraically closed extension which satisfies
$ACF_0$. Hence $T \models \theta$.
\\
(ii): The proof is similar, using real closures instead.
\hfill $\Box$
\end{proof}

In \cite{bk:Wu1994} a special case of the decidability in Proposition \ref{pr:f-universal}(i)  
is proved, where the decision procedure is given using Hilbert's Nullstellensatz and Gr\"obner bases,
rather than via quantifier elimination.
We discuss this further in Section \ref{se:decidable}.
This makes the decision procedure seemingly less complicated than 
in the case of the decidability in Proposition \ref{pr:universal}(ii).
A comparison of the complexity of the  two cases may be found in \cite{kapur1988refutational}.

\begin{problem}
\label{problem-existential}
For which theories of fields $F$ is the consequence problem for existential formulas $\fE$
decidable.
\end{problem}

The answer is positive for $ACF_0$ and $RCF$ by Propositions \ref{th:seidenberg} and \ref{th:tarski}.
In spite of recent results by J. Koengismann \cite{koenigsmann2016defining,koenigsmann2016question}
on decidability of theories of fields, Problem \ref{problem-existential} is open 
for the field of rational numbers $\Q$.

\begin{problem}
\label{problem-Q}
Is the existential theory of the field $\langle \Q, +_Q, \times_Q, 0, 1 \rangle$
decidable?
\end{problem}

\section{Axioms of geometry: Hilbert, Wu and Huzita-Justin}
\label{se:axioms}
In this section we collect some of Hilbert's axioms of geometry  which we need in the sequel, and 
which are all true when one considers the analytic geometry  of the plane with real coordinates.
\subsection{The vocabularies of geometry}
\label{se:glanguage}
Models of plane geometry are called {\em planes}.
These models differ in their basic relations.
The universe is always two-sorted, consisting of 
$\mathrm{Points}$
and
$\mathrm{Lines}$
and the most basic relation is {\em incidence $\in$}
with $p \in \ell$ to be interpreted as a point $p$ is coincident with a line $\ell$.
Other relations are
\begin{description}
\item[\bf Equidistant:]  $Eq(p_1,p_2, p_1', p_2')$ 
to be interpreted as two pairs of points $p_1, p_2$ and $p_1', p_2'$
have the same distance.
\item[\bf Orthogonality:] $Or(\ell_1, \ell_2)$ to be interpreted as two lines are orthogonal (perpendicular)/
\item[\bf Equiangular:] $An(p_1, p_2, p_3, p_1', p_2', p_3')$ to be interpreted as two triples of
points define the same angle.
\item[\bf Betweenness:] $Be(p_1, p_2, p_3)$ to be interpreted as three distinct points are on the same line and
$p_2$ is between $p_1$ and $p_3$.
\item[\bf P-equidistant:] $Peq(\ell_1, p, \ell_2)$ to be interpreted as the point $p$ has the same
distance from two lines $\ell_1$ and $\ell_2$.
\item[\bf L-equidistant:] $Leq(p_1, \ell, p_2)$ to be interpreted as the two points $p_1$ and $p_2$
have the same distance from the line $\ell$.
\item[\bf Symmetric Line:] $SymLine(p_1, \ell, p_2)$ to be interpreted as the two points $p_1$ and $p_2$
are symmetric with respect to the line $\ell$.
\end{description}

We define now the following vocabularies:
\begin{description}
\item[$\tau_{\in}$:] The vocabulary of incidence geometry, which uses incidence alone, possibly extended
with a few symbols for specific constants.
\item[$\tau_{hilbert}$:] The vocabulary of Hilbertian style geometry: 
Incidence, Betweenness, Equidistance and Equiangularity \cite{bk:Hartshorne2000}.
\item[$\tau_{wu}$:] The vocabulary of Wu's Orthogonal geometry: Incidence, Equidistance, Orthogonality \cite[Chapter 2]{bk:Wu1994}.
\item[$\tau_{origami}$:] The vocabulary used to describe Origami constructions: 
Incidence, Symmetric Line, L-equidistant, Orthogonality, 
\cite{ghourabi2007logical}
\item[$\tau_{o-origami}$:] An alternative version for describing Origami constructions. 
Incidence, Equidistance, Orthogonality, hence $\tau_{o-origami} = \tau_{wu}$.
\end{description}

We note that all these vocabularies contain the symbol $\in$ for the incidence relation.


In cwthe following subsections we collect some of Hilbert's axioms of geometry  which we need in the sequel, and
which are all true when one considers the analytic geometry  of the plane with real coordinates.

\subsection{Incidence geometries}
\label{se:incidence}

\paragraph{Axioms using only the incidence relation}
\begin{description}
\item[(I-1):]
For any two distinct points $A,B$ there is a unique line $l$
with $A \in l$ and $B\in l$.
\item[(I-2):]
Every line contains at least two distinct points.
\item[(I-3):]
There exists three distinct points $A, B, C$ such that no line $l$
contains all of them.
\end{description}
They can be formulated in $\FOL$ using the incidence relation only.

\paragraph{Parallel axiom}
We define: $Par(l_1, l_2)$ or $l_1 \parallel l_2$
if $l_1$ and $l_2$ have no point in common.

\begin{description}
\item[(ParAx):]
For each point $A$ and each line $l$ there is at most one line $l'$
with $l \parallel l'$ and $A \in l'$.
\end{description}
$Par(l_1, l_2)$ can be formulated in $\FOL$ using the incidence relation only,
hence also the Parallel Axiom.

\paragraph{Pappus' axiom}
\begin{description}
\item[(Pappus):]
Given two lines $l, l'$ and points
$A,B,C \in l$ and $A',B',C' \in l'$
such that
$AC' \parallel A'C$ and 
$BC' \parallel B'C$.
Then also
$AB' \parallel A'B$.
\end{description}

\paragraph{Axioms of Desargues and of infinity}
\begin{description}
\item[(InfLines):]
Given  distinct $A, B, C$ and $l$ with $A \in l, B, C \not\in l$
we define $A_1 = Par(AB,C) \times l$,
and inductively,
$A_{n+1} = Par(A_nB,C) \times l$.
Then all the  $A_i$ are distinct.

Note that this axiom is stronger than just saying there infinitely many points. It says that there are no lines
which have only finitely many points.
\item[(De-1):]
If $AA', BB', CC'$ intersect in one point or are all parallel,
and
$AB \parallel A'B'$ and
$AC \parallel A'C'$
then $BC \parallel B'C'$.
\item[(De-2):]
If
$AB \parallel A'B'$,
$AC \parallel A'C'$ and
$BC \parallel B'C'$
then
$AA', BB', CC'$ are all parallel.
\end{description}
The axiom (InfLies) is not first order definable but consists of an infinite set
of first order formulas with infinitely many new constant symbols for the points $A_i$,
and the incidence relation.
The two Desargues axioms are first order definable using the incidence relation only.
\begin{description}
\item[\bf Affine plane:]
Let $\tau_{\in} \subseteq  \tau$ be a vocabulary of geometry.
A $\tau$-structure $\Pi$ is an {\em (infinite) affine plane} if it satisfies  (I-1, I-2, I-3 and the parallel axiom
(ParAx) and (InfLines).
We denote the set of these axioms by $T_{affine}$
\item[\bf Pappian plane:]
$\Pi$ is a {\em Pappian plane} if additionally it satisfies the Axiom of Pappus (Pappus).
We denote the set of these axioms by $T_{pappus}$
\end{description}

In the literature the definition of affine planes vary. Sometimes the parallel axiom is included, and sometimes not.
We always include the parallel axiom, unless indicated explicitly otherwise.

\subsection{Hilbert style geometries}

\paragraph{Axioms of betweenness} 
\begin{description}
\item[(B-1):]
If $Be(A, B, C)$ then there is $l$ with $A, B, C \in l$.
\item[(B-2):]
For every $A, B$ there is $C$ with $Be(A, B, C)$. 
\item[(B-3:]
For each distinct $A, B, C \in l$ exactly one point of the points $A, B, C$ is between the two others.
\item[(B-4):] (Pasch)
Assume the points $A, B, C$ and $l$ in general position,
i.e.
the three points are not on one line, none of the points is on $l$.
Let $D$ be the point at which $l$ and the line $AB$ intersect.
If $Be(A, D, B)$ there is $D' \in l$ with
$Be(A, D', C)$ or
$Be(B, D', C)$.
\end{description}
The axioms of betweenness are all first order expressible in the language
with incidence relation and the betweenness relation.

\paragraph{Congruence axioms: Equidistance}
We write for $Eq(A, B, C, D)$ the usual $AB \cong CD$.
\begin{description}
\item[(C-0):]
$AB \cong AB \cong BA$.
\item[(C-1):]
Given $A, B, C, C'$, $l$ with $C, C' \in l$
there is a unique $D \in l$ with
$AB \cong CD$ and $B(C, C', D)$ or $B(C, D, C')$.
\item[(C-2):]
If $AB \cong CD$ and $AB \cong EF$ then $CD \cong EF$. 
\item[(C-3):] (Addition)
Given $A, B, C, D, E, F$ with $Be(A, B, C)$ and $Be(D, E, F)$,
if $AB \cong DE$ and $ BC \cong EF$, then $AC \cong DF$.
\end{description}
Note that
(C-1) and (C-3) use the betweenness relation $Be$.
Hence they are first order definable using the incidence, betweenness and equidistance relation.

\paragraph{Congruence axioms: Equiangularity}
We denote by $\vec{AB}$ the directed ray from $A$ to $B$, and
by $\angle(ABC)$ the angle between $\vec{AB}$ and $\vec{BC}$.
For 
the congruence of angles
$An(A,B,C,A',B',C')$
we write 
$\angle(ABC) \cong \angle(A'B'C')$ 
\begin{description}
\item[(C-4):]
Given rays 
$\vec{AB}$, $\vec{AC}$ and $\vec{DE}$
there is a unique ray $\vec{DF}$ with
$\angle(BAC) \cong \angle(EDF)$.
\item[(C-5):]
Congruence of angles is an equivalence relation.
\item[(C-6):](Side-Angle-Side)
Given two triangles $ABC$ and $A'B'C'$
with $AB \cong A'B'$, $AC \cong A'C'$
and $\angle{BAC} \cong \angle{B'A'C'}$
then
$BC \cong B'C'$, 
$\angle{ABC} \cong \angle{A'B'C'}$
and $\angle{ACB} \cong \angle{A'C'B'}$.
\end{description}

\paragraph{Axiom E}
Let $A$ be a point and $BC$ be a line segment.
A circle $\Gamma(A,BC)$ is the set of all points $U$ such that $E(A,U,B,C)$.
A point $D$ is inside the circle $\Gamma(A,BC)$ if there is $U$ with $E(A,U,B,C)$ and $Be(A,D,U)$.
A point $D$ is outside the circle $\Gamma(A,BC)$ if there is $U$ with $E(A,U,B,C)$ and $Be(A,U,D)$.
\begin{description}
\item[(AxE):]
Given two circles $\Gamma, \Delta$ such that $\Gamma$ contains 
at least one point inside, and one point outside $\Delta$, 
then $\Gamma \cap \Delta \neq \emptyset$.
\end{description}

\begin{description}
\item[\bf Hilbert plane:]
Let $\tau$ with $\tau_{hilbert} \subseteq  \tau$ be a vocabulary of geometry.
A $\tau$-structure $\Pi$ is an {\em (infinite) Hilbert plane} if it satisfies  (I-1, I-2, I-3),
(B-1, B-2, B-3, B-4) and
(C-1, C-2, C-3, C-4, C-5, C-6).
\\
We denote the set of these axioms by $T_{hilbert}$
\item[\bf P-Hilbert plane:]
$\Pi$ is a {\em P-Hilbert plane} if it additionally satisfies (ParAx).
\\
We denote the set of these axioms by $T_{p-hilbert}$
\item[\bf Euclidean plane:]
$\Pi$ is a {\em Euclidean plane} if it is a P-Hilbert plane which also satisfies Axiom E.
\\
We denote the set of these axioms by $T_{euclid}$
\end{description}

\subsection{Axioms of orthogonal geometry}

\paragraph{Congruence axioms: Orthogonality}
We denote by $l_1 \perp l_2$ the orthogonality of two lines $Or(l_1, l_2)$.
We call a line $l$ {\em isotropic} if $l \perp l$.
Note that our definitions do not exclude this.
\begin{description}
\item[(O-1):]
$l_1 \perp l_2$ iff $l_2 \perp l_1$.
\item[(O-2):]
Given $O$ and $l_1$, there exists exactly one line $l_2$ with
$l_1 \perp l_2$ and $O \in l_2$.
\item[(O-3):]
$l_1 \perp l_2$ and $l_1 \perp l_3$ 
then $l_2  \parallel l_3$.
\item[(O-4):]
For every $O$ there is an $l$ with
$O \in l$ and $l \not \perp l$.
\item[(O-5):]
The three heights of a triangle intersect in one point.
\end{description}

\paragraph{Axiom of Symmetric Axis and Transposition}
\begin{description}
\item[(AxSymAx):]
Any two intersecting non-isotropic lines have a symmetric axis.
\item[(AxTrans):]
Let $l, l'$ be two non-isotropic lines with
$A,O, B \in l$, 
$AO \cong OB$ 
and $O' \in l'$
there are exactly two points $A', B' \in l'$ such that
$AB \cong A'B' \cong B'A'$ and
$A'O' \cong O'B'$.
\end{description}
The two axioms are equivalent in geometries satisfying the Incidence,
Parallel, Desargues and Orthogonality 
axioms together with the axiom of infinity.

\begin{description}
\item[\bf Orthogonal Wu plane:]
Let $\tau$ with $\tau_{Wu} \subseteq  \tau$ be a vocabulary of geometry.
A $\tau$-structure $\Pi$ is an {\em orthogonal Wu plane} if it satisfies  (I-1, I-2, I-3),
(O-1, O-2, O-3, O-4, O-5), the axiom of infinity (InfLines), (ParAx), 
and the two axioms of Desargues (D-1) and (D-2).
\\
We denote the set of these axioms by $T_{o-wu}$
\item[\bf Metric Wu plane:]
$\Pi$ is a {\em metric Wu plane} if it satisfies additionally the axiom of symmetric axis (AxSymAx)
or, equivalently, the axiom of transposition (AxTrans).
\\
We denote the set of these axioms by $T_{m-wu}$
\end{description}

The axiomatization of orthogonal is due to W. Wu \cite{wen1986basic,bk:Wu1994,wu2007mathematics}, see also
\cite{pambuccian2007orthogonality}.

\subsection{The Origami axioms}

A line which is obtained by folding the paper is called a {\em fold}.
The first six axioms are known as Huzita's axioms. 
Axiom (H-7) was discovered by K. Hatori. 
Jacques Justin and Robert J. Lang also found axiom (H-7), \cite{wiki-huzita-hatori}.
The axioms (H-1)-(H-7) only express closure under folding operations, and do not define
a geometry. To make it into an axiomatization of geometry we have to that these operations
are performed on an affine plane.

We follow here \cite{ghourabi2007logical}.
The  original axioms and their expression as first order formulas in the vocabulary
$\tau_{origami}$  are as follows:
\begin{description}
\item[(H-1):]
Given two points $P_1$ and $P_2$, there is a unique fold (line) that passes through both of them.
$$
\forall P_1, P_2 \exists^{=1} l (P_1 \in l \wedge P_2 \in l)
$$
\item[(H-2):]
Given two points $P_1$ and $P_2$, there is a unique fold (line) that places $P_1$ onto $P_2$.
$$
\forall P_1, P_2 \exists^{=1} l SymLine(P_1, l, P_2)
$$
\item[(H-3):]
Given two lines $l_1$ and $l_2$, there is a fold (line) that places $l_1$ onto $l_2$.
$$
\forall l_1, l_2 \exists k \forall P 
\left(
P \in k \rightarrow Peq(l_1,P, l_2)
\right)
$$
\item[(H-4):]
Given a point $P$ and a line $l_1$, there is a unique fold (line) perpendicular to $l_1$ that passes through point $P$.
$$
\forall P, l \exists^{=1} k \forall P (P \in k \wedge Or(l, k))
$$
\item[(H-5):]
Given two points $P_1$ and $P_2$ and a line $l_1$, there is a fold (line) that places $P_1$ onto $l_1$ 
and passes through $P_2$.
$$
\forall P_1, P_2 l_1 \exists l_2 \forall P (P_2 \in l_2 \wedge \exists P_2 (SymLine(P_1, l_2, P_2) \wedge P_2 \in l_1))
$$
\item[(H-6):]
Given two points $P_1$ and $P_2$ and two lines $l_1$ and $l_2$, there is a fold (line) that places $P_1$ onto $l_1$ and $P_2$ onto $l_2$.
$$
\forall P_1, P_2 l_1, l_2 \exists l_3 
\left( 
(\exists Q_1 SymLine(P_1, l_3, Q_1) \wedge Q_1 \in l_1) \wedge
(\exists Q_2 SymLine(P_2, l_3, Q_2) \wedge Q_2 \in l_2)
\right)
$$
\item[(H-7):]
Given one point $P$ and two lines $l_1$ and $l_2$, there is a fold (line) that places $P$ onto $l_1$ 
and is perpendicular to $l_2$.
$$
\forall P, l_2, l_2 \exists l_3 
\left(
Or(l_2, l_3) \wedge (\exists Q SymLine(P, l_3, Q) \wedge Q \in l_1)
\right)
$$
\end{description}

\begin{description}
\item[\bf Affine Origami plane:]
Let $\tau$ with $\tau_{origami} \subseteq  \tau$ be a vocabulary of geometry.
A $\tau$-structure $\Pi$ is an {\em affine Origami plane} if it satisfies  (I-1, I-2, I-3),
the axiom of infinity (InfLines), (ParAx) and the 
Huzita-Hatori axioms (H-1) - (H-7). 
\\
We denote the set of these axioms by $T_{a-origami}$
\end{description}

\begin{proposition}
The relations $SymLine$ and $Peq$ are first order definable using $Eq$ and  
$Or$ with existential formulas over $\tau_{f-field}$:
Hence the axioms (H-1)-(H-7) are first order definable in $\FOL(\tau_{wu})$.
\end{proposition}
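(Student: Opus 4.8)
The plan is to exhibit explicit \emph{existential} first order formulas over $\tau_{wu}=\{\in, Eq, Or\}$ that define $SymLine$ and $Peq$, and then to substitute them into (H-1)--(H-7). I read the stated claim as: $SymLine$ and $Peq$ are definable by existential $\tau_{wu}$-formulas, and -- via the Cartesian coordinatisation of Proposition~\ref{th:Artin}, in which $Eq$ and $Or$ are given by quantifier-free polynomial conditions on coordinates and line coefficients -- the same definitions translate into existential formulas over $\tau_{f-field}$, which is the sense of the phrase in the proposition. First I would record the $\tau_{wu}$-definable auxiliary notions needed: ``$q$ lies on the (unique, by (I-1)) line through the distinct points $p,p'$'', ``$\ell_1\parallel\ell_2$'' (no common point, using $\in$ alone), and ``$q$ is the foot of the perpendicular dropped from $p$ to $\ell$'', the latter expressed by $\exists m\,(p\in m\wedge Or(m,\ell)\wedge q\in m\wedge q\in\ell)$; by axiom (O-2) the line $m$ is unique, so $q$ is determined whenever it exists.

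Next I would define $SymLine(p_1,\ell,p_2)$ -- ``$p_1$ and $p_2$ are mirror images across $\ell$'' -- as the disjunction of the degenerate clause $p_1=p_2\wedge p_1\in\ell$ with the clause $p_1\ne p_2$ conjoined with
$$
\exists m\,\exists q\;\bigl(p_1\in m\wedge p_2\in m\wedge Or(m,\ell)\wedge q\in m\wedge q\in\ell\wedge Eq(p_1,q,q,p_2)\bigr);
$$
since $p_1,p_2,q$ then lie on the common line $m$ and $Eq(p_1,q,q,p_2)$ makes $q$ equidistant from them, $q$ is the midpoint of $\overline{p_1p_2}$, and with $Or(m,\ell)$ and $q\in\ell$ this says exactly that $\ell$ is the perpendicular bisector of $\overline{p_1p_2}$. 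Similarly I would define $Peq(\ell_1,p,\ell_2)$ -- ``$p$ is equidistant from $\ell_1$ and $\ell_2$'' -- as the disjunction of $p\in\ell_1\wedge p\in\ell_2$ with
$$
\exists m_1\exists m_2\exists f_1\exists f_2\;\Bigl(\textstyle\bigwedge_{i=1,2}\bigl(p\in m_i\wedge Or(m_i,\ell_i)\wedge f_i\in m_i\wedge f_i\in\ell_i\bigr)\wedge Eq(p,f_1,p,f_2)\Bigr),
$$
where $f_i$ is the foot of the perpendicular from $p$ to $\ell_i$. Both formulas are existential and use only $\in$, $Or$, $Eq$ and equality; the analogous $Leq$ is treated the same way but is not needed here.

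It then remains only to substitute these formulas for each occurrence of $SymLine$ and $Peq$ in (H-1)--(H-7); since $Or$ and $\in$ already belong to $\tau_{wu}$, each Huzita--Hatori axiom becomes a first order sentence of $\FOL(\tau_{wu})$. Because the substituted subformulas occur positively in the axioms and the substituents are existential, this does not push the axioms past the $\forall\exists$ level of the quantifier hierarchy, and under coordinatisation it yields existential conditions over $\tau_{f-field}$, as claimed.

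The main obstacle is verifying the equivalences in \emph{all} the intended models, which forces care with degenerate configurations: $p_1=p_2$, a point lying on one of the lines, and -- most seriously -- \emph{isotropic} lines $\ell\perp\ell$, for which by (O-3) the perpendicular through an external point is parallel to $\ell$, so the foot need not exist and ``$q$ equidistant from $p_1,p_2$ on an isotropic line'' ceases to pin down the midpoint. Affine Origami planes are built on affine planes, and in the cases relevant to the undecidability results these are coordinatised by formally real fields (e.g.\ subfields of $\R$), which have no isotropic lines, so the definitions above are correct there; a fully general statement would restrict the line quantifiers to non-isotropic lines or fix an explicit convention for the isotropic case. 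One should also confirm that the uniqueness axioms (I-1) and (O-2) are genuinely available in the ambient theory $T_{a-origami}$, which they are.
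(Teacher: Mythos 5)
Your proposal is correct and follows essentially the same route as the paper, which likewise defines $SymLine(P_1,\ell,P_2)$ by the existence of a point $Q\in\ell$ with the lines $P_1Q$, $P_2Q$ perpendicular to $\ell$ and $Eq(P_1,Q,P_2,Q)$, and $Peq(\ell_1,P,\ell_2)$ via the two perpendicular feet, then substitutes these into (H-1)--(H-7). Your added case split for $p_1=p_2$ and the remark on isotropic lines are refinements the paper's terse proof omits, but they do not change the argument.
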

\begin{proof}
\begin{enumerate}[(i)]
\item
$SymLine(P_1, \ell, P_2)$ iff there is a point $Q \in \ell$ such that
$Or((P_1,Q), \ell)$,
$Or((P_2,Q), \ell)$ and 
$Eq(P_1,Q, P_2, Q)$. 
\item
$Peq(\ell_1, P, \ell_2)$ iff there exist points $Q_1, Q_2$ such that 
$Or((P, Q_1), \ell_1)$, 
$Or((P, Q_2), \ell_2)$, 
$Eq(P, Q_1)$ and
$Eq(P, Q_2)$.
\end{enumerate}
\hfill $\Box$
\end{proof}

\section{Proving undecidability of geometrical theories}
\label{se:undecidability}
In this section we spell out how one can apply 
J. Robinson's Proposition \ref{th:JRobinson} or
M. Ziegler's Theorem (Theorem \ref{th:Ziegler})
to prove undecidability of geometric theories.

\subsection{Translation schemes}
\label{se:transductions}
We first introduce the formalism of {\em translation schemes, transductions and
translation}. In \cite{bk:TarskiMostowskiRobinson53} this was first used, but not spelled out in detail.
Our approach follows
\cite[Section 2]{ar:MakowskyTARSKI}.
To keep it notationally simple we explain on an example.
Let $\tau$ be a vocabulary consisting of one binary relation symbol $R$,
$\sigma$ be a vocabulary consisting of one ternary relation symbol $S$.
In general, if $\tau$ and $\sigma$ are purely relational vocabularies
the definition can be extended in a straighforward way. If the contain function symbols (and constants)
one has to be a bit more careful when extending the definitions below.
However, for our purpose here, this is not needed.

We want to interpret a $\sigma$ structure on $k$-tuples of elements of a $\tau$-structure.

A {\em $\tau-\sigma$-translation scheme $\Phi =(\phi, \phi_S)$} consists of a $\tau$-formula $\phi(\bar{x})$
with $k$ free variables and a formula $\phi_S$ with $3k$ free variables.
$\Phi$ is quantifier-free if all its  translation formulas are quantifier-free.

Let $\cA =\langle A, R^A \rangle$ be a $\tau$-structure. 
We define a $\sigma$-structure $\Phi^*(\cA)=
\langle B, S^B \rangle$
as follows:
The universe is given by 
$$
B= \{ \bar{a} \in A^k : \cA \models \phi(\bar{a} \}
$$
and 
$$
S^B = \{ \bar{b} \in A^{k\times 3}: \cA \models \phi_S(\bar{b} \}
$$ 
$\Phi^*$ is called a {\em transduction}\footnote{
This terminology was put forward in the many papers by B. Courcelle, cf. \cite{bk:CourcelleEngelfriet2011}.
}.

Let $\theta$ be a $\sigma$-formula.
We define a $\tau$-formula $\Phi^{\sharp}(\theta)$ inductively by substituting
occurrences of $S(\bar{b})$ by their definition via $\phi_S$
where the free variables are suitable named.
$\Phi^{\sharp}$ is called a {\em translation}.

The fundamental property of translation schemes, transductions and
translation is the following:

\begin{proposition}[Fundamental Property of Translation Schemes]
\label{pr:Fund}
\label{pr:fundamental}
Let $\Phi$ be a $\tau-\sigma$-translation scheme, and $\theta$ be a $\sigma$-formula,
hence $\Phi^{\sharp}(\theta)$ is a $tau$-formula.
$$
\cA \models \Phi^{\sharp}(\theta)  \mbox{   iff   } \Phi^*(\cA) \models \theta
$$
If $\theta$ has free variables, the assignment have to be chosen accordingly.
Furthermore, if $\Phi$ is quantifier-free,
and
$\theta$ is a universal formula, $\Phi^{\sharp}(\theta)$ is also universal.
\end{proposition}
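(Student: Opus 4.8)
The plan is to prove the Fundamental Property by induction on the structure of the $\sigma$-formula $\theta$, which is the standard strategy for results of this type. First I would fix a $\tau$-structure $\cA$ and, for any assignment $\bar{a}_1,\dots,\bar{a}_n$ of $k$-tuples from $A$ satisfying $\cA\models\phi(\bar{a}_i)$ for each $i$, observe that this data is precisely an assignment of elements of the universe $B$ of $\Phi^*(\cA)$. The claim to prove by induction is then: for every $\sigma$-formula $\theta(z_1,\dots,z_n)$,
$$
\cA \models \Phi^{\sharp}(\theta)[\bar{a}_1,\dots,\bar{a}_n] \quad\text{iff}\quad \Phi^*(\cA) \models \theta[\bar{a}_1,\dots,\bar{a}_n].
$$

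For the base case I would treat atomic formulas. The atomic $\sigma$-formulas are equalities $z_i = z_j$ and the relational atom $S(z_{i_1},z_{i_2},z_{i_3})$ (in the running example; in general one atom per relation symbol of $\sigma$). For equality, $\Phi^{\sharp}$ unwinds $z_i=z_j$ into the conjunction of $k$ equalities between the corresponding coordinates of the $k$-tuples, which holds in $\cA$ iff $\bar{a}_i=\bar{a}_j$ as elements of $B$. For the relational atom, $\Phi^{\sharp}(S(\bar z))$ is by definition $\phi_S$ with its $3k$ variables renamed to match the coordinates of $\bar{a}_{i_1},\bar{a}_{i_2},\bar{a}_{i_3}$, and $\cA\models\phi_S[\dots]$ holds iff the triple lies in $S^B$ by the very definition of $\Phi^*(\cA)$. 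The inductive step for the Boolean connectives $\neg,\wedge,\vee$ is immediate since $\Phi^{\sharp}$ commutes with them. For the quantifier step, $\Phi^{\sharp}(\exists z\,\theta)$ is defined as $\exists \bar{x}\,(\phi(\bar{x}) \wedge \Phi^{\sharp}(\theta))$, so a witness in $\cA$ is exactly a $k$-tuple satisfying $\phi$, i.e. an element of $B$, and the equivalence follows from the induction hypothesis applied to $\theta$; the universal quantifier is dual (or handled via $\neg\exists\neg$).

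For the final sentence about preservation of universal formulas, I would argue as follows: if $\Phi$ is quantifier-free, then $\Phi^{\sharp}$ applied to a quantifier-free $\sigma$-formula yields a quantifier-free $\tau$-formula, since every atom is replaced by a quantifier-free formula and Boolean structure is preserved. If $\theta = \forall z_1\cdots\forall z_m\, B$ with $B$ quantifier-free, then $\Phi^{\sharp}(\theta)$ is logically equivalent to $\forall \bar{x}_1\cdots\forall\bar{x}_m\,(\bigwedge_i \phi(\bar{x}_i) \rightarrow \Phi^{\sharp}(B))$; since $\phi$ and $\Phi^{\sharp}(B)$ are both quantifier-free, this is a universal $\tau$-formula. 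I do not anticipate a genuine obstacle here — the result is routine — but the one point requiring care is the bookkeeping of variable names in the quantifier and relational cases, ensuring that substitution of the $3k$-variable formula $\phi_S$ into a context does not accidentally capture or collide variables; this is handled by the standard convention of renaming bound variables apart, which I would state explicitly rather than belabor.
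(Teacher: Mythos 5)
Your proof is correct: the induction on the structure of $\theta$, with atomic cases handled by the definition of $\Phi^*(\cA)$ and quantifiers relativized to the $k$-tuples satisfying $\phi$, together with the observation that a quantifier-free scheme turns universal formulas into (formulas equivalent to) universal formulas, is exactly the standard argument. The paper itself states this proposition without proof, treating it as folklore and referring to the literature on translation schemes, so your write-up supplies precisely the routine argument the paper leaves implicit, including the one point that genuinely needs care, namely renaming bound variables apart when substituting $\phi_S$.
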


\ifskip
\else
\begin{proposition}
\begin{enumerate}[(i)]
\item
The definition of a plane $\Pi_{\cF}$ using coordinates in a field $\cF$
is given by a translation scheme $PP$.
\item
The definition of a field $\cF_{\Pi}$ using segment arithmetic
is given by a translation scheme $FF$.
\end{enumerate}
\end{proposition}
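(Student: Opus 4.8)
The plan is to exhibit the two translation schemes explicitly as tuples of first-order formulas, and then to read off their correctness — that $PP^*$ lands among the Pappian planes without finite lines and $FF^*$ among the fields of characteristic $0$ — from Proposition~\ref{th:Artin}, whose proof (Schur, Artin) is exactly the verification that the geometric constructions written below behave as claimed.

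For part (i) I would coordinatize in the usual way. Over $\tau_{f-field}$ take the point sort to be $\cF^2$, so $\phi_{\mathrm{Point}}(x_1,x_2)$ is $\top$; take the line sort to be $\cF^3$ restricted by $\phi_{\mathrm{Line}}(p,q,r):=(r=0 \vee r=1)$, where $(p,q,0)$ names the line $\{(x,y):y=p\cdot x+q\}$ and $(p,q,1)$ names the vertical line $\{(x,y):x=p\}$. These are canonical representatives, so the line sort need not be a quotient. Incidence is
\[
\phi_{\in}\big((x,y),(p,q,r)\big)\ :=\ \big(r=0 \wedge y=p\cdot x+q\big)\ \vee\ \big(r=1 \wedge x=p\big),
\]
which is quantifier-free. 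Hence $PP=(\phi_{\mathrm{Point}},\phi_{\mathrm{Line}},\phi_{\in})$ is a quantifier-free translation scheme; that $PP^*(\cF)$ satisfies (I-1)--(I-3) and (ParAx) for every field is immediate, and that it additionally satisfies (InfLines) and (Pappus) when $\cF$ has characteristic $0$ is Proposition~\ref{th:Artin}(i).

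For part (ii) I would run segment arithmetic on a fixed line. Enlarge $\tau_{\in}$ by two constants $O$ and $I$ naming distinct points (definable in any affine plane up to automorphism), and let $\ell$ be the line $OI$. The field sort is a single copy of the domain cut out by $\phi_F(x):=(x\in\ell)$, with $0\mapsto O$ and $1\mapsto I$. The graphs of $+$ and $\times$ are the classical ruler constructions: $\phi_{\mathrm{Add}}(x,y,z)$ asserts the existence of auxiliary points and lines off $\ell$ realizing the parallel-projection (parallelogram) configuration that carries $x,y$ to $z$, and $\phi_{\mathrm{Mult}}(x,y,z)$ asserts the existence of the auxiliary line through $O$ and the intercept configuration that carries $x,y$ to $z$. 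Since $Par(\cdot,\cdot)$ is first-order in incidence, both are existential $\FOL(\tau_{\in})$-formulas, so $FF$ is a (first-order, not quantifier-free) translation scheme; that $FF^*(\Pi)$ is then a field of characteristic $0$ whenever $\Pi$ is a Pappian plane with no finite lines is Proposition~\ref{th:Artin}(ii), and that $PP^*$ and $FF^*$ are mutually inverse up to isomorphism is Proposition~\ref{th:Artin}(iii)--(iv).

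The only point needing care — the rest is bookkeeping — is that these descriptions are genuinely first-order. For $PP$ this is the observation that a line carries a canonical coordinate triple, so no definable quotient is involved. For $FF$ it is the observation that each step of a ruler construction (``the line joining two points'', ``the intersection of two non-parallel lines'', ``the parallel to a given line through a given point'') is a first-order operation which, by the incidence and parallel axioms, returns a \emph{unique} object; a fixed finite composition of such steps is therefore defined by an existential formula whose graph is functional on the field sort. The genuinely geometric content — that the output of $FF^*$ satisfies the field axioms (associativity and commutativity of $\times$ via (Pappus), the remaining axioms via the derived Desargues configurations) and that the two transductions invert one another — is precisely what Proposition~\ref{th:Artin} asserts, and I would simply invoke it rather than reprove it.
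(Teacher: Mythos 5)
Your part (i) is essentially the paper's own construction: the plane over $\cF$ is obtained by quantifier-free formulas over $\tau_{f-field}$ (Definition \ref{transduction-PP}, Proposition \ref{pr:PP}), and your slope--intercept coding of lines is a harmless variant of the paper's coding of lines by triples $(a,b,c)$. The only omission is that for the richer planes the scheme $PP$ must also contain quantifier-free definitions of $Eq$, $Or$, $Be$, $An$ over $\tau_{f-field}$ resp.\ $\tau_{f-ofield}$, as in Section \ref{se:analytic}; you only treat incidence.

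The gap is in part (ii). The statement asserts that the field $\cF_{\Pi}$ defined \emph{by segment arithmetic} is given by a translation scheme $FF$, and segment arithmetic lives in the vocabulary $\tau_{hilbert}$: its elements are congruence classes $[P_1,P_2]$ of segments (so $Eq$ is needed), addition is concatenation along a line (so $Be$ is needed), multiplication uses a fixed unit segment and a parallel configuration, and the (ordered) field is then obtained from the semiring of positive segments by the standard construction of $\Q$ from $\N$; the paper's proof consists precisely of checking that each of these steps is $\FOL(\tau_{hilbert})$-definable (Propositions \ref{le:fol-segmentarithmetic}, \ref{le:addition}, \ref{le:multiplication} and the lemma following them), with correctness recorded in Theorem \ref{th:corr-segment}. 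What you construct instead is the incidence-only von Staudt/Hall coordinatization on a fixed line, i.e.\ the paper's planar-ternary-ring scheme $RR_{field}$/$RF_{field}$ of Section \ref{se:ptr} (Theorems \ref{th:PTR} and \ref{pr:ptr}), and your appeal to Proposition \ref{th:Artin} is the correctness statement for \emph{that} scheme, not for $FF$. The two schemes are deliberately kept separate in the paper: $FF$ operates on Hilbert and Euclidean planes and returns an ordered Pythagorean (resp.\ Euclidean) field, which is exactly what the later undecidability argument needs (Theorem \ref{th:euclid-undec} via Lemma \ref{le:new-1} and Corollary \ref{co:onto-1}), while your $RR$-style scheme operates on bare Pappian planes in $\tau_{\in}$ and only returns a field of characteristic $0$. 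So your argument proves (a version of) Theorem \ref{th:PTR}, but it does not prove part (ii) as stated: you never show that segment arithmetic itself --- the quotient by segment congruence, addition via betweenness, multiplication via the unit segment, and the passage from the positive semiring to the ordered field --- is first order definable.
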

\fi 

In order to use translation schemes to prove decidability and undecidability of theories
we need two lemmas. 

\begin{lemma}
\label{le:onto}
Let $\Phi$ be a $\tau-\sigma$-translation scheme.
\begin{enumerate}[(i)]
\item
Let $\cA$ be a $\tau$-structure.
If the complete first order theory 
$T_0$ of $\cA$
is decidable,
so is the complete first order theory 
$T_1$ of $\Phi^*(\cA)$.
\item
There is a  $\tau$-structure $\cA$ such that
the complete first order theory $T_1$ of
$\Phi^*(\cA)$ is decidable,
but the complete first order theory $T_0$ of $\cA$ is undecidable.
\item
If however, $\Phi^{\sharp}$ is onto, i.e., for every $\phi \in \FOL(\tau)$
there is a formula $\theta \in \FOL(\sigma)$ with $\Phi^{\sharp}(\theta)$ logically equivalent to $\phi$,
then the converse of (i) also holds.
\item
Let $T \subseteq \FOL(\tau)$ be a decidable theory and $T' \subseteq \FOL(\sigma)$
and $\Phi^*$ be such that
$\Phi^*|_{Mod(T)}: Mod(T) \rightarrow  Mod(T')$ be onto.
Then $T'$ is decidable.
\end{enumerate}
\end{lemma}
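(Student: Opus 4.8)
The plan is to treat the four parts of Lemma~\ref{le:onto} as increasingly refined statements about how a translation scheme transports decidability, and to prove them essentially from the Fundamental Property of Translation Schemes (Proposition~\ref{pr:fundamental}) together with basic facts about complete theories. The unifying idea is that $\Phi^{\sharp}$ is a computable syntactic map on formulas, and Proposition~\ref{pr:fundamental} guarantees it commutes with truth along $\Phi^*$. So membership questions for the target theory can be pulled back along $\Phi^{\sharp}$ to membership questions for the source theory, and the only subtlety is whether this pullback is \emph{surjective} onto the relevant set of target formulas.

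For part~(i): given the complete decidable theory $T_0 = \mathrm{Th}(\cA)$, I want to decide, for an arbitrary $\sigma$-sentence $\theta$, whether $\Phi^*(\cA)\models\theta$. By Proposition~\ref{pr:fundamental} this holds iff $\cA\models\Phi^{\sharp}(\theta)$, i.e. iff $\Phi^{\sharp}(\theta)\in T_0$; since $\Phi^{\sharp}$ is computable and $T_0$ is decidable, this is decidable, and it is exactly $T_1 = \mathrm{Th}(\Phi^*(\cA))$. For part~(ii) I need a counterexample to the converse: take $\cA$ to be any structure with undecidable complete theory (e.g. $\langle\N,+,\times\rangle$, or $\langle\Q,+,\times\rangle$ from Proposition~\ref{th:JRobinson}) and choose $\Phi$ so that $\Phi^*(\cA)$ is trivial — for instance let $\phi(x)$ define a one-element substructure (or let $\Phi$ interpret the structure with empty/forced relations), so that $\Phi^*(\cA)$ has a decidable (indeed finite) complete theory while $\mathrm{Th}(\cA)$ is undecidable. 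The point of~(ii) is precisely that $\Phi^{\sharp}$ need not hit every $\tau$-sentence, so information is lost. Part~(iii) then says that if we \emph{assume} $\Phi^{\sharp}$ is onto (up to logical equivalence), decidability of $T_1$ gives decidability of $T_0$: to decide whether $\phi\in T_0$, search for a $\theta$ with $\Phi^{\sharp}(\theta)$ logically equivalent to $\phi$ — this search is guaranteed to succeed by the onto hypothesis, but I should note the mild wrinkle that ``logically equivalent'' is itself only semidecidable, so to make this clean one wants the onto map to be witnessed effectively, or one runs two searches in parallel (one for $\theta$ with $\Phi^{\sharp}(\theta)\equiv\phi$ and, using decidability of $T_1$, reads off whether $\Phi^*(\cA)\models\theta$, equivalently $\cA\models\phi$).

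Part~(iv) is the one actually used later for the undecidability transfer via Proposition~\ref{th:undecidable}, and it is the part I expect to require the most care. Here $T$ is a decidable $\tau$-theory, $T'$ a $\sigma$-theory, and $\Phi^*$ restricted to $\mathrm{Mod}(T)$ maps \emph{onto} $\mathrm{Mod}(T')$. I want to decide, for a $\sigma$-sentence $\theta$, whether $T'\models\theta$. The claim is that $T'\models\theta$ iff $T\models\Phi^{\sharp}(\theta)$: the forward direction is immediate from Proposition~\ref{pr:fundamental} applied to every model of $T$ (each maps into $\mathrm{Mod}(T')$); the reverse direction is where surjectivity is essential — if $T\models\Phi^{\sharp}(\theta)$ but $T'\not\models\theta$, pick $\cB\in\mathrm{Mod}(T')$ with $\cB\models\neg\theta$, lift it via the onto hypothesis to $\cA\in\mathrm{Mod}(T)$ with $\Phi^*(\cA)=\cB$ (or isomorphic to it), and then $\cA\models\Phi^{\sharp}(\theta)$ by Proposition~\ref{pr:fundamental} contradicts $\cB\models\neg\theta$. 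Since $\Phi^{\sharp}$ is computable and $T$ is decidable, this yields a decision procedure for $T'$. The main obstacle to state precisely is the surjectivity/lifting step: one must be careful that ``onto'' is up to isomorphism of $\sigma$-structures and that the Fundamental Property is isomorphism-invariant (which it is), and — if $\Phi$ is allowed quantifiers or the vocabularies contain function symbols — that $\Phi^*$ is genuinely defined on all of $\mathrm{Mod}(T)$; for the purely relational, quantifier-free setting described in Section~\ref{se:transductions} these concerns are routine, so I would simply flag that the whole argument hinges on the onto-condition and otherwise let Proposition~\ref{pr:fundamental} do the work.
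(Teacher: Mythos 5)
Your proposal is correct, and for parts (i), (iii) and (iv) it follows essentially the same route as the paper: pull membership questions back along the computable map $\Phi^{\sharp}$ via the Fundamental Property, and use the onto hypothesis exactly where you place it, namely to lift a countermodel $\cB\models T'\cup\{\neg\theta\}$ to some $\cA\models T$ with $\Phi^*(\cA)\simeq\cB$, which yields $T'\models\theta$ iff $T\models\Phi^{\sharp}(\theta)$ and hence decidability of $T'$. The only genuine divergences are in (ii) and (iii). For (ii) the paper's counterexample keeps the universe of $\langle\N,+,\times\rangle$ intact and interprets \emph{both} target relations as addition, so that $\Phi^*(\cA)$ is Presburger arithmetic with two names for $+$ (decidable), whereas you collapse to a definable one-element substructure with a finite, hence decidable, complete theory; both work, and yours is arguably the more elementary construction, while the paper's makes the point that information can be lost even without shrinking the universe. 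For (iii) your effectivity caveat — that the onto condition only asserts existence of $\theta$ with $\Phi^{\sharp}(\theta)$ logically equivalent to $\phi$, and that logical equivalence is merely semidecidable — is a real point that the paper's one-line proof glosses over; your fix (dovetail the search for a pair consisting of $\theta$ and a proof of $\Phi^{\sharp}(\theta)\leftrightarrow\phi$, which halts by the onto hypothesis and completeness, then decide $\theta\in T_1$) is correct and makes the argument fully rigorous.
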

\begin{proof}
(i): This follows from \ref{pr:Fund}.
$\cA \models \Phi^{\sharp}(\theta)$  iff $\Phi^*(\cA) \models \theta$,
hence, $\Phi^{\sharp}(\theta) \in T_1$ iff $\theta \in T_0$.
As $T_0$ is decidable, we can decide whether $\Phi^{\sharp}(\theta) \in T_0$,
and also, whether $\theta \in T_1$.
\\
(ii)
Let $\cA =\langle \N, +_N, \times_N \rangle$ where addition and multiplication are
ternary relations. $T_0(\cA)$ is undecidable by G\"odel's Theorem.

Now let $\Phi^*(\cA)$ be $\langle \N, +_A, \times_A \rangle$
where $+_A = +_N$ but $\times_A = +_N$.
$\Phi^*(\cA)$ is like Pressburger Arithmetic, but has two names ($+_A$ and $\times_A$)
for the same addition. Hence the complete theory of $\Phi^*(\cA)$ is decidable.
\\
(iii): If we assume $T_0$ to be decidable, we can only decide whether $\phi \in T_1$
for $\phi$ of the form $\phi = \Phi^{\sharp}(\theta)$.
\\
(iv): Let $\theta \in \FOL(\sigma)$. We want to check whether $T' \models \theta$.

Let $\cB \models T'$.

As $\Phi^*$ is onto, there is $\cA$ with  $\cA \models T$ and $\Phi^*(\cA) = \cB$.

Now we have, using Proposition \ref{pr:fundamental}
$$
\cB \models \neg\theta 
\mbox{  iff  }  
\cA \models \Phi^{\sharp}(\neg\theta)
\mbox{  iff  }  
T' \not \models \theta
\mbox{  iff  }  
T \not \models  \Phi^{\sharp}(\theta)
$$
But by assumption $T$ is decidable, hence $T'$ is decidable.
\hfill $\Box$
\end{proof}

\begin{remark}
The condition that $\Phi^{\sharp}$, resp. $\Phi^*$ have to be onto is often overlooked
in the literature\footnote{Theorems 1.36 and 1.37 as stated in \cite{balbiani2007logical} are only true
when one notices that their Theorems 1.20 and 1.21 imply that the particular transductions
used in Theorems 1.36 and 1.37 are indeed onto. However, this is not stated there.
}.
\end{remark}

We shall need one more observation:

\begin{lemma}
\label{le:deduction}
Let $T \subseteq \FOL(\tau)$ and $\phi \in \FOL(\tau)$.
Assume $T$ is decidable. Then $T \cup \{ \phi \}$ is also decidable.
\end{lemma}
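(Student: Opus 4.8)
The statement to prove is Lemma~\ref{le:deduction}: if $T \subseteq \FOL(\tau)$ is decidable and $\phi \in \FOL(\tau)$, then $T \cup \{\phi\}$ is also decidable.

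\textbf{The plan.} The key tool is the Deduction Theorem for first order logic: for any sentence $\psi \in \FOL(\tau)$,
$$
T \cup \{\phi\} \models \psi \quad\text{iff}\quad T \models (\phi \rightarrow \psi).
$$
(Strictly, one should be careful if $\phi$ has free variables; then one replaces $\phi$ by its universal closure $\forall \bar{x}\,\phi$, which changes nothing about the set of consequences of $T \cup \{\phi\}$ when we are testing sentences $\psi$.) Given this equivalence, the decision procedure for $T \cup \{\phi\}$ is immediate: on input a sentence $\psi$, form the sentence $\phi' \rightarrow \psi$ where $\phi'$ is the universal closure of $\phi$, and run the (assumed) decision procedure for $T$ on it; output its answer. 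Since $\phi'$ is a fixed sentence, $\phi' \rightarrow \psi$ is computable from $\psi$, so this is an effective reduction of the consequence problem for $T \cup \{\phi\}$ to the consequence problem for $T$.

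\textbf{Steps in order.} First I would recall the precise meaning of ``decidable'': the set $\{\psi \in \FOL(\tau) : T \models \psi\}$ (sentences without free variables, per the paper's convention) is computable. Second, I would invoke soundness and completeness of first order logic to pass freely between $\models$ and $\vdash$, and then cite the syntactic Deduction Theorem to get the displayed equivalence. Third, I would note that $\psi \mapsto (\phi' \rightarrow \psi)$ is a primitive recursive (certainly computable) map on formulas. Fourth, I would conclude that membership in the consequence set of $T \cup \{\phi\}$ is decided by composing this map with the decision procedure for $T$.

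\textbf{Main obstacle.} Honestly there is no real obstacle here; this is a routine closure property and the proof is a two-line reduction via the Deduction Theorem. The only point requiring a modicum of care is the bookkeeping around free variables in $\phi$ (replacing $\phi$ with its universal closure) and the observation that adding a single axiom, as opposed to an infinite recursive set, keeps the reduction trivially computable — indeed the same argument shows more generally that adjoining any \emph{finite} set of sentences, or even any decidable set of sentences whose conjunctions are uniformly computable, preserves decidability. I would state the lemma's proof in essentially this compressed form, perhaps remarking that it is the decidability analogue of the trivial fact that $T \cup \{\phi\}$ is axiomatizable whenever $T$ is.
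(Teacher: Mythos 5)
Your proof is correct and follows essentially the same route as the paper: the paper's own proof is a one-line appeal to the semantic Deduction Theorem, $T \cup \{\phi\} \models \theta$ iff $T \models (\phi \rightarrow \theta)$, which is exactly your reduction, merely stated without your (sensible) extra bookkeeping about universal closures and computability of the map $\psi \mapsto (\phi \rightarrow \psi)$.
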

\begin{proof}
This follows from the semantic version of the Deduction Theorem of First Order Logic:
$$
T  \cup \{ \phi \} \models \theta  \mbox{ iff  } T \models  (\phi \rightarrow \theta)
$$
\ \hfill $\Box$
\end{proof}

\subsection{Interpretability}
\ifskip\else
Recall that
a theory (set of formulas)a $T \subseteq \FOL(\tau)$
$T$ is {\em decidable}
if the set of consequences of $T$ 
is computable.
$T$ is  {\em axiomatizable} 
if the set of consequences of $T$ 
is computably enumerable.
$T$ is {\em complete}
if for every formula $\phi \in \FOL(\tau)$ without free variables either 
$T \models \phi$ or
$T \models \neg \phi$.
We note that if $T$ is axiomatizable and complete, the $T$ is decidable.
\fi 

A theory $T \subseteq \FOL(\tau)$
is  {\em  finitely axiomatizable} 
if there is a finite $T'$ which is axiomatizable
and has the same set of consequences as $T$.
$T$ is {\em essentially undecidable} 
if no theory $T' \subseteq \FOL(\tau)$ extending $T$
is decidable.
$T$ is {\em completely undecidable} 
if there is a finite subtheory $T' \subseteq T$ which is essentially undecidable.

Let $S \in \FOL(\sigma)$ and 
$T \in \FOL(\tau)$ be two theories over disjoint vocabularies. 
$S$ is {\em interpretable} in $T$,
if there exists a first order
translation scheme
$\Phi$ such that 
$$
\Phi^*(T) \models S.
$$
$S$ is {\em weakly interpretable} in $T$,
if there exists a theory $T'$ over the same vocabulary as $T$,
and a
translation scheme
$\Phi$ such that 
$$
\Phi^*(T') \models S.
$$

\begin{lemma}[{\cite[Statement (e3) on page 602]{bk:Beth}}]
\label{le:beklemishev}
Assume $S$ is a theory which is
\begin{enumerate}[(i)]
\item
finitely axiomatizable,
\item
essentially undecidable, and
\item
weakly interpretable in a theory $T$ using a translation scheme $\Phi$.
\end{enumerate}
Then $T$, and every subtheory of $T$, is undecidable.
\\
Moreover, there is a theory $T'$ with $T \subseteq T'$ and
with the same vocabulary as $T$, which is essentially undecidable.
\end{lemma}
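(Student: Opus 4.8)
The plan is to transfer essential undecidability through the interpretation and then to spread undecidability downward, the second step using that adjoining a \emph{single} sentence to a decidable theory leaves it decidable (Lemma~\ref{le:deduction}); write $\mathrm{Cn}(\cdot)$ for deductive closure. Since $S$ is finitely axiomatizable, I would first fix one $\sigma$-sentence $\alpha$ with $\mathrm{Cn}(\{\alpha\})=\mathrm{Cn}(S)$. Weak interpretability of $S$ in $T$ provides, by definition, a consistent theory $T'$ with $T\subseteq T'$ in the vocabulary $\tau$ of $T$, and a translation scheme $\Phi$ from $\tau$ to $\sigma$, with $\Phi^*(T')\models S$; let $\delta(\bar x)$ be the domain formula of $\Phi$. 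For $\Phi^*(\cA)$ to be a $\sigma$-structure one needs $\delta^{\cA}\neq\emptyset$, so $T'\models\exists\bar x\,\delta(\bar x)$; and since $\Phi^*(\cA)\models\alpha$ for every $\cA\models T'$, the Fundamental Property (Proposition~\ref{pr:Fund}) yields $T'\models\Phi^{\sharp}(\alpha)$. I would then set $\beta:=\Phi^{\sharp}(\alpha)\wedge\exists\bar x\,\delta(\bar x)$, a single $\tau$-sentence with $T'\models\beta$.

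\textbf{Step 1 (transfer): $\mathrm{Cn}(\{\beta\})$ is essentially undecidable.}
Given any consistent $\tau$-theory $R$ with $R\supseteq\mathrm{Cn}(\{\beta\})$, every $\cA\models R$ satisfies $\beta$, hence $\delta^{\cA}\neq\emptyset$, so $\Phi^*(\cA)$ is a genuine $\sigma$-structure and, by Proposition~\ref{pr:Fund}, $\Phi^*(\cA)\models\alpha$, i.e.\ $\Phi^*(\cA)\models S$. I would put $S_R:=\{\theta\in\FOL(\sigma):R\models\Phi^{\sharp}(\theta)\}$ and check, using Proposition~\ref{pr:Fund} throughout, that $S_R$ is deductively closed (if $\theta_1,\dots,\theta_n\in S_R$ and $\theta_1\wedge\dots\wedge\theta_n\models\psi$, then every $\cA\models R$ gives $\Phi^*(\cA)\models\theta_i$ for all $i$, hence $\Phi^*(\cA)\models\psi$, hence $\cA\models\Phi^{\sharp}(\psi)$), that $S\subseteq S_R$ (if $\alpha\models\theta$ then $\Phi^*(\cA)\models\theta$ for each $\cA\models R$), and that $S_R$ is consistent (pick $\cA\models R$; then $\Phi^*(\cA)\models S_R$). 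So $S_R$ is a consistent extension of the essentially undecidable theory $S$, hence undecidable. Since $S_R\models\theta$ iff $R\models\Phi^{\sharp}(\theta)$ and $\theta\mapsto\Phi^{\sharp}(\theta)$ is a computable operation on formulas, the consequence problem of $S_R$ reduces to that of $R$, so $R$ is undecidable. As $R$ was an arbitrary consistent extension, $\mathrm{Cn}(\{\beta\})$ is essentially undecidable.

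\textbf{Step 2 (spreading).}
Now let $U\subseteq T$ be any subtheory in the vocabulary $\tau$. Then $U\subseteq T\subseteq T'$, so $U\cup\{\beta\}$ is consistent and $\mathrm{Cn}(U\cup\{\beta\})\supseteq\mathrm{Cn}(\{\beta\})$, which is undecidable by Step~1. Because $\beta$ is a single sentence, $U\cup\{\beta\}\models\theta$ iff $U\models(\beta\to\theta)$ (semantic deduction theorem, cf.\ Lemma~\ref{le:deduction}), so if $U$ were decidable then $\mathrm{Cn}(U\cup\{\beta\})$ would be decidable --- a contradiction. Hence $U$ is undecidable; taking $U=T$, and then $U$ an arbitrary subtheory of $T$, gives the first assertion. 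For the ``moreover'', $T'$ itself is a consistent extension of $\mathrm{Cn}(\{\beta\})$, so every consistent extension of $T'$ extends $\mathrm{Cn}(\{\beta\})$ and is therefore undecidable by Step~1; thus $T'$ is essentially undecidable, contains $T$, and has the vocabulary of $T$.

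\textbf{Where the work lies.}
The crux is Step~1: one must produce a theory $S_R$ that is at once a \emph{consistent} extension of $S$ (so that essential undecidability of $S$ can be invoked) and has its consequence problem \emph{reducible to} that of $R$ (so that undecidability propagates back), and keeping both under control rests on the Fundamental Property of translation schemes together with the hypothesis --- easy to forget --- that the interpreted domain is provably nonempty. The second indispensable, easily overlooked ingredient is the \emph{finite} axiomatizability of $S$: it is precisely what lets one replace ``$U\cup S\models\theta$'' by the single query ``$U\models\beta\to\theta$'' in Step~2, so without it the descent to all subtheories of $T$ would break down.
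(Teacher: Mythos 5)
Your proof is correct, and in fact the paper offers no proof of this lemma at all: it is quoted verbatim from the literature (the citation to Beth; the result is the classical Tarski--Mostowski--Robinson theorem on interpretations of finitely axiomatizable, essentially undecidable theories). What you have written is essentially the standard TMR argument: compress $S$ into one sentence $\alpha$, pull it back to a single $\tau$-sentence $\beta=\Phi^{\sharp}(\alpha)\wedge\exists\bar x\,\delta(\bar x)$, show $\mathrm{Cn}(\{\beta\})$ is essentially undecidable by pushing any consistent extension $R$ forward to the consistent extension $S_R=\{\theta:R\models\Phi^{\sharp}(\theta)\}$ of $S$ via the Fundamental Property (Proposition~\ref{pr:Fund}), and then spread undecidability to $T$ and all its subtheories through the deduction theorem (Lemma~\ref{le:deduction}), with the ``moreover'' clause witnessed by the $T'$ from weak interpretability. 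Two remarks rather than objections: first, you correctly read ``weakly interpretable'' in the standard sense (a \emph{consistent} $T'$ with $T\subseteq T'$), which the paper's own definition fails to state but which is indispensable --- with the paper's literal definition the lemma would be false; second, your explicit handling of the nonemptiness of the interpreted domain via the conjunct $\exists\bar x\,\delta(\bar x)$, and your isolation of where finite axiomatizability is actually used, are exactly the points a careful write-up needs and that the paper leaves to the cited source.
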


Let $M$ be a class of $\tau$-structures closed under isomorphisms.
A 
$\tau-\sigma$-translation scheme $\Phi$ is {\em invertible on $M$}
if there  exists a
$\sigma-\tau$-translation scheme $\Psi$ such that for all $\cA \in M$
$$
\Psi^*(\Phi^*(\cA)) \simeq \cA
$$
and for all $\cB \in \Phi^*(M)$
$$
\Phi^*(\Psi^*(\cB)) \simeq \cB.
$$
Clearly, if $\Phi$ is invertible on $M$,
$\Phi^*|_M: M \rightarrow  \Phi^*(M)$ is onto.

\begin{lemma}
\label{le:interpretability}
Let $\cA$ be a $\sigma$-structure and 
$\cA'$ be a $\tau$-structure, and let $\Phi$ be a $\tau-\sigma$-translation scheme such that
$\Phi^*(\cA') = \cA$. Let $S$ be the complete theory of $\cA$. Assume $S$ is undecidable.
Let $T \subseteq \FOL(\tau)$ with $\cA' \models T$. 
and assume that $\Phi$ is invertible on $M = \{ \cA : \cA \models T \}$.
Then
\begin{enumerate}[(i)]
\item
$S$ is weakly interpretable in $T$, and 
\item
$T$ is undecidable.
\end{enumerate}
\end{lemma}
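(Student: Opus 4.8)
The plan is to treat the two assertions separately; only (ii) will make essential use of the invertibility hypothesis, whereas (i) follows directly from the Fundamental Property of translation schemes (Proposition~\ref{pr:fundamental}).

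For (i) I would take as witnessing theory the complete first order theory $T' := \mathrm{Th}(\cA')$. Since $\cA' \models T$, this is a theory over $\tau$ extending $T$, hence admissible for weak interpretability, so it only remains to verify $\Phi^*(T') \models S$. Let $\cB \models T'$, i.e.\ $\cB \equiv \cA'$. For every $\sigma$-sentence $\theta$ the Fundamental Property gives
\[
\Phi^*(\cB) \models \theta \iff \cB \models \Phi^{\sharp}(\theta) \iff \cA' \models \Phi^{\sharp}(\theta) \iff \cA \models \theta ,
\]
where the middle equivalence uses $\cB \equiv \cA'$ and the last uses $\Phi^*(\cA') = \cA$. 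Hence $\Phi^*(\cB) \equiv \cA$, so $\Phi^*(\cB) \models S$, which is (i). The same computation records the by-product that $\theta \in S \iff T' \models \Phi^{\sharp}(\theta)$ for every $\sigma$-sentence $\theta$, since $T' = \mathrm{Th}(\cA')$ and $\Phi^*(\cA') = \cA$.

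For (ii) I would argue by reduction. From $\Phi^*(\cA') = \cA$ alone, $\cA$ is interpreted in $\cA'$, so Lemma~\ref{le:onto}(i) shows that decidability of $\mathrm{Th}(\cA')$ would force decidability of $S = \mathrm{Th}(\Phi^*(\cA'))$; as $S$ is undecidable, $\mathrm{Th}(\cA')$ is undecidable. The point of invertibility is to carry this down to the (generally incomplete) theory $T$. With the inverse scheme $\Psi$ one has $\Psi^*(\Phi^*(\mathcal{C})) \simeq \mathcal{C}$ for all $\mathcal{C} \models T$ and $\Phi^*(\Psi^*(\mathcal{D})) \simeq \mathcal{D}$ for all $\mathcal{D} \in \Phi^*(M)$, so $\Phi^*$ restricts to a bijection $M = \mathrm{Mod}(T) \to \Phi^*(M)$ with inverse $\Psi^*$; in particular $\Phi^*|_M$ is \emph{onto} $\Phi^*(M)$. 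Taking for $T'$ the $\sigma$-theory of the class $\Phi^*(M)$, invertibility makes $\mathrm{Mod}(T') = \Phi^*(M)$ and $\Phi^*|_M \colon \mathrm{Mod}(T) \to \mathrm{Mod}(T')$ onto, so the contrapositive of Lemma~\ref{le:onto}(iv) reduces the claim to showing $T'$ undecidable. This is where $\cA$ and $\Psi$ re-enter: $\cA \in \Phi^*(M)$, and $\Psi$ interprets $\cA'$ inside the structures of $\Phi^*(M)$, so via the equivalence recorded in (i) the undecidability of $S$ — equivalently, of the field theory it encodes — is inherited by $T'$, and hence $T$, and by the same argument every subtheory of $T$, is undecidable.

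The step I expect to be the genuine obstacle — and precisely the condition routinely dropped in the literature, in the spirit of the remark following Lemma~\ref{le:onto} — is the \emph{surjectivity} of $\Phi^*|_M$ onto the relevant model class. It is exactly the existence of the inverse scheme $\Psi$ — in the geometric applications, a first order coordinatization theorem of Szmielew type reconstructing a Pappian plane from its coordinate field — that forces this surjectivity, and thereby lets the undecidability of $S$ (or of the underlying theory of fields) say something about the incomplete theory $T$, rather than only about the complete theory $\mathrm{Th}(\cA')$, which by itself is inert. Everything else is routine bookkeeping with the Fundamental Property and with Lemmas~\ref{le:onto} and~\ref{le:deduction}.
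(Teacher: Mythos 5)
Part (i) of your proposal is correct and is in substance the paper's own argument: the paper also takes $T'=\mathrm{Th}(\cA')$ as the witnessing theory, and your computation with Proposition~\ref{pr:fundamental}, showing $\Phi^*(\cB)\equiv\cA$ for every $\cB\models T'$, is a complete and clean way to verify $\Phi^*(T')\models S$; you are also right that invertibility is not needed for this half.

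Part (ii) is where your route genuinely diverges from the paper's, and it has a gap. The paper does not reprove (ii) by a reduction through Lemma~\ref{le:onto}(iv); it deduces (ii) from (i) via Lemma~\ref{le:beklemishev}, i.e.\ the Tarski--Mostowski--Robinson transfer principle: weak interpretability of a finitely axiomatizable, essentially undecidable theory makes $T$ and every subtheory of $T$ undecidable. Your substitute argument breaks at two points. First, with $T'$ defined as the $\sigma$-theory of the class $\Phi^*(M)$, the claim that invertibility yields $\mathrm{Mod}(T')=\Phi^*(M)$ is unjustified: invertibility gives a bijection up to isomorphism between $M$ and $\Phi^*(M)$, but it does not make $\Phi^*(M)$ an elementary class, and in general one only has $\mathrm{Mod}(T')\supseteq\Phi^*(M)$, so the surjectivity hypothesis of Lemma~\ref{le:onto}(iv) is not available. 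Second, and more seriously, the closing step ``the undecidability of $S$ \dots is inherited by $T'$, and hence $T$'' is precisely the nontrivial transfer that has to be proved, and it does not follow from the stated hypotheses: an incomplete decidable theory can have a model whose complete theory is undecidable, so undecidability of the complete theory $S$ of the single structure $\cA\in\Phi^*(M)$ by itself says nothing about the (incomplete) theory of the class. Your parenthetical ``equivalently, of the field theory it encodes'' shows you are implicitly importing the geometric application rather than arguing from the hypotheses of the lemma. This transfer is exactly what Lemma~\ref{le:beklemishev} supplies, using essential undecidability together with finite axiomatizability of the weakly interpreted theory, and it is the ingredient your proof never invokes; to match the paper you should conclude (ii) directly from (i) together with Lemma~\ref{le:beklemishev}.
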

\begin{proof}
(ii) follows from (i) and Lemma \ref{le:beklemishev}.
\\
To see (i) we use that $\cA' \models T$ and use as $T'$ the complete theory of $\cA'$.
Now the invertibility of $\Phi^*$ allows us to complete the argument.
\hfill $\Box$
\end{proof}

In \cite{rautenberg1962unentscheidbarkeit} Lemma 
\ref{le:interpretability} is stated without the invertibility assumption
as the {\em Interpretationstheorem}. In the particular application in
\cite{rautenberg1962unentscheidbarkeit}, $S$ is the complete theory of the field of rational numbers,
which is undecidable by Proposition \ref{th:JRobinson}. The translation scheme $\Phi$ is vaguely sketched as $PP$,
and its inverse is not defined at all. We will show in the next section that both $PP$ and $RR$ are first order definable.
Theorem \ref{th:szmielew} implies that both $PP$ and $RR$
are invertible. This allows us to complete the gap 
in \cite{rautenberg1962unentscheidbarkeit}
in the proof of Theorem \ref{th:affine-undec}.
However, Theorem \ref{th:szmielew} only appears explicitly in \cite{blumenthal1980modern} and in
\cite{szmielew1983affine} and were not available in 1962.

\section{The role of coordinates}
\label{se:coord}
\subsection{Analytic geometry over fields of characteritic $0$}
\label{se:analytic}
Given a field $\cF$ or an ordered field $\cO$
we define the following relations in $\cF$ ($\cO$), where elements 
$P=(x,y)$ are called {\em points}
and $ \ell = (a,b,c) =\{ (x,y) : ax+by=c \}$ are called {\em lines}.
Similarly, we write $P_i =(x_i,y_i)$
and $\ell_i =(a_i,b_i,c_i)$.
In $\tau_{field}$ points are defined using a quantifier-free formula and lines are defined 
using an existential formula.
In $\tau_{f-field}$ both are defined using a quantifier-free formula.

\begin{description}
\item[\bf Incidence:]
$P \in \ell$ iff $ax+by+c=0$.
In $\tau_{f-field}$ is a quantifier-free formula.
\item[\bf Equidistance:]
$Eq(P_1,P_2, P_3, P_4)$ iff
$ 
(x_1 -x_2)^2 + (y_1 -y_2)^2 =
(x_3 -x_4)^2 + (y_3 -y_4)^2 
$.
In $\tau_{f-field}$ is a quantifier-free formula.
\item[\bf Orthogonality:]
$Or(\ell_1, \ell_2)$ (or $\ell_1 \perp \ell_2$) iff $a_1a_2 +b_1b_2=0$.
In $\tau_{f-field}$ is a quantifier-free formula.
\end{description}
For equiangularity we have to work a bit more.
Let $\ell =(a,b,c)$ be a line. The {\em slope of $\ell$} is defined as $sl(\ell)= \frac{a}{b}$.
Now let $\ell_1, \ell_2$ be two lines intersection at the point $p$, 
let and $k$ a line with $Or(k, \ell_1)$
intersecting $\ell_i$ at $Q_i$ ($=1,2)$.
The angle $\angle(Q_1, P, Q_2)$ is an acute angle.
For acute angles
we define 
$$
\tan(Q_1,P,Q_2) = \left| \frac{sl(\ell_1) - sl(\ell_2)}{1+ sl(\ell_1)sl(\ell_2)} \right|.
$$
We now give a quantifier-free definition of equiangularity in rectangular triangles.

\begin{description}
\item[\bf Rectangular:]
$rectangular(P_1, P_2, P_3)$  iff $Or((P_1, P_2), (P_1, P_3))$.
\item[\bf Equiangular:]
Assume we have two rectangular triangles
$P_1P_2P_3$ and
$Q_1Q_2Q_3$
with
$rectangular(P_1, P_2, P_3)$  
and
$rectangular(Q_1, Q_2, Q_3)$  
we define
$An(P_1,P_2,P_3,Q_1,Q_2,Q_3)$ iff  $\tan(P_1,P_2,P_3) =  \tan(Q_1,Q_2,Q_3)$.
\\
In $\tau_{f-field}$ this is a quantifier-free formula.
\end{description}

\ifskip
\else
The following are definable using $Eq$ and  $Or$ with existential formulas over $\tau_{f-field}$:
\begin{description}
\item[\bf L-equidistant:] $Leq(P_1, \ell, P_2)$ iff there exists points $Q_1, Q_2 \in \ell$ such that
$Or((P_1,Q_1), \ell)$,
$Or((P_2,Q_2), \ell)$ and 
$Eq(P_1,Q_1, P_2, Q_2)$. 
\item[\bf Symmetric Line:]
$SymLine(P_1, \ell, P_2)$ iff there is a point $Q \in \ell$ such that
$Or((P_1,q), \ell)$,
$Or((P_2,q), \ell)$ and 
$Eq(P_1,q, P_2, q)$. 
\end{description}
\fi 

If the field is an ordered field we define additionally:
\begin{description}
\item[\bf Colinear:]
$Col(P_1, P_2, P_3)$ iff $\exists \ell (\bigwedge_{i=1}^3 P_i \in \ell)$.
\\
For $\ell =(a,b,c)$ and $P_i =(x_i, y_i)$ we can write this as
$$
\exists a, b, c (\bigwedge_{i=1}^3 ax_i + bxy_i +c = 0 )
$$
which is equivalent to
$$
\det
\left(
\begin{array}{ccc}
x_1 & y_1 & 1 \\
x_2 & y_2 & 1 \\
x_3 & y_3 & 1 
\end{array}
\right)
= 0
$$
which in $\tau_{f-field}$ is a quantifier-free formula.
\item[\bf Betweenness:]
$
Be(P_1, P_2, P_3)$ iff 
\begin{gather}
Col(P_1, P_2, P_3)
 \wedge 
\notag \\
\left[ 
\left(
(x_1 \leq x_2 \leq x_3) \wedge (y_1 \leq y_2 \leq y_3)
\right)
\vee
\left(
(x_3 \leq x_2 \leq x_1) \wedge (y_3 \leq y_2 \leq y_1)
\right)
\right]
\notag
\end{gather}
\\
which in $\tau_{f-ofield}$ is a quantifier-free formula.
\end{description}

\begin{definition}
\label{transduction-PP}
Given a field with universe $A$, let $Points(\cF) = A^2$, and $Lines(\cF) = A^3$. 
For a field $\cF$, respectively an ordered field $\cO$, we define
\begin{enumerate}[(i)]
\item
$\Pi_{\in}(\cF)$ to be the two sorted structure
$$
\langle Points(\cF), Lines(\cF); \in_{\cF} \rangle.
$$
The quantifier-free first order translation scheme $PP_{\in} = (Lines, \in)$ 
satisfies
$PP_{\in}^*(\cF) = \Pi_{\in}(\cF)$. 
\item
$\Pi_{wu}(\cF)$ to be the two sorted structure
$$
\langle Points(\cF), Lines(\cF); \in_{\cF}, Eq_{\cF}, Or_{\cF} \rangle.
$$
The quantifier-free first order translation scheme $PP_{wu} = (Lines, \in, Eq, Or)$ 
satisfies
$PP_{wu}^*(\cF)= \Pi_{wu}(\cF)$.
\item
$\Pi_{hilbert}(\cO)$ to be the two sorted structure
$$
\langle Points(\cO), Lines(\cO); \in_{\cO}, Eq_{\cO}, An(\cO), Be_{\cO} \rangle.
$$
The quantifier-free first order translation scheme $PP_{hilbert} = (Lines, \in, Eq, An, Be)$ 
satisfies
$PP_{hilbert}^*(\cF)= \Pi_{hilbert}(\cF)$.
\end{enumerate}
\end{definition}

This gives us:
\begin{proposition}
\label{pr:PP}
The translation schemes 
$PP_{\in}$,
$PP_{wu}$ and
$PP_{hilbert}$
are quantifier-free first order translation schemes.
\end{proposition}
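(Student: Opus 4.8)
The plan is to prove Proposition~\ref{pr:PP} by direct inspection of the translation formulas listed in Definition~\ref{transduction-PP}, using the explicit quantifier-free definitions assembled in Section~\ref{se:analytic}. Recall that a translation scheme with a two-sorted target consists of a domain formula for each sort together with one formula for each basic relation of the target vocabulary. For all three schemes $PP_{\in}$, $PP_{wu}$, $PP_{hilbert}$ the $\mathrm{Points}$-sort is interpreted on $A^2$ and the $\mathrm{Lines}$-sort on $A^3$, so the two domain formulas are $\top$ with $2$ and $3$ free variables respectively; these are trivially quantifier-free. Hence everything reduces to checking that each relation formula is quantifier-free, and this must be done over the functional vocabulary $\tau_{f-field}$ (respectively $\tau_{f-ofield}$) as stipulated in the definition.

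First I would dispose of $PP_{\in}$: the incidence relation is defined by $ax+by+c=0$, a formula with $2+3$ free variables that is a single atomic formula of $\tau_{f-field}$, hence quantifier-free; this already gives the claim for $PP_{\in}$. For $PP_{wu}$ one adds the formula for $Eq$, namely $(x_1-x_2)^2+(y_1-y_2)^2=(x_3-x_4)^2+(y_3-y_4)^2$, and the formula for $Or$, namely $a_1a_2+b_1b_2=0$; both were displayed in Section~\ref{se:analytic} as polynomial equations, hence are quantifier-free in $\tau_{f-field}$, and $PP_{wu}$ follows.

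For $PP_{hilbert}$ the only relations needing a word are $An$ and $Be$, since in their naive form they involve either a quantifier or an order. For betweenness, the collinearity predicate $Col(P_1,P_2,P_3)$, although introduced via an existential quantifier over lines, is equivalent (as shown in Section~\ref{se:analytic}) to the vanishing of the $3\times 3$ coordinate determinant, a polynomial equation; $Be$ is then this equation conjoined with a Boolean combination of $\leq$-atoms in the coordinates, hence quantifier-free in $\tau_{f-ofield}$. For angle congruence one uses that in $\tau_{f-field}$ the slope $sl(\ell)=a/b$ is a term (the unary inverse is available), so the tangent of an acute angle is given by a term, the $rectangular$ predicate is an instance of $Or$, and $An$ reduces on rectangular triangles to the equation $\tan(P_1,P_2,P_3)=\tan(Q_1,Q_2,Q_3)$ together with the two $rectangular$-conditions, all quantifier-free (the absolute value in $\tan$ being itself quantifier-free definable in an ordered field, or simply avoided by clearing it). Assembling these observations yields that all translation formulas of $PP_{hilbert}$ are quantifier-free. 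The only genuine obstacle is precisely this bookkeeping: one must verify that moving from the relational vocabularies $\tau_{field}$, $\tau_{ofield}$ to the functional ones removes the existential quantifiers that are unavoidable in the relational presentation — incidence against a coordinate-free line, collinearity, and division in slopes — after which nothing beyond inspection of the formulas already written down is required.
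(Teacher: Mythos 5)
Your proposal is correct and follows essentially the same route as the paper: the paper's (abbreviated) proof simply observes that all the relations $\in, Eq, Or, An, Be$ are quantifier-free definable in the functional vocabulary of (ordered) fields, relying on the explicit formulas of Section \ref{se:analytic}, which is exactly the inspection you carry out in more detail (including the determinant form of collinearity and the term-definability of slopes). Nothing further is needed.
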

\ifskip
\else
\begin{proof}
All the relations $\in, Eq, An, Be, Or$ are quantifier-free first order definable in the 
vocabulary of (ordered) fields.
\hfill $\Box$
\end{proof}
\fi 

\subsection{Properties of $PP_{\in}$ and $PP_{wu}$}

We summarize now the properties needed of these translation schemes and their induced
transductions and translations. Here, and in the next section we call these properties
the {\em correctness} of the translation schemes, because they state that they behave as needed
to prove undecidability results.

\begin{theorem}[Correctness of $PP_{\in}$ and $PP_{wu}$]
\label{th:cart-a}
\ 
\begin{enumerate}[(i)]
\item
(\cite[14.1]{bk:Hartshorne2000})
If $\cF$ is a field, then $PP_{\in}^*(\cF)$ satisfies the incidence axioms $(I_1)- (I_3)$,
the Parallel Axiom and the Pappus Axiom.
\item
(\cite[14.4]{bk:Hartshorne2000})
If $\cF$ is a field of characteristic $0$, then $PP_{\in}^*(\cF)$ satisfies additionally the Axioms of Infinity,
i.e., is an infinite Pappian plane.
\item
(\cite{bk:Wu1994})
If $\cF$ is a Pythagorean field of characteristic $0$, then $PP_{Wu}^*(\cF)$ satisfies
(I-1)-(I-3), 
(O-1) -(O-5),
the Parallel Axiom, the Axiom of Infinity, the Axiom of Desargues and the Axiom of Symmetric Axis,
which are axioms of a {\em metric Wu plane}.
\item
(\cite{alperin2000mathematical})
If $\cF$ is a Vieta field, then $PP_{Wu}^*(\cF)$ satisfies the Huzita-Hatori axioms (H-1)-(H-7).
\end{enumerate}
\end{theorem}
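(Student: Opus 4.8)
The plan is to verify all four claims by explicit computation in coordinates, exactly as in the cited references; the only non-routine point is to identify precisely which field-theoretic closure hypothesis is forced by each geometric axiom. Throughout, a point is an element of $A^2$, a line is a triple $(a,b,c)\in A^3$ with $(a,b)\neq(0,0)$ taken up to scaling, and incidence is $ax+by+c=0$. For (i): axioms (I-1)--(I-3) are elementary linear algebra over $\cF$ (two distinct points lie on a unique line by solving a $2\times 2$ system, using that $\cF$ has no zero divisors; every line has two points by varying the free parameter, using $|A|\geq 2$; the points $(0,0),(1,0),(0,1)$ are not collinear). The Parallel Axiom reduces to the fact that $\ell_1\parallel\ell_2$ iff the direction vectors $(a_1,b_1)$ and $(a_2,b_2)$ are proportional, so a prescribed direction through a prescribed point gives a unique line. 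The one substantive ingredient is (Pappus): after an affine change of coordinates sending the two given lines to the coordinate axes (which $PP_{\in}^*$ respects, since affine maps are incidence-automorphisms), the three parallelism hypotheses collapse to a single determinantal identity that holds precisely because multiplication in $\cF$ is commutative. This is \cite[14.1]{bk:Hartshorne2000}.

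For (ii) I would observe that the points $A_1,A_2,\ldots$ produced by the construction in (InfLines), after the same normalisation, have coordinates that are affine-linear in the index with nonzero slope, so $A_n=A_m$ would force $(n-m)\cdot 1 = 0$ in $\cF$, impossible in characteristic $0$; hence every line is infinite (\cite[14.4]{bk:Hartshorne2000}). For (iii) the added relation is $Or(\ell_1,\ell_2)$ iff $a_1a_2+b_1b_2=0$. Axioms (O-1)--(O-4) are again immediate: symmetry is visible; the unique perpendicular to a line through a point is a rank-one linear condition; two lines perpendicular to a common line have proportional directions, hence are parallel; and every point lies on the non-isotropic line $(1,0,c)$ since $1\neq 0$. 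Axiom (O-5), concurrence of the three altitudes, is once more a polynomial identity over any commutative field. The single place where a closure hypothesis enters is the Axiom of Symmetric Axis: the bisector of two intersecting non-isotropic lines is obtained by normalising their direction vectors, which requires a square root of a sum of two squares in $\cF$ --- exactly the Pythagorean property; characteristic $0$ again supplies the infinitude of lines. This is the content of \cite{bk:Wu1994}.

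For (iv) I would go through (H-1)--(H-7) in coordinates. A fold placing a point $P$ onto a line $\ell$ is the perpendicular bisector of $P$ and some point of $\ell$, and the family of such folds is exactly the family of tangent lines to the parabola with focus $P$ and directrix $\ell$; with this dictionary, (H-1), (H-2), (H-4), (H-7) are linear, and (H-3), (H-5) are quadratic (a bisector, respectively the two tangents from an external point), so they need only that $\cF$ be Pythagorean --- which a Vieta field is. The decisive axiom is (H-6): a fold placing $P_1$ onto $\ell_1$ and $P_2$ onto $\ell_2$ is a common tangent to the two associated parabolas, and writing the tangency condition down yields a polynomial equation of degree at most $3$ in the slope of the fold, which has a root precisely when $\cF$ is a Vieta field. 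This is \cite{alperin2000mathematical}.

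The main obstacle is not any one computation but keeping the correspondence exact: every geometric existence statement must be shown to translate into a polynomial system whose solvability is \emph{equivalent} to the stated closure property (no zero divisors; characteristic $0$; Pythagorean; Vieta), so that $PP_{\in}^*$ and $PP_{wu}^*$ really do land inside --- and, as the later undecidability arguments will require, eventually onto --- the intended classes of planes. Reducing (Pappus), (O-5), and especially (H-6) to an explicit identity or cubic is where the actual labour sits, and it is exactly what we are importing from \cite{bk:Hartshorne2000,bk:Wu1994,alperin2000mathematical}.
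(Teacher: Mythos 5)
Your proposal is correct and follows essentially the same route as the paper, which proves Theorem \ref{th:cart-a} simply by appeal to the coordinate verifications in \cite[14.1, 14.4]{bk:Hartshorne2000}, \cite{bk:Wu1994} and \cite{alperin2000mathematical}; your outline of which closure hypothesis (commutativity for Pappus, characteristic $0$ for (InfLines), the Pythagorean property for the symmetric axis, the Vieta property for (H-6) via the common-tangent cubic) carries each axiom is consistent with what those sources establish. Since the substantive computations are deferred to exactly the references the paper itself cites, there is no gap to report.
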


\begin{theorem}[Correctness of $PP_{hilbert}$]
\label{th:cart-o}
\ 
\begin{enumerate}[(i)]
\item
(\cite[17.3]{bk:Hartshorne2000})
If $\cO$ is an ordered Pythagorean field, then $PP_{hilbert}^*(\cO)$ 
satisfies
(I-1) - (I-3), 
(B-1) - (B-4)
(C-1) -(C-6)
and the Parallel Axiom,
which are axioms of
a {\em Hilbert Plane 
which satisfies the parallel axiom}.
\item
(\cite[17.3]{bk:Hartshorne2000})
If $\cO$ is an ordered Euclidean field, then $PP_{hilbert}^*(\cF)$ is a Hilbert Plane 
which satisfies the parallel axiom and Axiom E.
\end{enumerate}
\end{theorem}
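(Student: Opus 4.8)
The plan is to verify directly, in the coordinate structure $PP_{hilbert}^*(\cO)$, each axiom occurring in the definitions of a P-Hilbert plane and of a Euclidean plane, using the explicit coordinate descriptions of $\in$, $Be$, $Eq$ and $An$ from Section~\ref{se:analytic}; this is precisely the content of \cite[Chapter~17]{bk:Hartshorne2000}, and what follows is the skeleton. The incidence and parallel axioms cost nothing: the $\tau_\in$-reduct of $PP_{hilbert}^*(\cO)$ is literally $PP_\in^*(\cO)$ (same sorts, same incidence formula, by Definition~\ref{transduction-PP}), and every ordered field has characteristic $0$, so Theorem~\ref{th:cart-a}(i),(ii) already delivers (I-1)--(I-3), (ParAx) and (InfLines).

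For betweenness, $Be$ is defined from $Col$ and the order $\leq$ of $\cO$, so (B-1)--(B-3) reduce to elementary facts about the linear order of the coordinates of collinear points. The substantial case is Pasch (B-4): parametrise the segment $AB$ and the lines $AC$, $BC$ affinely, translate the hypothesis $Be(A,D,B)$ into an inequality between affine parameters, and invoke the order axioms of $\cO$ --- an affine, purely order-theoretic surrogate for the intermediate value theorem --- to produce the asserted intersection point of $\ell$ with $AC$ or $BC$.

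For congruence, $Eq$ is equality of squared distances $(x_1-x_2)^2+(y_1-y_2)^2$, so (C-0), (C-2), (C-3) are an algebraic identity plus sign bookkeeping in $\leq$, and (C-5) is transitivity of equality of $\tan$-values. The Pythagorean hypothesis enters through the ``laying-off'' axioms (C-1) and (C-4): placing a point at a prescribed distance $\sqrt{(x_1-x_2)^2+(y_1-y_2)^2}$ along a line with direction vector $(p,q)$ forces coordinates involving $\sqrt{\bigl((x_1-x_2)^2+(y_1-y_2)^2\bigr)/(p^2+q^2)}$, which is a square in a Pythagorean field (numerator and denominator are sums of two squares, hence squares), the ordering selecting the root on the correct side; the construction for angles is analogous. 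Side-Angle-Side (C-6) is the one genuinely geometric axiom: one exhibits from the SAS data an isometry of $\cO^2$ --- a translation composed with a rotation $\left(\begin{smallmatrix} a & -b \\ b & a\end{smallmatrix}\right)$, $a^2+b^2=1$, and possibly a reflection --- carrying $A,B$ and the ray $\vec{AC}$ to $A',B'$ and $\vec{A'C'}$ (the rotation again using that sums of two squares are squares in $\cO$), and then reads off the three claimed congruences from the invariance of $Eq$ and $An$ under isometries. This proves (i).

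For (ii) one adds Axiom E. Two circles $\Gamma,\Delta$ are quadratic loci; subtracting their equations yields the radical axis, and substituting it into one circle equation gives a single quadratic whose discriminant is, by a short computation, non-negative exactly when $\Gamma$ has a point inside and a point outside $\Delta$ --- the hypothesis of (AxE). Solving for the intersection then requires the square root of that non-negative discriminant, which is why (ii) demands an ordered \emph{Euclidean} field rather than merely a Pythagorean one. The main obstacle is concentrated in Pasch (B-4) and Side-Angle-Side (C-6): the former needs the order-theoretic case distinctions carried out with care, the latter the coordinate realisation of the isometry group of $\cO^2$ and the check that it moves SAS configurations as required --- both standard in \cite{bk:Hartshorne2000}, but they are the only steps that are not pure bookkeeping; the precise role of the field hypotheses is pinned down by the two square-root demands, Pythagorean sufficing for (C-1), (C-4), (C-6) and the radical-axis discriminant forcing Euclidean in part~(ii).
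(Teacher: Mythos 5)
Your proposal is correct and coincides with the paper's treatment: the paper gives no independent argument for Theorem~\ref{th:cart-o} but delegates it to \cite[17.3]{bk:Hartshorne2000}, whose content is exactly the direct coordinate verification you sketch --- axiom-by-axiom checking in $PP_{hilbert}^*(\cO)$, with the Pythagorean hypothesis entering where square roots of sums of two squares are needed (laying off segments and angles, the rigid motions behind SAS) and the Euclidean hypothesis entering only for the circle--circle intersection in Axiom~E. Your identification of where each field hypothesis is used matches the cited source, so nothing further is needed.
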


\subsection{Introducing coordinates}
\label{se:coordinates}

We have seen in the last section how get models of geometry using coordinates in a field.
Now we want to find a way to define coordinates from a model $\Pi$ of geometry.
We say that we want to {\em coordinatize $\Pi$}.
This problem has a long tradition and was solved already in the 19th century.

There are two accepted ways of coordinatizing: If we have the notion of equidistance and betweenness available,
we can define an arithmetic of line segments.
This is discussed in detail in \cite[Chapter 18]{bk:Hartshorne2000}.
In the absence of betweenness and congruence,
but in the presence of the Parallel Axiom, one can use Pappus' Axiom
to define the arithmetic operations even in a Pappus plane.
This was first done
by K.G.C. von Staudt \cite{von1847geometrie,von1857beitrage}, a student of C.F. Gauss, 
before D. Hilbert's \cite{hilbert1902foundations}.
The first modern treatment of coordinatization for affine and projective planes was given
by M. Hall \cite{hall1943projective}.

\begin{definition}
Let $\tau$ a vocabulary for geometry, 
$T \subseteq \FOL(\tau)$  a set of axioms of geometry, 
$T_f$ be a set of axioms for fields in 
$\tau_{fields}$
or $\tau_{ofields}$.
We say that the models of $T$ have a {\em first order coordinatization}
in fields satisfying  $T_f$
if
there exists a first order translation scheme $CC_{field}$ such that
\begin{enumerate}[(a)]
\item
for every $\Pi$ which satisfies $T$ the structure 
$CC_{field}^*(\Pi)$ 
($CC_{o-field}^*(\Pi)$) 
is a field
which satisfies $T$; 
\item
for every field $\cF$ which satisfies $T_f$, the $\tau$-structure $PP_{\tau}(\cF)$
satisfies $T$;
\item
For every field $\cF$  which satisfies $T_f$ we have
$$
CC_{field}(PP_{\tau}(\cF) \simeq \cF;
$$
\item
For every $\tau$-structure $\Pi$ which satisfies $T$ we have
$$
PP_{\tau}(CC_{field}(\Pi)) \simeq \Pi.
$$
\end{enumerate}
\end{definition}

We have formulated the definition in terms of the relational vocabularies for fields
to make the use of translation schemes simple. As we deal here with full first order logic,
there is no loss of generality.

In order to deduce undecidability of geometric theories using undecidability
of theories of fields we will need the following:

\begin{theorem}[Segment Arithmetics]
\label{th:segment}
Every P-Hilbert plane has a first order coordinatization  $FF_{field}$ (via segment arithmetic).
\end{theorem}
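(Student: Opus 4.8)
The plan is to make explicit the classical segment-arithmetic construction of Hilbert (see \cite[Chapter 18, 19]{bk:Hartshorne2000}) and to check, step by step, that every ingredient can be expressed by quantifier-free or first order formulas in $\tau_{hilbert}$, so that the whole coordinatization is packaged as a first order translation scheme $FF_{field}$. First I would fix, inside a given P-Hilbert plane $\Pi$, three non-collinear points $O, E_1, E_2$; using the Parallel Axiom these choices let me single out two distinguished lines through $O$ (the "axes"), and I identify the field universe with the points of the first axis. The constants $0$ and $1$ of the field are interpreted as $O$ and $E_1$. The point of this step is that "point lying on the first axis" is a first order property of a single point (there is a line through $O, E_1$ containing it), so the domain formula $\phi(x)$ of the scheme is first order; in fact, with a fixed tuple of parameters it is quantifier-free modulo the definable incidence relations.

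Next I would define addition and multiplication of segments. Addition is transport of segments along a line, which is governed by axioms (C-1), (C-3) together with betweenness; geometrically $a+b$ is obtained by laying off the segment $Ob$ beyond $a$ in the appropriate direction, and this is a first order condition relating $a$, $b$, and the result (it asserts the existence of an intermediate configuration of points satisfying stated equidistance and betweenness relations, with uniqueness guaranteed by the axioms). Multiplication is the similar-triangles construction: $a\cdot b$ is read off from a parallel-projection diagram built from $O, E_1$, the point $a$ on the first axis and $b$ on the second, and here one invokes the Parallel Axiom to make the auxiliary parallels exist and be unique. Negatives and inverses are the analogous reflection and reciprocal constructions. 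In each case the relational graph $Add(x,y,z)$, $Mult(x,y,z)$ is defined by an (existential) first order formula $\phi_{Add}$, $\phi_{Mult}$ over $\tau_{hilbert}$, so that the pair $FF_{field} = (\phi, \phi_{Add}, \phi_{Mult}, \mathbf{0}, \mathbf{1})$ is a genuine first order translation scheme in the sense of Section \ref{se:transductions}; this yields clause (a) of the definition, once one checks the field axioms hold, which is exactly the content of Hilbert's theorem on segment arithmetic and which I would cite rather than reprove. Clause (b) is already supplied by Theorem \ref{th:cart-o}(i): $PP_{hilbert}^*$ of an ordered Pythagorean field is a P-Hilbert plane (and, re-running segment arithmetic, the coordinate field of a P-Hilbert plane comes out ordered Pythagorean — order from betweenness, Pythagorean from the right-triangle construction needing hypotenuses, which (C-6) and the plane axioms provide).

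The remaining work is clauses (c) and (d), the two round-trip isomorphisms. For (c), starting from a field $\cF$ satisfying $T_f$, I would check that applying segment arithmetic to the standard Cartesian plane $PP_{hilbert}^*(\cF)$ recovers $\cF$: the first axis is $\{(x,0)\}$, segment addition is ordinary addition of $x$-coordinates, segment multiplication is ordinary multiplication, and the map $(x,0)\mapsto x$ is the desired field isomorphism — this is a routine but slightly tedious verification that the geometric constructions compute the arithmetic operations. For (d) I would run it the other way: given a P-Hilbert plane $\Pi$, coordinatize it to get $\cF = FF_{field}^*(\Pi)$, then show the map sending a point $p$ of $\Pi$ to its pair of coordinates (obtained by dropping parallels to the two axes) is an isomorphism of $\tau_{hilbert}$-structures onto $PP_{hilbert}^*(\cF)$; the Parallel Axiom guarantees these projections are well-defined, and preservation of incidence, equidistance, angle congruence and betweenness is precisely the classical statement that the synthetic relations correspond under coordinates to the algebraic formulas of Section \ref{se:analytic}. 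I expect the main obstacle to be this last point — verifying that \emph{all four} synthetic relations (not just incidence) are transported faithfully, in particular that equiangularity matches the $\tan$-formula and that no extra identifications or omissions occur — together with the bookkeeping of which auxiliary existential quantifiers are genuinely eliminable; the cleanest route is to quote \cite[Chapters 18--21]{bk:Hartshorne2000} for the geometric facts and then simply observe that every statement used is first order in $\tau_{hilbert}$, so that the abstract machinery of translation schemes applies verbatim.
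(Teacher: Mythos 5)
Your proposal is correct and rests on the same classical ingredient as the paper: Hilbert--Hartshorne segment arithmetic, checked to be first order definable over $\tau_{hilbert}$, with the field axioms and the correctness statements quoted from \cite[Chapters 18--21]{bk:Hartshorne2000}. The bookkeeping differs, though. The paper first builds the commutative semiring of congruence classes of segments (Propositions \ref{le:fol-segmentarithmetic}, \ref{le:addition}, \ref{le:multiplication}) and then invokes a separate, generic lemma that any such ordered semiring definably extends to an ordered field (in the way $\Q$ is built from $\N$); the translation scheme $FF_{field}$ is this composite, and correctness is Theorem \ref{th:corr-segment}, proved by citation to Hartshorne. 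You instead carry the field directly on the points of a distinguished axis, with the parameters $O,E_1,E_2$ playing the role of the paper's fixed unit segment, and define signed addition and multiplication on axis points outright. Your route makes clause (d) --- the isomorphism $\Pi \simeq PP_{hilbert}^*(FF_{field}^*(\Pi))$ --- nearly immediate, since the coordinate assignment is just parallel projection onto the two axes, but it pushes the handling of directions/negatives and the commutativity of the parallel-projection product (a Pappus-type fact, available in P-Hilbert planes only via the congruence axioms) into the theorems you cite; the paper's decomposition isolates exactly those geometric facts as standalone propositions before any algebraic packaging, at the cost of having to check that the semiring-to-field extension is itself given by a first order translation scheme. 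You also correctly single out clause (c), the recovery of $\cF$ from the Cartesian plane $PP_{hilbert}^*(\cF)$, which the paper treats separately as Lemma \ref{le:new-1} and flags as the step most often overlooked in the literature; your direct verification on the Cartesian plane is essentially the paper's proof of that lemma.
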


\begin{theorem}[Planar Ternary Rings]
\label{th:PTR}
Every infinite Pappus plane without finite lines has a first order coordinatization $RR_{field}$ (via planar ternary rings).
\end{theorem}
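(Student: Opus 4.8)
The plan is to establish Theorem~\ref{th:PTR} by realizing the classical coordinatization of Pappian planes as an explicit first order translation scheme $RR_{field}$ and then verifying the four correctness conditions (a)--(d) of the definition of first order coordinatization, using Proposition~\ref{th:Artin} to guarantee that the abstract construction lands in the right category. Concretely, fix a Pappian plane $\Pi$ with no finite lines. Choose three non-collinear points to serve as the origin $O$, a unit point $E$ on one axis, and a second point $E'$ on the other; all subsequent definitions will be uniform in these parameters. The carrier of the field is the point-set of one fixed line $\ell_0$ through $O$ and $E$ (this is first order: $x \in \ell_0$ is expressed via the incidence relation and the chosen constants). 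Addition and multiplication are the von Staudt / Hilbert segment constructions transported to $\ell_0$: $a + b$ is obtained by a parallelogram construction using the second axis and the Parallel Axiom, and $a \cdot b$ by the Pappus configuration --- each is a finite sequence of ``draw the unique line through two points'' and ``draw the unique parallel through a point'' steps, so each graph $Add(x,y,z)$ and $Mult(x,y,z)$ is defined by a first order (indeed existential, over $\tau_\in$) formula with the three constants as parameters. This gives the translation scheme $RR_{field}$; by construction it is a legitimate $\tau_\in$--$\tau_{field}$ translation scheme, so Proposition~\ref{pr:fundamental} applies to it.

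Next I would verify condition (a): for every $\Pi \models T_{pappus}$ with no finite lines, $RR_{field}^*(\Pi)$ satisfies the field axioms and has characteristic $0$. The associativity and commutativity of addition and multiplication, and distributivity, are exactly the content of the classical theorem that Pappus' axiom forces the coordinate structure to be a commutative field rather than merely a planar ternary ring or a skew field --- this is the Schur--Artin result, and I would cite \cite{artin1957geometric} (or \cite{bk:Hartshorne2000}) for the geometric proof that each field identity follows from a Pappus/Desargues configuration. That the lines being infinite forces characteristic $0$ is the ``no finite lines'' hypothesis feeding into Proposition~\ref{th:Artin}(ii). For condition (b) there is nothing new: it is Theorem~\ref{th:cart-a}(i)--(ii), which says $PP_\in^*(\cF)$ is an infinite Pappian plane whenever $\cF$ is a field of characteristic $0$. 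Conditions (c) and (d) --- that $RR_{field}^*(PP_\in^*(\cF)) \simeq \cF$ and $PP_\in^*(RR_{field}^*(\Pi)) \simeq \Pi$ as incidence structures --- are precisely parts (iii) and (iv) of Proposition~\ref{th:Artin}, once one checks that the abstractly-defined coordinate field $\cF(\Pi)$ of that proposition is the same structure (up to the obvious isomorphism, fixing the constants) as the one cut out by our formulas. So the real work is to match the hand-picked formulas of $RR_{field}$ against the textbook construction; I would phrase this as: the formulas are a faithful transcription of the construction in \cite[Ch.~14--18]{bk:Hartshorne2000} or \cite{artin1957geometric}, and then invoke Proposition~\ref{th:Artin} wholesale.

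The main obstacle is bookkeeping rather than conceptual. First, one must be careful that ``drawing the unique line'' and ``drawing the unique parallel'' are genuinely first order \emph{and total} on the relevant configurations: the Parallel Axiom (ParAx) gives at most one parallel, axiom (I-1) gives the connecting line, and the incidence axioms plus non-collinearity of $O,E,E'$ guarantee the auxiliary points in the $+$ and $\cdot$ constructions actually exist and are distinct, so that the defining formulas really do define the graph of a function. This case analysis --- making sure no construction step degenerates, e.g.\ that the two lines one intersects are never equal or parallel in the generic position used --- is the fiddly part and is exactly where \cite{rautenberg1962unentscheidbarkeit} was content to write ``$RR$'' without details. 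Second, one must confirm that the isomorphisms of Proposition~\ref{th:Artin}(iii)--(iv) can be taken to respect the parametrization, i.e.\ that different choices of $O, E, E'$ give isomorphic (not merely elementarily equivalent) fields; since we only need \emph{existence} of a coordinatization, fixing one choice suffices and this point can be dispatched quickly. The upshot is that Theorem~\ref{th:PTR} is a packaging result: the mathematics is entirely in Proposition~\ref{th:Artin} and Theorem~\ref{th:cart-a}, and the contribution here is to certify that the classical coordinatization is carried out by formulas of first order logic in $\tau_\in$, which is what is needed in Lemma~\ref{le:interpretability} to complete the proof of Theorem~\ref{th:affine-undec}.
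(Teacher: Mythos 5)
Your argument is essentially correct, but it takes a genuinely different route from the paper. You define addition and multiplication directly on one coordinate axis by the von Staudt/Hilbert parallel-projection and Pappus configurations, and then discharge the correctness conditions (a)--(d) by matching your explicit formulas against the textbook construction and invoking Proposition \ref{th:Artin} (together with Theorem \ref{th:cart-a}) wholesale. The paper instead literally follows M. Hall's planar ternary rings \cite{hall1943projective,ivanov2016affine}: it fixes two axes and an auxiliary line, shows by explicit formulas (Lemmas \ref{bijection}, \ref{slope}, \ref{ptr}) that the single ternary operation $T(a,x,b)$ (read ``$y=ax+b$'') is $\FOL(\tau_{\in})$-definable, checks the PTR axioms (T-1)--(T-5) with their geometric meaning (Lemma \ref{ptr-1}), and only then obtains $add_T$ and $mult_T$ as the specializations $T(a,1,b)$ and $T(a,x,0)$, citing \cite{blumenthal1980modern,szmielew1983affine} for the fact that the Pappus axiom turns this PTR into a field of characteristic $0$ (Theorem \ref{pr:ptr}); note that the isomorphism $PP_{\in}^*(RR_{field}^*(\Pi))\simeq \Pi$, your condition (d), is not part of Theorem \ref{pr:ptr} at all but is imported separately as Theorem \ref{th:szmielew}. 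The trade-off: your route produces the binary operations at once and rests on Artin/Hartshorne, but it hides the well-definedness of the constructions (which needs Desargues, available from Pappus via Hessenberg) and the identification of your defined field with the abstract coordinate field $\cF(\Pi)$ inside the appeal to Proposition \ref{th:Artin}; the PTR route is more modular --- it coordinatizes any affine plane by a ternary structure and lets Pappus enter only at the final algebraic step --- and it points to exactly the sources (Szmielew, Blumenthal) where the onto/isomorphism statement needed later for Lemma \ref{le:onto} and Theorem \ref{th:affine-undec} is actually proved. In both treatments the heavy algebra is delegated to the literature; the genuinely new content is the bookkeeping that the construction is first order over $\tau_{\in}$ relative to a fixed choice of reference points and lines, and you correctly identify that as the crux.
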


We will show in the sequel that $FF_{field}$ and 
$RR_{field}$ are $\FOL$-definable.

\subsection{Segment arithmetic}

Given a Hilbert plane $\Pi$ which satisfies the Parallel Axiom,
we now want to show that one can interpret in $\Pi$ an ordered field of coordinates
$\cF_{hilbert}(\Pi)$.
Note that orthogonality $Or(\ell_1, \ell_2)$ of lines is definable
in every Hilbert plane using equiangularity.
We follow essentially \cite[Chapter 4]{bk:Hartshorne2000}.

Fix a line segment $1=[A_0,A_1]$ given by two points $A_0, A_1$.

We first define commutative semiring $\cS_{hilbert}(\Pi)$
as follows:
\begin{description}
\item[Positive elements:]
Equivalence classes $[P_1,P_2]$ of pairs of points $P_1,P_2$ with $Eq(P_1,P_2)$.
\item[Zero element:]
The equivalence class $[A_0, A_0]$.
\item[Unit element:]
The equivalence class $[A_0, A_1]$.
\item[Positive addition:] 
Choose three points $P_1, P_2, P_3$ such that $Be(P_1, P_2, P_3)$.
Then we put $[P_1,P_2] + [P_2, P_3] = [P_1, P_3]$.
If $P_1, P_2, P_3$ or not colinear, we always can choose
$P_1', P_2', P_3'$ with $Be(P_1', P_2', P_3')$
such that
$[P_1,P_2] = [P_1', P_2']$ and
$[P_2,P_3] = [P_2', P_3']$. 
\item[Positive multiplication:] 
Let $P_0, P_1, P_2, P_3, P_4$ be points such that
$Be(P_0, P_1, P_2)$ and $Be(P_0, P_3, P_4)$ and
the lines $(P_1, P_3)$ and $(P_2, P_4)$ are parallel
and the lines $(P_1, P_2)$ and $(P_3, P_4)$ are orthogonal,
and $[P_0, P_3] = [A_0,A_1] =1$ is the unit length.
Then we put for 
$a=[P_0, P_1]$ and
$b=[P_0, P_4]$ the product $ab=[P_0, P_2]$.
\end{description}

One easily verifies now:
\begin{proposition}
\label{le:fol-segmentarithmetic}
The arithmetic operations defined as above are definable in $\FOL$ in the vocabulary $\tau_{hilbert}$.
\end{proposition}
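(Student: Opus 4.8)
The plan is to show that each of the three arithmetic operations — positive addition, positive multiplication, and the underlying equivalence relation on segments — is captured by a first-order formula over $\tau_{hilbert}$, which contains symbols for incidence $\in$, betweenness $Be$, equidistance $Eq$, and equiangularity $An$ (recall that orthogonality $Or$ is itself $\FOL$-definable from equiangularity in any Hilbert plane, via $Or((P_1,P_2),(P_1,P_3))$ iff $\angle(P_2P_1P_3)$ is congruent to its supplementary angle). Since the construction represents the semiring elements as equivalence classes $[P_1,P_2]$ of pairs of points, a formula ``about segments'' is really a formula about representative pairs of points, and the operations are relations on $6$-tuples (for addition and the equivalence) and $2$-tuples of pairs with a $1$-tuple output (for multiplication, after fixing the unit segment $1=[A_0,A_1]$ as two constants or as free parameters). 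So the goal is to exhibit, for each operation, an explicit $\tau_{hilbert}$-formula with the right free variables.

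First I would treat the equivalence relation: $[P_1,P_2]=[P_3,P_4]$ is simply $Eq(P_1,P_2,P_3,P_4)$, which is atomic. Next, positive addition: the defining clause says $[P_1,P_2]+[P_2,P_3]=[P_1,P_3]$ whenever $Be(P_1,P_2,P_3)$, and in general one first transports the two input segments onto a common line so that they abut. The $\FOL$-formula for ``$[Q_1,Q_2]+[Q_3,Q_4]=[R_1,R_2]$'' is therefore: there exist $P_1,P_2,P_3$ with $Be(P_1,P_2,P_3)$, $Eq(P_1,P_2,Q_1,Q_2)$, $Eq(P_2,P_3,Q_3,Q_4)$, and $Eq(P_1,P_3,R_1,R_2)$ — an existential formula, legitimate in $\FOL$. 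The existence of such transported representatives is exactly what the axioms (C-1), (C-2), (C-3) and (B-2) guarantee (this is the content of the parenthetical remark in the Positive addition clause, and is part of what Hartshorne verifies); we only need this existence to justify that the defining formula expresses a total, well-defined operation, not to define it. Positive multiplication is handled the same way: ``$ab=c$'' becomes the existential statement that there are points $P_0,P_1,P_2,P_3,P_4$ realizing the stated betweenness, parallelism, and orthogonality incidences with $Eq(P_0,P_3,A_0,A_1)$, $Eq(P_0,P_1,\text{repr.\ of }a)$, $Eq(P_0,P_4,\text{repr.\ of }b)$, $Eq(P_0,P_2,\text{repr.\ of }c)$ — and parallelism of lines through given point-pairs is $\FOL$-definable from incidence (two lines are parallel iff they share no point, and ``the line through $X,Y$'' is first-order since it is unique by (I-1)).

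The main obstacle, and the only step requiring genuine care rather than bookkeeping, is checking that these existential formulas actually define \emph{functions} — i.e., that the operation is well-defined independently of the chosen representatives and that a value always exists — so that the resulting relations really are the graphs of the semiring operations and can later be extended to a field. That verification is precisely the classical segment-arithmetic development, and here I would simply cite \cite[Chapter 4]{bk:Hartshorne2000} (or \cite[Chapter 18]{bk:Hartshorne2000} for the field extension by differences and quotients): the existence and uniqueness statements are theorems of $T_{p-hilbert}$, and none of them needs anything beyond first-order reasoning from the Hilbert axioms. Once well-definedness is in hand, the proof is complete: every defining clause has been rendered as an $\exists$-formula over $\tau_{hilbert}$, so the semiring $\cS_{hilbert}(\Pi)$, and hence (after the standard passage to differences and quotients, itself plainly $\FOL$-definable) the ordered field $\cF_{hilbert}(\Pi)$, is obtained by a first-order translation scheme, which is what Proposition \ref{le:fol-segmentarithmetic} asserts and what Theorem \ref{th:segment} then packages as the coordinatization $FF_{field}$.
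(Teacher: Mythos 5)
Your proposal is correct and matches the paper's intent: the paper offers no explicit argument (it prefaces the proposition with ``One easily verifies now''), and the intended verification is exactly what you carry out — rendering each defining clause for the segment operations as an explicit (existential) formula over $\tau_{hilbert}$, using that orthogonality and parallelism are themselves $\FOL$-definable from equiangularity and incidence. Your deferral of well-definedness and totality to the classical segment-arithmetic facts (Hartshorne) is also how the paper handles it, via Propositions \ref{le:addition} and \ref{le:multiplication}, so nothing is missing.
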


\begin{proposition}
\label{le:addition}
In any Hilbert plane (even without the Parallel Axiom) addition of line segments
as defined above is well-defined, commutative, associative.
Furthermore, if $a,b$ are two line segments, then one of the following holds:
\begin{enumerate}[(i)]
\item $a=b$,
\item
There is a line segment $c$ such that $a+c=b$,
\item
There is a line segment $d$ such that $a=b+d$,
\end{enumerate}
\end{proposition}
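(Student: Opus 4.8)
The plan is to verify that addition of segments, defined via the betweenness relation, inherits all the stated properties from elementary incidence–betweenness–congruence facts available in any Hilbert plane. First I would address \emph{well-definedness}: the definition $[P_1,P_2]+[P_2,P_3]=[P_1,P_3]$ presupposes a choice of a collinear configuration with $Be(P_1,P_2,P_3)$ realizing the two given congruence classes, so I must show that any two such realizations give congruent sums. This is exactly the content of axiom (C-3) (segment addition): if $Be(P_1,P_2,P_3)$, $Be(P_1',P_2',P_3')$, $[P_1,P_2]=[P_1',P_2']$ and $[P_2,P_3]=[P_2',P_3']$, then $[P_1,P_3]=[P_1',P_3']$. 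Existence of at least one such realization follows from (C-1), which lets one lay off a segment congruent to a given one along any ray starting at a chosen point, on the correct side; so given classes $a,b$ one picks $P_1$ arbitrarily, a ray from $P_1$, a point $P_2$ on it with $[P_1,P_2]=a$, then extends past $P_2$ (using (B-2) to get a point beyond, then (C-1) to fix the length) to $P_3$ with $[P_2,P_3]=b$ and $Be(P_1,P_2,P_3)$.

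Next, \emph{commutativity} $a+b=b+a$: given a collinear $Be(P_1,P_2,P_3)$ with $[P_1,P_2]=a$, $[P_2,P_3]=b$, I would produce on the \emph{same} line (or a congruent copy of it via (C-1)) a point $Q$ with $Be(P_1,Q,P_3)$, $[P_1,Q]=b$, $[Q,P_3]=a$; then $a+b=[P_1,P_3]=b+a$. The existence of such $Q$ is again (C-1) together with the uniqueness clause, plus the betweenness trichotomy (B-3) to conclude $Be(P_1,Q,P_3)$ once the lengths match up and $Q$ lies between. For \emph{associativity}, given four points with $Be$ in a chain $P_1,P_2,P_3,P_4$ (each consecutive triple between, obtained by iterating (B-2)/(C-1)), both $(a+b)+c$ and $a+(b+c)$ equal $[P_1,P_4]$; the only subtlety is the standard betweenness lemma that from $Be(P_1,P_2,P_3)$ and $Be(P_2,P_3,P_4)$ one gets $Be(P_1,P_2,P_4)$ and $Be(P_1,P_3,P_4)$, a consequence of Pasch (B-4) together with (B-1)–(B-3). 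Finally, for the trichotomy (i)–(iii): lay off $a$ and $b$ from a common point $O$ along the same ray, getting $A$ with $[O,A]=a$ and $B$ with $[O,B]=b$; by (B-3) applied to $O,A,B$ on that ray, exactly one of $A=B$, $Be(O,A,B)$, $Be(O,B,A)$ holds, which gives respectively $a=b$, $a+[A,B]=b$ with $c=[A,B]$, or $b+[B,A]=a$ with $d=[B,A]$.

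The main obstacle is not any single deep theorem but marshalling the \emph{betweenness bookkeeping} correctly: the repeated use of (C-1) to transport a configuration to a convenient line, the uniqueness clause of (C-1) to pin down the transported points, and the classical chained-betweenness lemmas (which in Hilbert's development require Pasch). None of these is hard in isolation, but the well-definedness argument in particular must be stated so that it does not secretly assume the Parallel Axiom or any Euclidean feature — the proposition explicitly claims validity in \emph{any} Hilbert plane. I would therefore be careful to invoke only (I-1)–(I-3), (B-1)–(B-4) and (C-1)–(C-3), flagging each appeal, and to note that multiplication (which does use orthogonality and hence equiangularity, and ultimately the Parallel Axiom for the parallel lines in its definition) plays \emph{no role} here, so the restriction to P-Hilbert planes from Theorem~\ref{th:segment} is not needed for this statement. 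The verification that each operation is $\FOL$-definable in $\tau_{hilbert}$ is Proposition~\ref{le:fol-segmentarithmetic} and can be cited.
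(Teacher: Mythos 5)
The paper does not actually prove this proposition: it is stated without proof as part of the segment-arithmetic development, which the paper explicitly takes over from Hartshorne (``We follow essentially [Chapter 4, Hartshorne]''), and the surrounding correctness results are likewise discharged by citation (Theorems 20.7, 21.1, 21.2 there). Your proposal therefore supplies an argument where the paper supplies a reference, and the argument you sketch is the standard axiomatic one, so in substance you and the paper's source agree: well-definedness is exactly (C-3) plus the existence of a collinear realization via (C-1) and (B-2); associativity reduces to the four-point betweenness lemmas (which indeed need Pasch); the trichotomy follows by laying off $a$ and $b$ on a common ray and applying (B-3) (with the minor bookkeeping that $Be(A,O,B)$ is excluded because $A,B$ lie on the same ray); and you are right that nothing here touches the Parallel Axiom, which enters only for multiplication, matching the proposition's parenthetical. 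The one place to tighten is commutativity: the point $Q$ with $Be(P_1,Q,P_3)$, $[P_1,Q]=b$, $[Q,P_3]=a$ whose existence you invoke is essentially the statement to be proved, and (C-1) plus its uniqueness clause does not hand it to you directly --- you would need a segment-subtraction lemma, itself proved from (C-1)-uniqueness and (C-3) by comparison with the reversed configuration. The cleaner route, and the one implicit in Hartshorne, is to observe that the defining configuration read backwards already witnesses $b+a$: from $Be(P_1,P_2,P_3)$ with $[P_1,P_2]=a$, $[P_2,P_3]=b$, symmetry of betweenness and (C-0) give $Be(P_3,P_2,P_1)$ with $[P_3,P_2]=b$, $[P_2,P_1]=a$, so $b+a=[P_3,P_1]=[P_1,P_3]=a+b$ in one step. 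With that substitution your outline is complete and stays, as required, within (I-1)--(I-3), (B-1)--(B-4), (C-0)--(C-3).
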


\begin{proposition}
\label{le:multiplication}
In any Hilbert plane $\Pi$ with the Parallel Axiom multiplication of line segments
as defined above is well-defined, associative, and has $1$ as its neutral element.
Furthermore, for all line segments $a,b,c$ we have
\begin{enumerate}[(i)]
\item
$a(b+c)= ab +ac$
\item
There is a unique $d$, such that $ad=1$,
\item
If $\Pi$ is also Pappian, then multiplication is also commutative.
\end{enumerate}
\end{proposition}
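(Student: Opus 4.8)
The statement to prove is Proposition~\ref{le:multiplication}: in any Hilbert plane $\Pi$ satisfying the Parallel Axiom, multiplication of line segments (as defined via the parallel/orthogonal configuration) is well-defined, associative, has $1$ as neutral element, is left-distributive over addition, has multiplicative inverses, and becomes commutative once Pappus holds. The plan is to proceed configuration by configuration, exactly as in the classical development of segment arithmetic (cf.~\cite[Chapter 19--20]{bk:Hartshorne2000}), and to reduce every algebraic identity to a statement about parallel lines and right angles that can be verified inside a single diagram using the axioms (I-1)--(I-3), (B-1)--(B-4), (C-1)--(C-6) and (ParAx).

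First I would establish \emph{well-definedness}: given positive segments $a,b$, one must show the point $P_2$ produced by the construction does not depend on the auxiliary choices (the base point $P_0$, the two rays, and the realization of $a,b$ as specific segments on those rays). This is where the Parallel Axiom and the congruence axioms do the real work: (C-1) lets one lay off a segment congruent to a given one along a ray in a unique way, (C-4)/(C-6) transport angles and triangles rigidly, and (ParAx) guarantees that the parallel line through a given point hitting the other ray exists and is unique, so $P_2$ is pinned down. The neutral-element claim ($1\cdot a = a = a\cdot 1$) then falls out immediately because when one of the factors is the unit segment the defining configuration degenerates to a pair of congruent parallel segments. For \emph{associativity} and \emph{left-distributivity} one superimposes the relevant construction diagrams: distributivity $a(b+c)=ab+ac$ follows because adding $b$ and $c$ head-to-tail on the input ray produces, under the parallel projection, the head-to-tail sum of $ab$ and $ac$ on the output ray; associativity $a(bc)=(ab)c$ is proved by the standard two-step similar-triangles chase. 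For multiplicative \emph{inverses}, given $a$ one runs the construction ``backwards'': interpolate the unit segment where $ad$ should land and read off $d$ as the preimage; uniqueness again comes from (C-1) and (ParAx). Finally, \emph{commutativity} under the additional hypothesis that $\Pi$ is Pappian is exactly where Pappus' Axiom enters: the two configurations computing $ab$ and $ba$ share the same six points on two lines, and the Pappus incidence $AB'\parallel A'B$ is precisely the assertion that the two output segments coincide (this is the point of introducing Pappus into the axiom list in Section~\ref{se:incidence}).

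So far everything has been stated for \emph{positive} segments; to get an actual ordered field $\cF_{hilbert}(\Pi)$ one extends $\cS_{hilbert}(\Pi)$ to include a zero element and additive inverses by the usual Grothendieck-style symmetrization, checking routinely that the ring operations extend (this is what Theorem~\ref{th:segment} ultimately packages, together with Proposition~\ref{le:fol-segmentarithmetic} for the first-order definability). The present proposition is confined to the semiring-level multiplicative facts, so I would keep the symmetrization remark brief and relegate it to the passage around Theorem~\ref{th:segment}.

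\textbf{Main obstacle.} The genuinely delicate step is well-definedness of multiplication --- specifically, independence from the choice of base point $P_0$ and of the two rays emanating from it. Distributivity and associativity are then ``diagram chases'' that are conceptually clear but notationally heavy, and commutativity is a clean one-line invocation of Pappus; but verifying that the construction descends to congruence classes requires carefully invoking (C-1), (C-4), (C-6) and (ParAx) in the right order, and this is the part of the argument that is genuinely easy to botch (and historically was botched in less careful treatments). I would therefore spend most of the written proof on that step and treat the rest as routine, citing \cite{bk:Hartshorne2000} for the fully spelled-out diagrams.
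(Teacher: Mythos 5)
Your plan is correct and takes essentially the same route as the paper, which gives no proof of its own for this proposition but defers to the classical segment-arithmetic development it follows (\cite[Chapter 4]{bk:Hartshorne2000}, i.e.\ Hilbert's parallel-projection definition of multiplication on two perpendicular rays, well-definedness via (C-1), (C-6) and (ParAx), the standard configurations for distributivity, associativity and inverses, and Pappus for commutativity). Your identification of well-definedness as the delicate step and of Pappus as entering only in part (iii) matches the intent of the paper's statement exactly.
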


\begin{lemma}
In every semiring satisfying Propositions 
\ref{le:addition}
and
\ref{le:multiplication}
we can define an ordered field.
In fact this field is definable using a first order translation scheme.
\end{lemma}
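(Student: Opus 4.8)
The plan is to start from the semiring $\cS = \cS_{hilbert}(\Pi)$ of positive line segments (including $0$) whose additive and multiplicative structure is guaranteed by Propositions~\ref{le:addition} and~\ref{le:multiplication}, and to build a field by the standard ``difference completion'' construction, then check that every step is expressible by a first order translation scheme over $\tau_{hilbert}$. First I would introduce the carrier of the field as \emph{signed} segments: working on pairs $(a,b) \in \cS \times \cS$ with the intended meaning $a - b$, and quotienting by the equivalence $(a,b) \sim (c,d) \iff a+d = b+c$. Proposition~\ref{le:addition}(i)--(iii) (the trichotomy: either $a=b$, or $a+c=b$ for some $c$, or $a = b+d$ for some $d$) is exactly the cancellative/ordered property needed for this to be well-defined and to yield an abelian group under $(a,b)+(c,d) = (a+c,\,b+d)$, with $0 = [(0,0)]$ and $-[(a,b)] = [(b,a)]$. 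Cancellativity of $+$ on $\cS$ itself follows from the trichotomy, so the map $a \mapsto [(a,0)]$ embeds $\cS$ into the group.

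Next I would define multiplication on representatives by the ``FOIL'' rule $(a,b)\cdot(c,d) = (ac+bd,\ ad+bc)$, using the semiring multiplication on $\cS$; well-definedness on equivalence classes and the ring axioms reduce to distributivity and commutativity of $+$ and $\cdot$ on $\cS$, which are Propositions~\ref{le:addition} and~\ref{le:multiplication}(i),(iii) (commutativity of $\cdot$ using that $\Pi$ is Pappian — and I would note explicitly that for a general Hilbert plane one only gets a division \emph{ring}, but for our intended applications, P-Hilbert planes with Pappus, we get a field). The multiplicative inverse: given a nonzero class, a representative can be normalized via the trichotomy to $(a,0)$ or $(0,a)$ with $a \neq 0$ in $\cS$, and then $d$ with $ad = 1$ exists and is unique by Proposition~\ref{le:multiplication}(ii); so $[(a,0)]^{-1} = [(d,0)]$ and $[(0,a)]^{-1} = [(0,d)]$. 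The order is recovered by declaring $[(a,b)] \geq 0 \iff \exists c\ (b + c = a)$, which is a first order condition and makes the resulting field an ordered field; compatibility with $+$ and $\cdot$ is again routine from the semiring order. This gives an ordered field $\cF_{hilbert}(\Pi)$ on (a definable quotient of) pairs of segments, hence on $4$-tuples of points of $\Pi$.

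Finally, for the first order translation scheme: a point of the field is represented by a pair of segments, a segment by an (ordered) equivalence class of point-pairs, so altogether a field element is coded by a fixed-length tuple of points of $\Pi$; the domain formula $\phi(\bar x)$ picks out legitimate codes, and the formulas $\phi_{Add}$, $\phi_{Mult}$, $\phi_{\leq}$ define the graphs of $+,\cdot,\leq$ via the explicit rules above. By Proposition~\ref{le:fol-segmentarithmetic} the segment arithmetic operations themselves are $\FOL(\tau_{hilbert})$-definable, and the quotient-by-$\sim$, the sign normalization, and the trichotomy-based definitions of $-$ and ${}^{-1}$ are all plainly first order; composing these definability facts yields the translation scheme $FF_{field}$. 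I expect the main obstacle to be purely bookkeeping rather than conceptual: one must handle equivalence classes of equivalence classes (segments are classes of point-pairs, field elements are classes of segment-pairs) inside a single translation scheme, choosing canonical representatives or carrying the quotient relation along, and one must be careful that multiplication was defined in Proposition~\ref{le:multiplication} only for \emph{positive} segments, so the sign-case analysis for the product of two signed elements has to be spelled out and shown to be independent of the chosen representatives — this is where commutativity (Pappus) and the distributive law do the real work, and where an informal ``clearly'' would hide the only genuine content.
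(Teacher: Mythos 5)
Your proposal is correct and takes essentially the same route as the paper, whose entire proof is the one-line remark that one uses ``the standard construction the same way as one constructs the ordered field $\Q$ from the ordered semiring $\N$''; you simply carry out that standard completion in detail (pairs modulo the difference equivalence, FOIL multiplication, trichotomy-based order, and the observation that everything is $\FOL(\tau_{hilbert})$-definable so that it assembles into the translation scheme $FF_{field}$). The only notable deviation is a small streamlining: since Proposition~\ref{le:multiplication}(ii) already provides multiplicative inverses in the segment semiring, your difference construction alone yields the field and the fraction step suggested by the $\N$-to-$\Q$ analogy is unnecessary.
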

\begin{proof}
We use the standard construction the same way as one constructs the ordered field of
rational numbers $\Q$ from the ordered semiring of the natural numbers $\N$.
\hfill $\Box$
\end{proof}

This gives us the first order translation schemes 
$FF_{field}$ and $FF_{ofield}$. 


\begin{theorem}[Correctness of $FF_{field}$ and $FF_{ofield}$]
\label{th:corr-segment}
\ \\
Let $\Pi$ be a Hilbert Plane  which satisfies the Parallel Axiom.
\begin{enumerate}[(i)]
\item
$FF_{field}^*(\Pi)$ is a field of characteristic $0$ which
can be uniquely ordered to be an ordered field $\cF_{\Pi}$.
\item
Let $\cF_{\Pi}=FF_{field}^*(\Pi)$ be the ordered field of segment arithmetic in $\Pi$.
Then $\cF$ is Pythagorean and $PP_{hilbert}^*(\cF)$ is isomorphic to $\Pi$.
\item
An ordered field $\cO$ 
is Pythagorean  iff
$PP_{hilbert}^*(\cO)$  is a Hilbert Plane which satisfies
the Parallel Axiom.
\item 
$\Pi$ is a Euclidean plane iff $FF_{field}^*(\Pi)$ is a Euclidean field.
\item
$\cF$ is a Euclidean fields iff $PP_{hilbert}^*(\cF)$ is a Euclidean plane.
\end{enumerate}
\end{theorem}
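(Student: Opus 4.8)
The plan is to assemble Theorem~\ref{th:corr-segment} from the ingredients already developed, treating the five items as a cascade in which (i) and (ii) do the structural work and (iii)--(v) are corollaries combined with the constructions of Section~\ref{se:analytic}. First I would fix the line segment $1=[A_0,A_1]$ and invoke Propositions~\ref{le:addition} and~\ref{le:multiplication} to get that $\cS_{hilbert}(\Pi)$ is an ordered commutative semiring with the cancellation/trichotomy behaviour needed for the standard Grothendieck-style construction; the unnamed Lemma just before the theorem then yields that $FF_{field}^*(\Pi)$ is an ordered field, definable by a first order translation scheme. For the characteristic-$0$ claim in (i), I would note that the segments $\mathbf{n}\cdot 1$ obtained by iterated addition are all distinct positive elements (this uses the Archimedean-free trichotomy of Proposition~\ref{le:addition}, not full continuity), so $\mathbf{n}\neq 0$ for every $n$; and uniqueness of the ordering follows because in any field the positive cone consisting of sums of squares is forced once the field is formally real, and here the order is already pinned down by ``$a$ is positive iff $a$ is a genuine segment length.''

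For (ii) the key point is Pythagoreanness: given a segment $a$, the hypotenuse of a right triangle with legs $a$ and $1$ has length $\sqrt{a^2+1}$, constructible inside any Hilbert plane satisfying the Parallel Axiom, and scaling shows every sum of two squares is a square; this is exactly the condition isolated in Section~\ref{se:fdecidability}. The isomorphism $PP_{hilbert}^*(\cF_\Pi)\simeq\Pi$ is then the coordinatization statement: one checks that the points of $\Pi$ are in bijection with pairs of field elements via dropping perpendiculars to two fixed axes, that incidence, $Eq$, $An$ and $Be$ on $\Pi$ match the analytic definitions of Section~\ref{se:analytic} under this bijection, and that the match is natural. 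I would quote \cite[Chapter 4]{bk:Hartshorne2000} for the verification of each relation and emphasise only that every clause in that verification is first order, so it gives the invertibility of $FF_{field}$ on the class of P-Hilbert planes in the sense of Section~\ref{se:undecidability}.

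Items (iii)--(v) then follow formally. For (iii), the forward direction is Theorem~\ref{th:cart-o}(i); the converse direction uses (ii) applied to $\Pi:=PP_{hilbert}^*(\cO)$ together with the isomorphism $\cO\simeq FF_{field}^*(PP_{hilbert}^*(\cO))$, which transports Pythagoreanness back to $\cO$. For (iv), ``$\Pi$ is Euclidean'' means $\Pi$ is a P-Hilbert plane satisfying Axiom~(AxE); Axiom~(AxE) (two circles, one containing a point inside and a point outside the other, meet) translates under the coordinatization into the intermediate-value statement for the relevant quadratic, which holds iff every positive element of $FF_{field}^*(\Pi)$ has a square root, i.e.\ iff the field is Euclidean. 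Here I would reuse the circle/inside/outside definitions given just before Axiom~(AxE) and check that ``$D$ inside $\Gamma$'' becomes the analytic inequality on coordinates. Item (v) is then the converse of (iv), obtained the same way as (iii) from (iv) by round-tripping through $PP_{hilbert}^*$.

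The main obstacle I expect is not any single algebraic identity but the bookkeeping of \emph{naturality and first-order definability} in (ii): one must be careful that the coordinatization map is defined by translation formulas that do not smuggle in the choice of $1=[A_0,A_1]$ in a way that breaks the invertibility conditions (a)--(d) of the coordinatization definition, and that the two orderings one can put on $FF_{field}^*(\Pi)$ genuinely coincide so that ``$\cF_\Pi$'' is unambiguous. A secondary delicate point is (iv): translating Axiom~(AxE) faithfully requires knowing that the ``inside/outside'' predicates as defined via betweenness really do cut out the open disc in coordinates, which needs the ordering and a short computation with the line--circle intersection; I would do this computation explicitly since it is the one place where Axiom~E rather than mere segment arithmetic is used.
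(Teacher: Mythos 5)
Your proposal is correct and takes essentially the same route as the paper, whose entire proof is a citation of Hartshorne's Theorems 20.7, 21.1 and 21.2: segment arithmetic yields an ordered Pythagorean field, the coordinatization gives $PP_{hilbert}^*(\cF_{\Pi})\simeq\Pi$, and Axiom E corresponds exactly to the field being Euclidean. Your outline reconstructs precisely those cited results and defers the same relation-by-relation verifications to Hartshorne, Chapter 4; the only slip is the aside that formal reality alone forces the positive cone (false in general --- Pythagorean fields such as formal Laurent series over $\R$ admit two orderings), but the geometric pinning of the order by segment classes, which you also state, is the intended justification, so nothing essential is affected.
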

\begin{proof}
A proof may be found in
{\cite[Theorems 20.7, 21.1 and 21.2]{bk:Hartshorne2000}}.
\hfill
$\Box$
\end{proof}

\subsection{Planar ternary rings}
\label{se:ptr}
In order to use the undecidability of the theory of fields,
we have to find a first order transduction $RR_{\in}$ which turns any Pappian plane 
$\Pi$ into a field $RR_{\in}(\Pi)$ without using the betweenness
relation $Be$.
Fortunately, this can be done using Planar Ternary Rings, which were introduced by M. Hall 
in \cite{hall1943projective}. 
M. Hall credits \cite{von1857beitrage,hilbert1971foundations} for the original idea.
A good exposition can be found in \cite{blumenthal1980modern,szmielew1983affine}.
We follow here almost verbatim \cite{ivanov2016affine}, 
which is a particularly nice exposition of \cite{hall1943projective}.

Let $\Pi$ be an affine plane satisfying (I-1)- (I-3) (ParAx),
with two distinguished lines $\ell_0, m_0$ in $\Pi$. 
Let $O$ be the point of intersection of $\ell_0$ and $m_0$.

\begin{lemma}
\label{bijection}
There is a formula $bij(x,y,d) \in \FOL_{\in}$ which
for every line $\delta$
different from $\ell_0$ and $m_0$
such that $O \in \delta$ 
defines a bijection between $\ell_0$ and $m_0$.
\end{lemma}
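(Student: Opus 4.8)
The plan is to construct explicitly the bijection induced by "projection along the line $\delta$" and then verify it is first-order definable in $\FOL_\in$. Fix the configuration: $\ell_0, m_0$ two lines meeting at $O$, and $\delta$ a third line through $O$. Given a point $P \in \ell_0$, I would send it to the point $P' \in m_0$ obtained as follows: draw the line through $P$ parallel to $\delta$ (which exists and is unique by the Parallel Axiom), and let $P'$ be its intersection with $m_0$. The only degenerate case is $P = O$, which maps to $O' = O$; otherwise the parallel to $\delta$ through $P$ is distinct from $\ell_0$ (since $\delta \neq \ell_0$ and they would share $O$), hence meets $m_0$ in a unique point.

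The first step is to write down the predicate $bij(x,y,d)$ asserting exactly this correspondence. Informally,
\[
bij(x,y,d) \ :\Longleftrightarrow\ x \in \ell_0 \ \wedge\ y \in m_0 \ \wedge\ \exists k\,\bigl( x \in k \ \wedge\ y \in k \ \wedge\ \mathrm{Par}(k,d)\bigr)\ \vee\ (x = O \wedge y = O),
\]
where $\mathrm{Par}(k,d)$ is the incidence-only formula for parallelism from Section~\ref{se:incidence} (allowing $k = d$ in the degenerate reading, or handling $x=O$ separately as written), and $\ell_0, m_0, O$ are named by the constants in $\tau_\in$. Since incidence, "line through two given points", and parallelism are all expressible in $\FOL_\in$, this $bij$ is a first-order formula in the incidence language, as required.

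The second step is to verify that for each admissible $\delta$ the relation $\{(x,y) : bij(x,y,\delta)\}$ is the graph of a bijection $\ell_0 \to m_0$. Totality and well-definedness (single-valuedness) follow from the Parallel Axiom: through each $P \in \ell_0$ there is exactly one line parallel to $\delta$, and — because $\delta \neq m_0$ and $\delta \ni O$ — that parallel is not equal to $m_0$, hence meets $m_0$ in exactly one point (using I-1, I-2 and that in an affine plane two non-parallel lines meet in a unique point). Injectivity: if $P_1, P_2 \in \ell_0$ map to the same $Q \in m_0$, then the lines $P_1 Q$ and $P_2 Q$ are both parallel to $\delta$ and both pass through $Q$, so by uniqueness of parallels they coincide; that common line meets $\ell_0$ in a unique point (it is not $\ell_0$ itself, since $\delta \neq \ell_0$), forcing $P_1 = P_2$. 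Surjectivity is the symmetric argument starting from $Q \in m_0$ and drawing the parallel to $\delta$ through $Q$.

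I do not expect a serious obstacle here; the content is entirely a matter of assembling standard affine-plane incidence facts and packaging the projection map as a first-order formula. The one point requiring a little care is the bookkeeping of the degenerate fixed point $O$ and the hypotheses $\delta \neq \ell_0$, $\delta \neq m_0$: these are exactly what guarantee that the auxiliary parallel line is never one of $\ell_0, m_0$, so that the relevant intersections are genuine single points rather than empty or a whole line. Once that is pinned down, the definability claim and the bijection claim both fall out immediately, and the formula $bij$ will be reused in the subsequent construction of the ternary ring operations on $\ell_0$.
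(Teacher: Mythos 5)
Your proof is correct, but it constructs a different bijection than the paper does. The paper's proof goes in two steps: it sends $x \in \ell_0$ to the intersection $z(x)$ of $\delta$ with the line through $x$ parallel to $m_0$, and then to the intersection $y(x)$ of $m_0$ with the line through $z(x)$ parallel to $\ell_0$; in analytic terms this is $x \mapsto sx$ where $s$ is the slope of $\delta$, so that under the resulting identification of $m_0$ with $K=\ell_0$ the auxiliary line $\delta$ becomes the diagonal $\{(x,x)\}$ --- exactly what is used afterwards (e.g.\ Lemma \ref{slope-1}(ii), where $sl(\delta)=1$ because $(1,1)\in\delta$, and property T-1 of the ternary ring). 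Your one-step projection parallel to $\delta$ is simpler and is a perfectly good $\FOL_{\in}$-definable bijection, so it proves the lemma as stated; you also handle the fixed point $O$ more carefully than the paper, which is a genuine plus given that the paper's $Par$ is strict (no line through $O$ is parallel to a line containing $O$), so the paper's own construction silently needs the same parallel-or-equal reading at $x=O$. The one caveat is your closing remark that this $bij$ "will be reused" in the ternary ring construction: in coordinates your map is $x \mapsto -sx$, not $x \mapsto sx$, so under your identification $\delta$ would correspond to $\{(x,-x)\}$ and the later bookkeeping (the role of the slope-one line in T-1 and Lemma \ref{slope-1}) would have to be adjusted or the auxiliary line rechosen; this does not affect the lemma itself, which only asserts the existence of some definable bijection.
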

\begin{proof}
Let $x \in \ell_0$ and $z(x)$ be the intersection with $\delta$ of the line $m_1$ parallel to $m_0$ containing $x$.
Let $y(x) \in m_0$ be the intersection of the line $\ell_1$ parallel to $\ell_0$ containing $z(x)$.
Clearly $f_{\delta}: \ell_0 \rightarrow m_0$ given by $f_{\delta}(x) =y(x)$ is a bijection and is $\FOL$ definable
by a formula $bij(x,y,\delta)$.
\hfill $\Box$
\end{proof}

We will define a structure $RR_{\Pi}$ with universe  a set $K$ (which we take to be $\ell_0$).
Thinking of $\ell_0$ and $m_0$ as axes of a coordinate system we can identify the points of $\Pi$
with pairs of points in $K^2$. The projection of a point $P$ onto $ \ell_0$ is defined by the point $x \in \ell_0$
which is the intersection of the line $m_1$ parallel to $m_0$ with $P \in m_1$.
The projection of a point $P$ onto $m_0$ is defined analogously.
The point $0$ has coordinates $(0,0)$.
Furthermore, we fix an arbitrary point $1 \in \ell_0$ different from $0$ which has coordinates $(1,0)$.

Next we define the {\em slope} of a line $\ell$ in $\Pi$ to be an element $sl(\ell) \in K \cup \{\infty\}$
If $\ell$ is parallel to $\ell_0$ its slope is $0$ and it is called a {\em horizontal} line.
If $\ell$ is parallel to $m_0$ its slope is $\infty$ and it is called a {\em vertical} line.
For $\ell$ not vertical, let $\ell_1$ be the line parallel to $\ell$ and passing through $0$.
Let $(1,a)$ be the coordinates of the intersection of $\ell_1$ with the line vertical line $\ell_2$
passing through $(1,0)$.
Then the slope $sl(\ell)= a$.

This shows:
\begin{lemma}
\label{slope}
There is a first order formula $slope(\ell,a,\delta) \in \FOL_{\in}$ which expresses $sl(\ell)=a$.
with respect to the auxiliary line $\delta$.
\end{lemma}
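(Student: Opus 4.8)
The plan is to mimic the construction of the bijection formula $bij(x,y,\delta)$ from Lemma \ref{bijection} and the coordinate projections, since the slope is defined by exactly the same kind of purely incidence-theoretic recipe: parallelism, intersection, and the fixed auxiliary data $\ell_0, m_0, O, 1$. First I would record that the relation ``$\ell_1$ is parallel to $\ell$'' and the relation ``$P$ is the intersection point of lines $\ell$ and $m$'' are both expressible by first order formulas over $\tau_{\in}$ alone (parallelism is $Par$, already noted to be $\FOL$-definable using incidence only in Section \ref{se:incidence}; intersection is $\exists$-definable, and unique by (I-1)). Likewise ``the line through a given point parallel to a given line'' is $\FOL$-definable by (ParAx). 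These are the atomic building blocks.

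Next I would spell out the formula. Given the auxiliary line $\delta$ through $O$ used to transport coordinates between $\ell_0$ and $m_0$, I would write $slope(\ell, a, \delta)$ as the conjunction asserting: (1) $a \in \ell_0$; (2) $\ell$ is not vertical, i.e. $\ell$ is not parallel to $m_0$ (or, if one wants, one handles the vertical case separately by the convention $sl=\infty$, which just means there is no $a\in K$ — so the formula is simply unsatisfiable there, matching the text); (3) there exists a line $\ell_1$ with $O \in \ell_1$ and $Par(\ell_1,\ell)$; (4) there exists the vertical line $\ell_2$ through the point $1$, characterised by $1 \in \ell_2 \wedge Par(\ell_2, m_0)$; (5) there exists the intersection point $Q$ of $\ell_1$ and $\ell_2$; and (6) $a$ is the projection of $Q$ onto $\ell_0$, i.e. there is a line $m_1$ with $Q \in m_1$, $Par(m_1, m_0)$, and $a \in m_1 \cap \ell_0$. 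All the existentially quantified auxiliary objects ($\ell_1, \ell_2, Q, m_1$) are uniquely determined, so the resulting formula genuinely defines the graph of a partial function $\ell \mapsto sl(\ell)$, as required. The role of $\delta$ here is only to fix the identification $\ell_0 \cong m_0$ under which ``the point $(1,a)$'' makes sense; one threads $\delta$ through the projection steps exactly as in Lemma \ref{bijection}.

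Then I would verify that this formula expresses the intended geometric content, i.e. that the object it defines agrees with the slope as informally described just above the statement. This is the routine part: one checks that $\ell_1$ (parallel to $\ell$, through $O$) and $\ell_2$ (vertical, through $(1,0)$) meet in the point with coordinates $(1,a)$ precisely when $sl(\ell) = a$, which is immediate from the definition of coordinates via the projections already set up, together with the fact that parallel lines have equal slope. I would also note that well-definedness (the uniqueness of $a$) follows from (I-1), (ParAx), and the fact that $\ell_2$ is not parallel to $\ell_1$ (since $\ell_1$ is non-vertical and $\ell_2$ is vertical), so they meet in a unique point.

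The main obstacle is not any deep fact but bookkeeping: making sure every auxiliary line and point in the informal description (``the line parallel to $\ell$ through $0$'', ``the vertical line through $(1,0)$'', ``their intersection'', ``project back to $\ell_0$'') is genuinely first order definable over $\tau_{\in}$ and genuinely unique under the affine-plane axioms, and that the parameter $\delta$ is used consistently so that the notion of ``coordinates $(1,a)$'' invoked in the statement matches the bijection of Lemma \ref{bijection}. Since parallelism and incidence are all that is used, and uniqueness comes from (I-1) and (ParAx), there is nothing beyond assembling the pieces; the lemma then follows. I would present the formula explicitly and leave the verification as a short remark, exactly in the style of Lemmas \ref{bijection} and \ref{slope}.
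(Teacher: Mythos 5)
Your overall strategy is the paper's own: the lemma is meant to be immediate from the construction just described, since every ingredient (parallelism, which is incidence-definable, unique intersection via (I-1), the parallel through a point via (ParAx), and the fixed data $\ell_0, m_0, O, 1, \delta$) is first order over $\tau_{\in}$; the paper offers nothing beyond ``This shows'' either. So the decomposition into clauses (1)--(6) is the right shape of argument.

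However, clause (6) --- the only clause that actually extracts the value $a$ --- is wrong as written. The point $Q=\ell_1\cap\ell_2$ lies on the vertical line $\ell_2$ through the fixed point $1$, so the line $m_1$ through $Q$ parallel to $m_0$ is $\ell_2$ itself and meets $\ell_0$ in the point $1$: your formula therefore forces $a=1$ for every non-vertical $\ell$, i.e.\ it defines the first coordinate of $Q$, not the second. What is needed is to project $Q$ onto $m_0$ along the line through $Q$ parallel to $\ell_0$, obtaining a point $y\in m_0$, and then to pull $y$ back to $K=\ell_0$ using the bijection of Lemma \ref{bijection}, i.e.\ to add the conjunct $bij(a,y,\delta)$. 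This is also where $\delta$ genuinely enters the formula: it is not a background parameter that merely ``makes the coordinates $(1,a)$ meaningful'' but an explicit argument of the clause defining $a$. With that correction (and the harmless convention that ``parallel'' means parallel-or-equal, needed when $O\in\ell$, a convention the paper's informal definition of $sl$ also uses tacitly), your formula does define the graph of $sl$, and your uniqueness argument via (I-1) and (ParAx) goes through.
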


\begin{lemma}
\label{slope-1}
\begin{enumerate}[(i)]
\item
Two lines $\ell, \ell_1$ have the same slope, $sl(\ell)=sl(\ell_1)$ iff they are parallel.
\item
For the line $\delta$ we have $sl(\delta)=1$ (because $(1,1) \in \delta$).
\end{enumerate}
\end{lemma}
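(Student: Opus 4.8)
The statement to prove is Lemma~\ref{slope-1}: (i) two lines $\ell,\ell_1$ have the same slope iff they are parallel, and (ii) the auxiliary line $\delta$ has slope $1$.

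\medskip

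The plan is to reduce everything to the defining construction of the slope given just before the statement, together with the elementary incidence facts about the affine plane $\Pi$ (axioms (I-1)--(I-3) and (ParAx)) and the bijection lemma, Lemma~\ref{bijection}. For part (ii), I would simply unwind the definition: the slope of a line $\ell$ not vertical is obtained by taking the parallel $\ell_1$ to $\ell$ through $0$, intersecting it with the vertical line through $(1,0)$, and reading off the second coordinate $a$ of that intersection point $(1,a)$. Since $\delta$ is by hypothesis a line through $O=0$ different from $\ell_0$ and $m_0$, it is non-vertical and passes through $0$, so $\delta$ itself plays the role of $\ell_1$; and $(1,1)$ lies on $\delta$ precisely because the bijection $f_\delta$ of Lemma~\ref{bijection} was the one sending $x\in\ell_0$ to the point of $m_0$ at the same ``$\delta$-height,'' which at the unit point gives the pair $(1,1)$. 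Hence $sl(\delta)=1$. This is the short, essentially bookkeeping part.

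\medskip

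For part (i), the forward direction is the substantive one. First I would show: if $\ell\parallel\ell_1$ then $sl(\ell)=sl(\ell_1)$. If both are vertical this is immediate (both slopes are $\infty$). If both are non-vertical, the line through $0$ parallel to $\ell$ is, by (ParAx), the same as the line through $0$ parallel to $\ell_1$ (parallelism is transitive in an affine plane, so ``the unique parallel to $\ell$ through $0$'' and ``the unique parallel to $\ell_1$ through $0$'' coincide); intersecting this common line with the fixed vertical line through $(1,0)$ yields the same point $(1,a)$, hence the same slope. Conversely, suppose $sl(\ell)=sl(\ell_1)=a$. Then the parallels to $\ell$ and to $\ell_1$ through $0$ both pass through $(1,a)$ and through $0$; by (I-1) two distinct points determine a unique line (here I must note $(1,a)\neq 0$, which holds since $(1,a)$ has first coordinate $1\neq 0$), so these two parallels are the same line $\ell'$. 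Thus $\ell$ and $\ell_1$ are both parallel to $\ell'$, hence parallel to each other (again by transitivity of parallelism, where I use the convention that a line is parallel to itself, or else treat the case $\ell=\ell_1$ separately). The vertical case: if one of $\ell,\ell_1$ is vertical its slope is $\infty$, which forces the other to be vertical too, and two vertical lines are parallel by definition.

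\medskip

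The main obstacle I anticipate is purely a matter of care rather than depth: making the ``transitivity of parallelism'' arguments airtight, including the degenerate cases (a line parallel to itself, the point $(1,a)$ coinciding with $0$, lines equal to $\ell_0$ or $m_0$), and confirming that the coordinatization map $P\mapsto$ (pair of projections) is well-defined and injective enough that ``passing through $(1,a)$'' pins a line down via (I-1). None of this requires Pappus; it is all at the level of incidence geometry with the parallel axiom, so I would keep the proof to a few lines, citing Lemma~\ref{bijection} and the incidence axioms, and refer the reader to \cite{ivanov2016affine,hall1943projective} for the routine verifications. A one-line remark that both halves of (i) are visibly first-order-expressible (matching the spirit of Lemmas~\ref{bijection} and \ref{slope}) would round it off.
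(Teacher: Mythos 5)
Your proposal is correct, and it matches the paper's (implicit) treatment: the paper states Lemma~\ref{slope-1} without any proof, treating both parts as immediate from the definition of $sl(\ell)$ via the unique parallel through $0$ and the identification of the second axis with $K=\ell_0$ by the bijection $f_\delta$ of Lemma~\ref{bijection}, which is exactly the unwinding you carry out. The only point worth keeping explicit in your write-up is the one you already flag: since the paper defines $\ell\parallel\ell_1$ strictly as having no common point, the equivalence in (i) needs either the convention that a line is parallel to itself (so that, e.g., the parallel to $\ell$ through $0$ is $\ell$ itself when $0\in\ell$) or a separate treatment of the case $\ell=\ell_1$; with that convention your use of (ParAx) for transitivity and of (I-1) through $0$ and $(1,a)$ is exactly the intended routine argument.
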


We now define a ternary operation $T: K \rightarrow K$.
We think of $T(a,x,b)= \langle ax+y \rangle$ as the result of multiplying $a$ with $x$ and then adding $b$.
But we yet have to define multiplication and addition.

Let $a,b,x \in K$. Let $\ell$ be the unique line with $sl(\ell)=a \neq \infty$ 
intersecting the line $m_0$ at the point $P_1 =(0,b)$.
Let $\ell_1 =\{ (x,z) \in K^2 : z \in K\}$. For every $x \in K$ the line $\ell$ intersects $\ell_1$
at a unique point, say $P_2= (x,y)$.
We set $T(a,x,b) = y$.

\begin{lemma}
\label{ptr}
There is a formula $Ter(a,x,b,y,\delta) \in \FOL_{\in}$
which expresses that $(a,x,b)=y$ with respect to the auxiliary line $\delta$.
\end{lemma}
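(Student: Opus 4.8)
The plan is to unwind the geometric definition of $T(a,x,b)$ recalled above into a first order formula, observing that every primitive construction it uses is expressible over $\tau_{\in}$ in an affine plane. I would use two kinds of building blocks: the predicates already at hand, namely $bij(x,y,\delta)$ from Lemma~\ref{bijection} (which says $y=f_\delta(x)$, with $x\in\ell_0$ and $y\in m_0$) and $slope(\ell,a,\delta)$ from Lemma~\ref{slope}; and the purely incidence-theoretic primitives available in any model of (I-1)--(I-3) together with (ParAx): for distinct $P,Q$ there is a unique line $line(P,Q)$ through both; for a point $P$ and a line $\ell$ there is a unique line through $P$ parallel to, or equal to, $\ell$ (the line through $P$ in the direction of $\ell$); and two lines that are not parallel and not equal meet in a unique point. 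Each of these is defined by a short $\tau_{\in}$-formula, so it suffices to compose them.

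First I would record the coordinate projections. Since $K$ is taken to be $\ell_0$, a point $P$ of $\Pi$ has coordinates $(s,t)$ with $s,t\in\ell_0$, where $s$ is the intersection of $\ell_0$ with the line through $P$ in the direction of $m_0$ (the vertical line through $P$), and $t$ is the element of $\ell_0$ with $bij(t,w,\delta)$ for $w$ the intersection of $m_0$ with the line through $P$ in the direction of $\ell_0$; in particular $P$ lies on $m_0$ iff its first coordinate is $O$. All of $s$, $t$, $w$, and the two auxiliary lines are uniquely determined, so ``$P=(s,t)$'' is a first order $\tau_{\in}$-relation.

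Now each clause of the construction of $T(a,x,b)$ becomes one conjunct. The point $P_1=(0,b)$ is the unique point of $m_0$ with $bij(b,P_1,\delta)$. The line $\ell$ is the unique line with $slope(\ell,a,\delta)$ and $P_1\in\ell$: since $a\in K$ (so $a\neq\infty$), $\ell$ is non-vertical, and by Lemma~\ref{slope-1}(i) its slope pins down its direction, hence $\ell$ is determined by its slope and one of its points. The line $\ell_1$ is the vertical line through the point $x\in\ell_0$, i.e. the line through $x$ in the direction of $m_0$. Because $\ell$ is non-vertical and $\ell_1$ is vertical, they are neither parallel nor equal, so they meet in a unique point $P_2$. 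Finally $T(a,x,b)=y$ iff the second coordinate of $P_2$ is $y$, i.e. iff $bij(y,w,\delta)$ holds for $w$ the intersection of $m_0$ with the line through $P_2$ in the direction of $\ell_0$. Collecting these, one obtains an existential $\tau_{\in}$-formula
$$
Ter(a,x,b,y,\delta)\ :\equiv\ \exists P_1\,\exists \ell\,\exists \ell_1\,\exists P_2\,\exists w\ \big(\,\psi_1\wedge\cdots\wedge\psi_5\,\big),
$$
whose matrix is the conjunction of the five clauses above, with $bij$ and $slope$ replaced by the formulas supplied by Lemmas~\ref{bijection} and~\ref{slope} and the three incidence primitives spelled out. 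Since every existentially quantified object is uniquely determined by $(a,x,b)$, the relation defined by $Ter$ is in fact a function, which is what the statement asserts.

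There is no conceptual obstacle here; the work is bookkeeping. The one point that genuinely needs care is keeping straight the identification between abstract ring elements, which live in $K=\ell_0$, and the geometric points on the two axes, which is mediated throughout by the bijection $f_\delta$ and hence by $bij$; relatedly, one must check that no intermediate intersection is empty, which is exactly why it matters that $a$ is a \emph{finite} slope and why $\ell_0$, $m_0$, and the auxiliary line $\delta$ all have to be carried as parameters. The construction is uniform in $\delta$: a single formula works for every admissible choice of $\delta$, which is the content of the phrase ``with respect to the auxiliary line $\delta$''.
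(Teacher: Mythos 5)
Your proposal is correct and matches the paper's (implicit) justification: the paper states Lemma~\ref{ptr} without a separate proof precisely because it follows by formalizing, over $\tau_{\in}$, the construction of $T(a,x,b)$ given immediately before it, using Lemmas~\ref{bijection}, \ref{slope} and \ref{slope-1} together with the incidence/parallelism primitives — which is exactly the bookkeeping you carried out.
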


\begin{lemma}
\label{ptr-1}
The ternary operation $T(a,x,b)$ has the following properties and interpretations:
\begin{description}
\item[T-1:]
$T(1,x,0)=T(x,1,0)=x$
\\
$T(1,x,0)=x$ means that the auxiliary line $d =\{ (x,x) \in K^2 : x \in K \}$
is a line with $sl(d)=1$.
\\
$T(x,1,0)=x$ means that the slope 
of the line $\ell$ passing through $(0,0)$ and $(1,x)$ is given by $sl(\ell)=x$.
This is the true interpretation of the slope in analytic geometry.
\item[T-2:]
$T(a,0,b)=T(0,a,b)=b$
\\
The equation $T(a,0,b)=b$ means that the line $\ell$ defined by $T(a,x,b)=y$
intersects $m_0$ at $(0,b)$ (which is the meaning of $ax+b$ in analytic geometry).
\\
The equation $T(0,a,b)=b$ means that the horizontal line $\ell_1$ passing through $(0,b)$
consists of the points $\{ (a,b) \in K^2 : a \in K \}$.
\item[T-3:]
For all $a,x,y \in K$ there is a unique $b \in K$ such that $T(a,x,b) =y$
\\
This means that for every slope  $s$ different from $\infty$ there is a unique line
$\ell$ with $sl(\ell)=s$ passing through through $(x,y)$.
\item[T-4:]
For every $a, a', b, b' \in K$ and $a \neq a'$ the equation $T(a,x,b) = T(a'x,b')$ has a unique
solution $x \in K$.
\\
This means that two lines $\ell_1$ and $\ell_2$ with different slopes not equal to $\infty$ 
intersect at a unique point $P$.
\item[T-5:]
For every $x, y, x', y' \in K$ and $x \neq x'$ there is a unique pair $a,b \in K$ such that
$T(a,x,b)=y$ and 
$T(a,x',b)=y'$. 
\\
This means that any two points $P_1, P_2$ not on the same vertical line are contained 
in a unique line $\ell$ with
slope different from $\infty$.
\end{description}
\end{lemma}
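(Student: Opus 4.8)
The plan is to translate each of the identities T-1--T-5 into an elementary statement about incidence and parallelism in the affine plane $\Pi$, and then verify it using only (I-1)--(I-3), the parallel axiom (ParAx), and the bookkeeping already set up in Lemmas \ref{bijection}, \ref{slope} and \ref{ptr}. The prose ``interpretations'' accompanying each clause are just commentary, so it suffices to establish T-1--T-5 as equations. Throughout I identify a point $P$ of $\Pi$ with the pair $(u,v)\in K^2$ of its projections onto $\ell_0$ and $m_0$, I call a line \emph{vertical} if it is parallel to $m_0$ and \emph{horizontal} if it is parallel to $\ell_0$, and I use the definition that $T(a,x,b)=y$ holds exactly when the (necessarily non-vertical) line through $(0,b)$ of slope $a$ passes through $(x,y)$. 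The $\FOL_{\in}$-definability of everything in sight is not an issue: $T$ itself is defined by the formula $Ter(a,x,b,y,\delta)$ of Lemma \ref{ptr}, and each clause below is a first order incidence statement, so I will not comment further on definability.

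First the two ``normalisation'' identities T-1 and T-2, which merely unwind the choice of coordinate frame. For T-1, the line of slope $1$ through $(0,0)$ is by Lemma \ref{slope-1}(ii) the auxiliary line $\delta=\{(t,t):t\in K\}$, and $\delta$ meets the vertical line through $(x,0)$ in the point whose two projections agree, namely $(x,x)$; hence $T(1,x,0)=x$. The identity $T(x,1,0)=x$ is exactly the defining property of the slope: the line joining $(0,0)$ to $(1,x)$ has slope $x$, so the vertical line through $(1,0)$ meets the slope-$x$ line through the origin in $(1,x)$. For T-2, the line of slope $a$ through $(0,b)$ already passes through $(0,b)\in m_0$, so $T(a,0,b)=b$; and the slope-$0$ line through $(0,b)$ is the horizontal line through $(0,b)$, whose point set is precisely $\{(t,b):t\in K\}$, so $T(0,a,b)=b$.

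The three remaining clauses are, in effect, restatements of the affine incidence axioms. For T-3, fix $a,x,y$; among the lines of slope $a$ --- a parallel class, by Lemma \ref{slope-1}(i) --- there is exactly one through $(x,y)$; that line is non-vertical, hence not parallel to $m_0$, hence by (I-1) meets $m_0$ in a unique point $(0,b)$, and this $b$ is the unique solution. For T-4, if $a\neq a'$ then the slope-$a$ line through $(0,b)$ and the slope-$a'$ line through $(0,b')$ have different slopes, so by Lemma \ref{slope-1}(i) they are not parallel, hence by (I-1) meet in exactly one point $(x_0,y_0)$, and $x_0$ is the unique solution of $T(a,x,b)=T(a',x,b')$. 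For T-5, if $x\neq x'$ then $(x,y)$ and $(x',y')$ are distinct points not lying on a common vertical line, so the unique line $\ell$ joining them (axiom (I-1)) is non-vertical, has a well-defined slope $a$, and meets $m_0$ in a unique point $(0,b)$; the pair $(a,b)$ works, and any other working pair $(a',b')$ would give a slope-$a'$ line through $(0,b')$ containing both points, forcing it to coincide with $\ell$ and hence $a'=a$, $b'=b$.

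I expect the main obstacle to be organisational rather than conceptual: one must pin down, once and for all, the coordinatisation conventions --- the axis projections, the points $(1,0),(0,1),(1,1)$, the precise meaning of ``the slope of $\ell$'', and the identification of points of $\Pi$ with elements of $K^2$ --- so tightly that the verifications of T-1 and T-2 reduce to the one-line unwindings above; any looseness there makes those two clauses look mysterious. The only place where a genuine geometric hypothesis is consumed is in T-3, where one needs that in $\Pi$ parallelism is an equivalence relation with a representative through every point, i.e.\ the full affine parallel axiom rather than just the ``at most one'' half of (ParAx); clauses T-4 and T-5 use nothing beyond (I-1) and the slope dictionary of Lemma \ref{slope-1}.
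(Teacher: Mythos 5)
Your verification is correct and is essentially the argument the paper leaves implicit: the paper states Lemma \ref{ptr-1} without proof (deferring to Hall's construction as presented in \cite{ivanov2016affine}, with details in \cite{blumenthal1980modern,szmielew1983affine}), and the ``interpretations'' attached to T-1--T-5 are exactly the incidence/parallelism statements you prove from (I-1)--(I-3), (ParAx) and the slope and coordinate conventions of Lemmas \ref{bijection}--\ref{ptr}. Your closing caveat is also well taken: T-3 (and already the paper's definition of slope) uses the \emph{existence} of a parallel through a given point, whereas (ParAx) as stated in Section \ref{se:incidence} only asserts uniqueness, so the full ``exactly one'' form of the parallel axiom is what is actually being consumed.
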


A structure $\langle K, T_K \rangle$  with a ternary operation $T_K$ satisfying (T-1)-(T-5)
is called a {\em planar ternary ring} PTR.
If the PTR
$\langle K, T_K \rangle$ 
 arises from a Pappian plane
we define addition by $add_T(a,b,c)$ by $T(a, 1, b) =c$
and multiplication by $mult_T(a,x,c)$ by $T(a, x, 0) =c$.

We define now the translation schemes
$RR_{ptr} = (line, Ter)$ and
$RF_{field} = (line, add_T, mult_T)$.
The transduction $RR_{ptr}^*$ maps  incidence planes into structures with universe defined by the lines
and a ternary function,  and the transduction
$RF_{ptr}^*$ maps Pappian planes into structures with universe defined by the lines and with two binary operations.

With these definitions we get:
\begin{theorem}[Correctness of $RR_{ptr}$and $RF_{field}$]
\label{pr:ptr}
Let $\Pi$ be plane satisfying I-1, I-2, and I-3 
with distinguished lines $\ell, m, d$ and points $O=(0,0)$ and $I=(1,0)$.
\begin{enumerate}[(i)]
\item
$RR_{ptr}^*(\Pi)$ is a planar ternary ring.
\item
$\Pi$ is a (infinite) Pappian plane iff  
$RF_{field}^*(\Pi)$ is a field (of characteristic $0$).
\end{enumerate}
\end{theorem}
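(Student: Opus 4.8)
For part (i) there is essentially nothing left to do, once one notes that $\Pi$ is assumed (as throughout this subsection, and as is needed even to make the coordinates meaningful) to satisfy the Parallel Axiom. By construction $RR_{ptr}^*(\Pi)$ is the structure $\langle K,T_K\rangle$ whose domain $K$ is (the point set of) the distinguished line $\ell$ and whose ternary operation $T_K$ is the one defined by the formula $Ter$ of Lemma~\ref{ptr}; and Lemma~\ref{ptr-1} states precisely that $T_K$ satisfies (T-1)--(T-5), which are the defining axioms of a planar ternary ring. So (i) follows at once. The substance behind (i) lies in Lemmas~\ref{bijection}--\ref{ptr-1}: that the bijection between $\ell$ and $m$, the slope of a line, and finally $T_K$ itself are all expressible by first-order formulas over $\tau_{\in}$, which is exactly what makes $RR_{ptr}$ --- and with it $RF_{field}$, whose relations $add_T$ and $mult_T$ are particular instances of $Ter$ --- genuine first-order translation schemes.

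For the forward half of (ii), assume $\Pi$ is a Pappian plane and set $a+b:=T_K(a,1,b)$ and $a\cdot b:=T_K(a,b,0)$, so that $RF_{field}^*(\Pi)=\langle K,+,\cdot,0,1\rangle$. The plan is to import the classical coordinatization theory of affine planes (von Staudt; Hall~\cite{hall1943projective}; Artin~\cite{artin1957geometric}; see also \cite{blumenthal1980modern,ivanov2016affine} and Proposition~\ref{th:Artin}): in a Pappian affine plane Hessenberg's theorem yields the two Desargues axioms (De-1), (De-2); from Desargues one obtains that the planar ternary ring is \emph{linear}, $T_K(a,x,b)=(a\cdot x)+b$, that $(K,+)$ is an abelian group and $(K\setminus\{0\},\cdot)$ a group (unit $1$ by (T-1), inverses essentially by (T-4)), and that both distributive laws hold; Pappus then makes $\cdot$ commutative. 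Hence $RF_{field}^*(\Pi)$ is a field. It has characteristic $0$ because $\Pi$ has no finite lines: taking $A=O$ and $B\in m$, $C$ chosen in the obvious way, the points $A_n$ produced by the (InfLines)-construction are exactly the points $(\underbrace{1+\cdots+1}_{n},0)$ of $\ell$, and (InfLines) declares them pairwise distinct, so $\underbrace{1+\cdots+1}_{n}\neq 0$ for every $n\ge 1$.

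For the backward half of (ii), assume $\cF:=RF_{field}^*(\Pi)$ is a field of characteristic $0$. The plan is to reconstruct $\Pi$ from its coordinates: in a plane of this kind every line is either vertical (parallel to $m$) or the graph $\{(x,T_K(a,x,b)):x\in K\}$ of its slope $a$ and intercept $b$, and incidence is read off from $T_K$, so $\Pi$ is determined up to isomorphism by its planar ternary ring $RR_{ptr}^*(\Pi)$. Knowing that $\cF$ is a field, one upgrades this: one checks the ternary ring is linear, $T_K(a,x,b)=a\cdot x+b$ --- equivalently, that $\Pi$ admits all translations in the $m$-direction, which follows from $(K,+)$ being a group compatible with the family of lines --- so that $\Pi\simeq PP_{\in}^*(\cF)$. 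By Theorem~\ref{th:cart-a}(i)--(ii) the incidence plane $PP_{\in}^*(\cF)$ is an infinite Pappian plane; hence so is $\Pi$. (Keeping or deleting ``infinite'' and ``characteristic $0$'' throughout yields the two readings of the parenthetical clauses in the statement.)

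The one genuinely nontrivial ingredient --- and the step I would quote rather than reprove --- is the classical coordinatization input shared by both halves of (ii): that Pappus, via Hessenberg's Desargues, forces the ternary ring to be a commutative field, and conversely that a field of coordinates forces $T_K$ to be linear and thereby pins down $\Pi$. The remainder is comparatively routine: the first-order definability worked out in Lemmas~\ref{bijection}--\ref{ptr-1} (which is what makes $RR_{ptr}$ and $RF_{field}$ translation schemes in the first place), and the short passage from ``no finite lines'' to ``characteristic~$0$'' sketched above.
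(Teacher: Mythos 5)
Your proposal is correct and follows essentially the same route as the paper, which offers no argument of its own for this theorem beyond deferring to the classical coordinatization theory in \cite{blumenthal1980modern,szmielew1983affine}: part (i) is indeed just Lemma \ref{ptr-1} (granting the Parallel Axiom, which you rightly note is tacitly assumed in this subsection), and your sketch of (ii) --- Hessenberg's theorem, linearity of the ternary ring under Desargues, commutativity from Pappus, the converse via $\Pi\simeq PP_{\in}^*(\cF)$ and Theorem \ref{th:cart-a}, and the (InfLines)-to-characteristic-$0$ computation --- is exactly the content of the results the paper cites. The only difference is that you spell out the intermediate steps the paper leaves entirely to the references, which if anything makes your account more informative than the paper's.
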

A detailed proof may be found in \cite{blumenthal1980modern,szmielew1983affine}.


\section{Undecidable geometries}
\label{se:undecidable}
\subsection{Incidence geometries}

First we look $\tau_{\in}$-structures, i.e., at models of the incidence relation alone.
To prove undecidability, the correctness of the translation scheme $RR_{field}$,  Theorem \ref{pr:ptr},
is not enough. We still have to show that
$RR_{field}^*$ is onto as a transduction from Pappus planes to fields.

\begin{theorem}[{\cite[Section 4.5]{szmielew1983affine}}]
\label{th:szmielew}
\ 
\begin{enumerate}[(i)]
\item
If $\Pi$ is a Pappus plane
there is a field $\cF_{\Pi}$ such that $PP_{\in}^*(\cF_{\Pi})$ is isomorphic to $\Pi$.
\item
If additionally $\Pi$ satisfies (Inf),
$\cF_{\Pi}$ is a field of characteristic $0$.
\end{enumerate}
In fact, $\cF_{\Pi}$ can be chosen to be $RR_{field}^*(\Pi)$ from Theorem \ref{th:PTR}.
\end{theorem}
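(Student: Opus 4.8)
The plan is to reconstruct the coordinatization of a Pappian plane concretely enough that the field we produce, when fed back through $PP_{\in}$, returns a plane isomorphic to the one we started with. The starting point is Theorem~\ref{pr:ptr}: for any plane $\Pi$ satisfying (I-1)--(I-3), once we fix distinguished lines $\ell_0, m_0, \delta$ and points $O=(0,0)$, $I=(1,0)$, the transduction $RF_{field}^*$ produces a planar ternary ring $\langle K, T_K\rangle$, and when $\Pi$ is Pappian (resp.\ also satisfies (InfLines)) this PTR is in fact a field $\cF_\Pi = RR_{field}^*(\Pi)$ (resp.\ a field of characteristic $0$). So (i) and (ii) reduce to a single geometric claim: the map sending a point $P$ of $\Pi$ to its coordinate pair $(x,y)\in K^2$ (projections onto $\ell_0$ and $m_0$ along the parallels to $m_0$ and $\ell_0$) and sending a line of $\Pi$ to the triple $(a,b,c)$ encoding its equation in the ternary operation is an isomorphism of incidence structures onto $PP_{\in}^*(\cF_\Pi) = \langle K^2, K^3; \in\rangle$. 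Note the Parallel Axiom is not assumed as a hypothesis of Theorem~\ref{th:szmielew}, but Pappus together with (I-1)--(I-3) yields it, so the parallel-projection coordinate system is well defined.

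First I would set up the coordinate bijection on points: every point $P$ determines a unique vertical line and a unique horizontal line through it (by (I-1) and the existence/uniqueness of parallels), hence a unique pair $(x,y)$; conversely every pair $(x,y)$ arises from the intersection of the corresponding vertical and horizontal lines, which exists and is unique because vertical and horizontal lines have distinct ``slopes'' $\infty$ and $0$. Next I would verify that lines of $\Pi$ correspond exactly to solution sets of $T_K(a,x,b)=y$ (for non-vertical lines, indexed by slope $a$ and intercept $b$) together with the vertical lines $x=c$: properties (T-3), (T-4), (T-5) of Lemma~\ref{ptr-1} are precisely the statements that make this correspondence a bijection on the line sort and that make incidence match up — (T-5) says two points off a common vertical lie on a unique non-vertical line, (T-3) and (T-4) handle existence/uniqueness of intersections and of lines through a point with a given slope. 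The incidence relation is then preserved and reflected essentially by construction: $P=(x,y)$ lies on the non-vertical line with parameters $(a,b)$ iff $T_K(a,x,b)=y$, which is exactly the definition of the incidence relation in $\Pi_{\in}(\cF_\Pi)$.

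The main obstacle is the identification of the algebraic structure $RR_{field}^*(\Pi)$ with an honest \emph{field} and the compatibility of that field's arithmetic with the geometric line equations — i.e.\ checking that $T_K(a,x,b) = a\cdot x + b$ where $\cdot$ and $+$ are $mult_T$ and $add_T$. For a general PTR the ternary operation need not linearize this way; it is exactly the Pappus axiom that forces associativity and commutativity of multiplication, distributivity, and the linearity $T_K(a,x,b)=ax+b$. This is the classical theorem of Hilbert--Hall (via Schur/Artin, Proposition~\ref{th:Artin}), and I would invoke the detailed treatment in \cite{blumenthal1980modern,szmielew1983affine} rather than reprove it: once linearity holds, the solution set of $T_K(a,x,b)=y$ is literally $\{(x,y): ax+b = y\}$, a line of $\Pi_{\in}(\cF_\Pi)$ in the sense of Section~\ref{se:analytic}, and the incidence-structure isomorphism follows. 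For (ii), characteristic $0$ comes from (InfLines): the points $1, 1+1, 1+1+1, \dots$ on $\ell_0$ are obtained by iterating the parallel construction of Lemma~\ref{le:addition}, which is exactly the configuration (InfLines) asserts to have all distinct points, so $\mathbf{n}\neq 0$ in $\cF_\Pi$ for every $n$. The last sentence of the statement — that $\cF_\Pi$ can be taken to be $RR_{field}^*(\Pi)$ — is then immediate since that is the field we used throughout, and its isomorphism type as a $\tau_{field}$-structure is what Theorem~\ref{pr:ptr}(ii) already delivered.
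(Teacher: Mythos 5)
Your outline follows exactly the route the paper itself relies on: the paper gives no proof of Theorem~\ref{th:szmielew}, but cites \cite{szmielew1983affine} (and \cite{blumenthal1980modern}) for the classical Hall-style coordinatization via planar ternary rings, in which Pappus forces linearity of the ternary operation and the field laws --- precisely the argument you sketch, with the same delegation of the hard algebraic step to the literature, and with the point/line coordinate bijection and the verification of incidence via (T-1)--(T-5) spelled out as you do. So in substance the two approaches coincide, and your sketch is correct in outline.

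Two details in your write-up should be repaired. First, the parenthetical claim that the Parallel Axiom ``is not assumed \dots but Pappus together with (I-1)--(I-3) yields it'' is both unnecessary and unjustified: in this paper a Pappus plane is by definition an affine plane, hence already satisfies (ParAx), so the parallel-projection coordinates are available by hypothesis; and the affine Pappus axiom, being a conditional about disjointness of lines, does not in any evident way yield uniqueness of parallels, so you should not lean on that implication. Second, for part (ii) you appeal to ``the parallel construction of Lemma~\ref{le:addition}'', but that lemma belongs to segment arithmetic in Hilbert planes and uses betweenness and equidistance, which are not available in the vocabulary $\tau_{\in}$; the correct mechanism is the PTR addition $add_T$ of Section~\ref{se:ptr}, i.e.\ $a+b = T(a,1,b)$. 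One instantiates (InfLines) with the configuration in which the iterated parallel projection realizes the map $x \mapsto x+1$ on $\ell_0$, so that distinctness of all the points $A_i$ gives $\mathbf{n} \neq 0$ in $RR_{field}^*(\Pi)$ for every $n$, which is characteristic $0$.
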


\begin{corollary}
\label{cor:affine}
$RR_{field}^*$ is onto as a transduction from Pappus planes to fields.
\end{corollary}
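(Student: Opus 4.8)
The plan is to read Corollary~\ref{cor:affine} off directly from the correctness results already at hand, namely Theorem~\ref{th:cart-a}(i), which manufactures a Pappus plane out of any field, and the coordinatization identity contained in Theorem~\ref{th:PTR} (equivalently Theorem~\ref{pr:ptr} together with \cite{blumenthal1980modern,szmielew1983affine}), which says that coordinatizing the analytic plane over $\cF$ returns $\cF$. Unravelled, ``onto'' means that the restriction of $RR_{field}^*$ to the class of Pappus planes surjects onto the class of fields; so I would fix an arbitrary field $\cF$ and exhibit a Pappus plane whose $RR_{field}^*$-image is isomorphic to $\cF$.

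First I would set $\Pi := PP_{\in}^*(\cF) = \Pi_{\in}(\cF)$. By Theorem~\ref{th:cart-a}(i) this $\Pi$ satisfies (I-1)--(I-3), the Parallel Axiom and the Pappus Axiom, hence is a Pappus plane (and, by Theorem~\ref{th:cart-a}(ii), it is moreover infinite as soon as $\cF$ has characteristic $0$, which covers the variant in which (InfLines) is built into the notion). Next, by Theorem~\ref{pr:ptr}(ii), $RR_{field}^*(\Pi)$ is again a field, so $\Pi$ lies in the domain and its image in the target. The one substantive point is then the round-trip isomorphism $RR_{field}^*(PP_{\in}^*(\cF)) \simeq \cF$: this is property (c) in the definition of a \emph{first order coordinatization}, proved for planar ternary rings in \cite{hall1943projective,blumenthal1980modern,szmielew1983affine}, and in the characteristic $0$ case it is precisely Proposition~\ref{th:Artin}(iii). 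Putting these together gives $RR_{field}^*(\Pi) \simeq \cF$; since $\cF$ was arbitrary, $RR_{field}^*$ is onto. A variant of the last step avoids property (c): by Theorem~\ref{th:szmielew} one has $PP_{\in}^*(RR_{field}^*(\Pi)) \simeq \Pi = PP_{\in}^*(\cF)$, and one then cancels $PP_{\in}^*$ using the reconstruction in Proposition~\ref{th:Artin}(iii)--(iv); invoking property (c) directly is cleaner.

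I do not expect any genuine obstacle at the level of this corollary, since all of its weight rests on coordinatization theorems already cited. The only place where something actually has to be verified --- and it is classical --- is that the geometric ternary operation $T(a,x,b)$ on the distinguished axis $\ell_0$ of $\Pi_{\in}(\cF)$, i.e.\ ``intersect the line of slope $a$ through $(0,b)$ with the vertical line through $x$'', decodes through $add_T$ and $mult_T$ to exactly the sum and product of $\cF$, so that the recovered field is literally $\cF$ rather than some unnamed isomorphic copy. One should also fix the distinguished data $\ell_0, m_0, d$ and the points $O=(0,0)$, $I=(1,0)$ as the standard coordinate axes and unit of $\Pi_{\in}(\cF)$; any other admissible choice produces an isomorphic field, which is all that onto-ness requires. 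This is exactly the computation carried out in \cite{hall1943projective} (see also \cite{ivanov2016affine}), so no new work is needed.
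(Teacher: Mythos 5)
Your proposal is correct and takes essentially the same route as the paper: the paper states Corollary~\ref{cor:affine} as an immediate consequence of Theorem~\ref{th:szmielew} together with the round-trip identity $RR_{field}^*(PP_{\in}^*(\cF)) \simeq \cF$ (item (C) of the introduction, i.e.\ Proposition~\ref{th:Artin}(iii)), which is exactly the argument you spell out, and your Szmielew-plus-cancellation variant is precisely the paper's implicit derivation. Your remark about (InfLines) and characteristic $0$ correctly flags the only point where the paper's statement of ``onto fields'' is loose, and it does not affect the use of the corollary in the proof of Theorem~\ref{th:affine-undec}.
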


We now can use Proposition \ref{th:JRobinson}(i) to prove Theorem \ref{th:affine-undec},
which states that the theory $T_{pappus}$
of Pappus planes is undecidable.

\begin{proof}[Proof of Theorem \ref{th:affine-undec}]
Assume, for contradiction, that $T_{pappus}$ is decidable.
By Corollary \ref{cor:affine}, $RR_{field}^*$ is onto, hence the theory of fields
is decidable, which contradicts Proposition \ref{th:JRobinson}.
\hfill $\Box$
\end{proof}

\begin{proof}[Alternative proof of Theorem \ref{th:affine-undec}]
We could also use Proposition \ref{th:JRobinson}(iii) to prove Theorem \ref{th:affine-undec}.
Let $\bQ$ be the field of rational numbers.
In this case we use Lemma \ref{le:interpretability}
with $\cA = PP^*(\bQ)$ and $\cA' = RR^*(\cA)$. $RR^*(\cA) \simeq \bQ$, by Theorem \ref{th:szmielew}.
So $S$ is the complete theory of $\bQ$, which is undecidable by Proposition \ref{th:JRobinson}(iii),
hence $T_{pappus}$ is undecidable.
\hfill $\Box$
\end{proof}

This alternative proof does not work for Hilbert planes and Euclidean planes, because we do not know whether
the complete theories of the fields $\bP$ and $\bE$ are undecidable.

\subsection{Hilbert planes and Euclidean planes}

\ifskip\else
\begin{theorem}[{\cite[Theorems 17.3]{bk:Hartshorne2000}}]
\ \\
Let $\cF$ be any ordered Pythagorean field.
\begin{enumerate}[(i)]
\item
Then $PP_{hilbert}^*(\cF)$ is a Hilbert plane which satisfies the Parallel Axiom.
\item
$PP_{hilbert}^*(\cF)$ is Euclidean iff $\cF$ is Euclidean.
\end{enumerate}
\end{theorem}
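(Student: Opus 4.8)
We only sketch the verification; the full argument is the content of \cite[Chapter~17]{bk:Hartshorne2000}. The plan is to check each of the axioms appearing in the definitions of ``Hilbert plane'', ``P-Hilbert plane'' and ``Euclidean plane'' in Section~\ref{se:axioms} directly inside the Cartesian structure $PP_{hilbert}^*(\cF)$, translating every geometric assertion into an algebraic statement over the ordered field $\cF$; I only point out where the hypotheses on $\cF$ are really used. Here points are the elements of $\cF^2$ and lines are the triples $(a,b,c)$ with $(a,b)\neq(0,0)$ up to scaling. The incidence axioms (I-1)--(I-3) and the Parallel Axiom are pure linear algebra over the field $\cF$: two distinct points lie on a unique line because a $2\times 2$ linear system over a field has the expected solution set, every line has at least two points by parametrisation, $(0,0),(1,0),(0,1)$ are not collinear, and through a given point there is exactly one line with a prescribed direction, which is (ParAx). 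The betweenness axioms (B-1)--(B-4) use only the order on $\cF$: (B-1) and (B-2) are immediate from the definition of $Be$ via collinearity plus the coordinatewise order, (B-3) is trichotomy of $\leq$ after projecting a line onto a coordinate axis, and (B-4) (Pasch), after an affine change of coordinates taking the transversal to the $x$-axis, reduces to a sign computation on the single $y$-coordinate of the third vertex. So far the Pythagorean hypothesis plays no role.

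The congruence axioms (C-0)--(C-6) are where it enters. The crucial case is segment transport (C-1): given a segment $AB$ and a ray issuing from a point $C$ along a line with direction vector $(v_1,v_2)$, the required point is $D=C+\frac{\sqrt d}{\sqrt{v_1^2+v_2^2}}\,(v_1,v_2)$, where $d$ is the squared length $(x_A-x_B)^2+(y_A-y_B)^2$ of $AB$; both $d$ and $v_1^2+v_2^2$ are sums of two squares, hence are squares of elements of $\cF$ \emph{exactly because $\cF$ is Pythagorean}, so $D\in\cF^2$, and its uniqueness on the ray is again trichotomy. Axioms (C-0), (C-2) and (C-3) are algebraic identities about the form $(x,y)\mapsto x^2+y^2$. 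For the angular axioms (C-4)--(C-6) one first promotes the right-triangle tangent of Section~\ref{se:analytic} to a genuine congruence-of-angles relation --- which, given $Eq$ and $Be$, is in any case first-order definable --- and then uses that the Pythagorean hypothesis on $\cF$ makes the isometry group of $\cF^2$ act transitively on pairs (point, ray through it): every plane rotation is an $\cF$-linear isometry with matrix entries $c,s$ satisfying $c^2+s^2=1$, all in $\cF$, and normalising a direction vector costs only a square root of a sum of two squares. Granting this, (C-5) is the definition, (C-4) is angle transport by a suitable rotation, and (C-6) (SAS) follows by carrying the first triangle onto the second by an isometry --- a translation, a rotation, and a reflection if orientations disagree --- matching the given angle and the two given sides, and reading off that the remaining side and angles are preserved. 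This angle bookkeeping --- turning the right-triangle tangent into a usable general notion and pinning down SAS --- is the step I expect to be the main obstacle: it is routine but not short, and is exactly what \cite[Chapter~17]{bk:Hartshorne2000} carries out. This proves (i).

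For (ii), note first that a Euclidean ordered field is Pythagorean (since $x^2+y^2\geq 0$ always, it is a square as soon as every positive element is), so by (i) the structure $PP_{hilbert}^*(\cF)$ is already a P-Hilbert plane and only Axiom~E is at issue. Suppose $\cF$ is Euclidean and let $\Gamma,\Delta$ be circles as in (AxE). If the two centres coincided every point of $\Gamma$ would be equidistant from the centre of $\Delta$, contradicting the hypothesis; so the centres are distinct and, applying an isometry (available since $\cF$ is Pythagorean), I may take $\Gamma$ to be $x^2+y^2=r_1$ and $\Delta$ to be $(x-d)^2+y^2=r_2$ with $d>0$ and $r_1,r_2>0$. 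Subtracting the two equations forces $x=x_0$ with $x_0=\frac{r_1-r_2+d^2}{2d}\in\cF$, so the only possible intersection points are $(x_0,\pm\sqrt{r_1-x_0^2})$. Using the routine fact that ``inside'' and ``outside'' of $\Delta$ agree with strict comparison of the squared distance to its centre with $r_2$, the two hypotheses of (AxE) translate into $p_1>x_0$ and $q_1<x_0$ for suitable points $(p_1,p_2),(q_1,q_2)\in\Gamma$; since $p_1^2\leq r_1$ and $q_1^2\leq r_1$, a short case split on the sign of $x_0$ yields $x_0^2<r_1$, so $r_1-x_0^2>0$ and \emph{here} the fact that $\cF$ is Euclidean provides its square root, whence the two intersection points lie in $\cF^2$ and $\Gamma\cap\Delta\neq\emptyset$.

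For the converse, suppose $PP_{hilbert}^*(\cF)$ satisfies (AxE) and take $a>0$ in $\cF$; we want $\sqrt a\in\cF$. Replacing $a$ by $1/a$ if necessary we may assume $a>1$ (the case $a=1$ being trivial). Let $\Gamma$ be the circle on the diameter $[(0,0),(a+1,0)]$ --- with centre $(\frac{a+1}{2},0)$ and radius $\frac{a+1}{2}\in\cF$, hence a legitimate circle --- and let $\Gamma'$ be its mirror image across the line $x=1$. Then $\Gamma$ and $\Gamma'$ are distinct, and for $a>1$ the point $(0,0)$ of $\Gamma$ is inside $\Gamma'$ while the point $(a+1,0)$ of $\Gamma$ is outside $\Gamma'$, so (AxE) produces a point of $\Gamma\cap\Gamma'$. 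Subtracting the equations of $\Gamma$ and $\Gamma'$ shows that any such point has $x$-coordinate $1$, and the equation of $\Gamma$ then forces its $y$-coordinate $h$ to satisfy $h^2=a$. Hence $\sqrt a\in\cF$, so $\cF$ is Euclidean.
\hfill $\Box$
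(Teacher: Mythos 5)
Your sketch is correct in outline, but note that the paper itself does not prove this statement at all: it is quoted from Hartshorne (Theorem 17.3, together with 20.7, 21.1, 21.2 in the correctness theorem for $FF_{field}$), and the paper's ``proof'' is that citation. What you have written is essentially a reconstruction of Hartshorne's own argument --- rigid motions (translations, rotations, reflections) acting transitively on point--ray pairs, with the Pythagorean hypothesis entering exactly where direction vectors must be normalised (segment transport (C-1) and the rotation matrices for (C-4)--(C-6)), and the Euclidean hypothesis entering exactly at circle--circle intersection. That is a genuine added value over the paper: it makes visible which field-theoretic hypothesis pays for which axiom, which is precisely what drives the later (un)decidability applications. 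Your algebra for Axiom E is sound in both directions: the reduction to $x=x_0$ and the case split giving $x_0^2<r_1$ is right, and the mirrored-circle construction does force $y^2=a$ at $x=1$.

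Two caveats. First, as you yourself flag, the passage from the paper's tangent-based relation $An$ (defined only for rectangular triangles) to a full congruence-of-angles relation satisfying (C-4)--(C-6) is the real content of Hartshorne's Chapter 17; your proof delegates exactly the hard part, so as written it is a proof sketch with the same external dependence as the paper, not an independent proof. Second, a pedantic gap in the converse of (ii): with the paper's betweenness-based definition, ``inside'' requires $Be(A',D,U)$ with distinct points, so the centre of a circle is formally not inside it; your witness $(0,0)$ coincides with the centre of $\Gamma'$ precisely when $a=3$. This is harmless --- either pick another point of $\Gamma$ with $x$-coordinate $<1$ (such points lie in $\cF^2$ because $\cF$ is Pythagorean), or observe that $\sqrt{3}\in\cF$ anyway since $3=1^2+(\sqrt{2})^2$ and $\sqrt{2}=\sqrt{1^2+1^2}\in\cF$ --- but it should be said.
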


\begin{theorem}[{\cite[Theorems 20.7 and 21.1]{bk:Hartshorne2000}}]
\ \\
Let $\Pi$ be a Hilbert Plane  which satisfies the Parallel Axiom.
\begin{enumerate}[(i)]
\item
$FF_{field}^*(\Pi)$ is a field of characteristic $0$ which
can be uniquely ordered to be an ordered field $\cF_{\Pi}$.
\item
Let $\cF_{\Pi}=FF_{field}^*(\Pi)$ be the ordered field of segment arithmetic in $\Pi$.
Then $\cF$ is Pythagorean and $PP_{hilbert}^*(\cF)$ is isomorphic to $\Pi$.
\item
An ordered field $\cF$ 
is Pythagorean  iff
$PP_{hilbert}^*(\cF)$  is a Hilbert Plane which satisfies
the Parallel Axiom.
\end{enumerate}
\end{theorem}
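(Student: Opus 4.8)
The plan is to read all three assertions off the segment arithmetic construction of Section~\ref{se:coord}, using the algebraic and definability facts already established (Propositions~\ref{le:fol-segmentarithmetic}, \ref{le:addition}, \ref{le:multiplication}) together with the classical theorem of Hilbert that every Hilbert plane satisfying the Parallel Axiom is both Desarguesian and Pappian. For (i): by Propositions~\ref{le:addition} and~\ref{le:multiplication} the semiring $\cS_{hilbert}(\Pi)$ of positive segment classes has a commutative associative addition with trichotomy and cancellation, and an associative, unital, distributive multiplication in which every nonzero element is invertible; it is moreover commutative because $\Pi$ is Pappian. Forming formal differences of positive classes, exactly as one builds $\Q$ from $\N$, yields the field $\cF_\Pi=FF_{field}^*(\Pi)$, of characteristic $0$ because the classes $1,\,1+1,\,1+1+1,\dots$ are pairwise distinct: by (C-1) and (B-2) one can lay off arbitrarily many copies of the unit segment along a line and obtain distinct points. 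The total order on segments extends in exactly one way to a field ordering (the sign of a difference $a-b$ is forced by comparing $a$ with $b$), giving the ordered field $\cF_\Pi$.

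For (ii): $\cF_\Pi$ is Pythagorean because, given segment classes $a$ and $b$, one erects a perpendicular (orthogonality being definable from equiangularity in a Hilbert plane), lays off $a$ and $b$ on the two legs by (C-1), and takes the hypotenuse, whose class $c$ satisfies $c^2=a^2+b^2$ by the Pythagorean theorem, valid in every Hilbert plane with the Parallel Axiom; hence every sum of two squares in $\cF_\Pi$ is a square. For the isomorphism $PP_{hilbert}^*(\cF_\Pi)\cong\Pi$ one fixes in $\Pi$ an origin $O$, two perpendicular axes and the unit segment, and sends a point $P$ to the pair of signed segment classes obtained by dropping perpendiculars from $P$ onto the two axes. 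This map is onto $Points(\cF_\Pi)=\cF_\Pi^2$ (realize prescribed coordinates via (C-1)) and injective; the substance is to check that it carries the lines of $\Pi$ to the zero sets of linear equations over $\cF_\Pi$ and back, and that it intertwines the synthetic relations with the analytic ones of Section~\ref{se:analytic}: incidence and betweenness reduce to the order on segments, equidistance to the distance formula $\sqrt{(x_1-x_2)^2+(y_1-y_2)^2}$ (which lands in $\cF_\Pi$ precisely because $\cF_\Pi$ is Pythagorean), and congruence of angles to the tangent formula via the intercept (similar triangles) theorem.

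For (iii): the implication ``$\cF$ ordered Pythagorean $\Rightarrow PP_{hilbert}^*(\cF)$ is a Hilbert plane satisfying the Parallel Axiom'' is Theorem~\ref{th:cart-o}(i). Conversely, if $PP_{hilbert}^*(\cF)$ is such a plane, then the laying-off axiom (C-1), applied to the unit segment and the ray spanned by a vector $(p,q)$, forces $\sqrt{p^2+q^2}\in\cF$ for all $p,q\in\cF$, so $\cF$ is Pythagorean; alternatively, apply (i)--(ii) to $\Pi:=PP_{hilbert}^*(\cF)$ and note that $FF_{field}^*(\Pi)\cong\cF$. The Euclidean refinements stated alongside --- $\Pi$ Euclidean iff $FF_{field}^*(\Pi)$ Euclidean, and $\cF$ Euclidean iff $PP_{hilbert}^*(\cF)$ Euclidean (with Theorem~\ref{th:cart-o}(ii) giving one direction) --- follow from the same dictionary, since under the coordinate translation Axiom~E on circles corresponds exactly to solvability of $y^2=x$ for $x\ge 0$.

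I expect the real obstacle to be the middle clause of (ii): showing that the coordinate map sends lines to linear loci and respects the metric relations. ``Lines go to lines'' is exactly where Desargues' theorem is needed --- it is what makes the coordinatization additive, so that a line acquires an equation with coefficients in $\cF_\Pi$ --- whereas commutativity of $\cF_\Pi$, hence Pappus, is what lets the algebra of these equations close up; and matching synthetic angle congruence with the analytic tangent condition is the most delicate bookkeeping. A complete verification along these lines is carried out in \cite[Chapters~18--21]{bk:Hartshorne2000}.
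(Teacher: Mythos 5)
Your proposal is correct and follows essentially the same route as the paper, which proves this statement simply by citing Hartshorne (Theorems 20.7, 21.1 and 21.2) on top of the segment-arithmetic machinery of Section~\ref{se:coord} (Propositions~\ref{le:fol-segmentarithmetic}, \ref{le:addition}, \ref{le:multiplication}); your sketch just spells out the cited argument and, like the paper, defers the detailed verification (lines go to linear loci, metric relations match) to \cite[Chapters~18--21]{bk:Hartshorne2000}.
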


\begin{theorem}[{\cite[Theorems 21.2]{bk:Hartshorne2000}}]
\ 
\begin{enumerate}[(i)]
\item 
$\Pi$ is a Euclidean plane iff $FF_{field}^*(\Pi)$ is a Euclidean field.
\item
$\cF$ is a Euclidean fields iff $PP_{hilbert}^*(\cF)$ is a Euclidean plane.
\end{enumerate}
\end{theorem}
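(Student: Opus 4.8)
The plan is to derive (i) from (ii) using the coordinatization already in hand, so that all the real work sits in (ii), which identifies the geometric intersection axiom (AxE) with the algebraic condition ``every positive element has a square root''. To dispose of (i): by Theorem~\ref{th:corr-segment}(i)--(ii), for a Hilbert plane $\Pi$ with the Parallel Axiom the segment field $\cF_{\Pi} := FF_{field}^*(\Pi)$ is a Pythagorean ordered field of characteristic $0$ and $PP_{hilbert}^*(\cF_{\Pi}) \simeq \Pi$ as a $\tau_{hilbert}$-structure. Since ``Euclidean plane'' is an isomorphism-invariant condition (``P-Hilbert plane'' plus (AxE)), $\Pi$ is a Euclidean plane iff $PP_{hilbert}^*(\cF_{\Pi})$ is, and by (ii) the latter holds iff $\cF_{\Pi} = FF_{field}^*(\Pi)$ is a Euclidean field; this is (i) in the direction $\Rightarrow$. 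Conversely, if $FF_{field}^*(\Pi)$ is Euclidean then by (ii) $PP_{hilbert}^*(FF_{field}^*(\Pi)) \simeq \Pi$ is a Euclidean plane.

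For the direction ``$\cF$ Euclidean $\Rightarrow$ $PP_{hilbert}^*(\cF)$ Euclidean'' of (ii) --- essentially Theorem~\ref{th:cart-o}(ii) --- I would argue in coordinates. A Euclidean ordered field is Pythagorean (a sum of squares is $\ge 0$, hence has a root), so $PP_{hilbert}^*(\cF)$ is a P-Hilbert plane by Theorem~\ref{th:corr-segment}(iii), and only (AxE) remains. In the supplied coordinates a circle $\Gamma(A,BC)$ is the zero set of $(x-a_1)^2+(y-a_2)^2-\rho$ with squared radius $\rho = (b_1-c_1)^2+(b_2-c_2)^2 \in \cF$, while ``inside''/``outside'' become ``$<\rho$''/``$>\rho$''. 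Given $\Gamma,\Delta$ as in (AxE), subtracting the two circle equations cancels the quadratic part and gives a line $L$ (the radical axis); substituting a parametrization of $L$ into one circle equation yields a quadratic $q(t)$ over $\cF$, and the hypothesis that $\Gamma$ meets both the interior and the exterior of $\Delta$ forces $q$ to take values of both signs, hence to have non-negative discriminant. Since $\cF$ is Euclidean the required square root exists in $\cF$, so $q$ has a root and $\Gamma \cap \Delta \neq \emptyset$.

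For the converse ``$PP_{hilbert}^*(\cF)$ Euclidean $\Rightarrow$ $\cF$ Euclidean'', take $t \in \cF$ with $t > 0$ and produce $\sqrt{t}$ as follows. First extract from (AxE) the circle--line intersection property --- a line through a point strictly inside a circle meets the circle --- a standard deduction found in \cite[Chapter~16]{bk:Hartshorne2000}. Then run the classical altitude-on-the-hypotenuse construction in $PP_{hilbert}^*(\cF)$: on a segment of length $1+t$ erect the semicircle with that segment as diameter and drop the perpendicular at the point dividing the diameter into lengths $1$ and $t$; circle--line intersection makes the perpendicular meet the semicircle, and the intercepted segment has squared length $t$ by similarity of the two resulting right triangles (equivalently by the segment multiplication of Section~\ref{se:coordinates}). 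Thus $t$ has a square root in the segment field, which is $\cF$ up to isomorphism by Theorem~\ref{th:corr-segment}, and $\cF$ is Euclidean. I expect the main obstacle to be exactly this bidirectional match in (ii): the sign/discriminant bookkeeping above and, more substantially, the passage from circle--circle to circle--line intersection feeding the geometric-mean construction; both are classical and are carried out in detail in \cite[Chapters~16 and~21]{bk:Hartshorne2000}, after which parts (i)--(iii) of Theorem~\ref{th:corr-segment} make (i) and the ``already a P-Hilbert plane'' step of (ii) automatic.
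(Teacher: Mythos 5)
Your proposal is correct in outline, but it is worth saying plainly that it does genuinely more than the paper, which offers no argument at all for this statement: the paper's entire proof is the citation to \cite[Theorems 20.7, 21.1 and 21.2]{bk:Hartshorne2000}, and your sketch is essentially a reconstruction of the content of those theorems. Your decomposition --- push all the work into (ii) and recover (i) from the isomorphisms $PP_{hilbert}^*(FF_{field}^*(\Pi)) \simeq \Pi$ of Theorem \ref{th:corr-segment}(i)--(ii) --- is sound and arguably cleaner than quoting two separate Hartshorne results, since (ii) is proved purely in coordinates and (i) then follows by transport along an isomorphism-invariant property. Two small points. First, in the converse direction of (ii) you identify the segment field of $PP_{hilbert}^*(\cF)$ with $\cF$ ``by Theorem \ref{th:corr-segment}''; that theorem gives the composition in the other order, and the isomorphism you actually need, $FF_{field}^*(PP_{hilbert}^*(\cF)) \simeq \cF$, is exactly the paper's Lemma \ref{le:new-1} (stated for Pythagorean $\cF$, which suffices since Euclidean implies Pythagorean). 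Alternatively you can bypass the segment field altogether: in the coordinate plane over $\cF$ the circle on the diameter from $(0,0)$ to $(1+t,0)$ meets the vertical line $x=1$ in a point whose second coordinate $y$ lies in $\cF$ and satisfies $y^2=t$, so (AxE) (via line--circle intersection) hands you the square root directly. Second, the step ``the hypothesis forces $q$ to take values of both signs'' hides the real geometric content of circle--circle intersection, namely that the radical axis passes through the interior of $\Gamma$; you correctly flag this and delegate it to \cite[Chapter 16]{bk:Hartshorne2000}, which is an acceptable level of detail given that the paper delegates everything.
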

\fi 

Similarly, the correctness of the translation scheme
$FF_{field}^*$, Theorem \ref{th:corr-segment}, is not enough. 

We need one more lemma\footnote{
In \cite{balbiani2007logical} it is overlooked that Lemma \ref{le:new-1} is
needed in order to apply Lemma \ref{le:onto}.
In \cite{bk:Schwabhaeuser1983} it is used properly but not explicitly stated.
}:

\begin{lemma}
\label{le:new-1}
Let $\cF$ be a Pythagorean field
and $\Pi_{\cF}= PP_{hilbert}^*(\cF)$.
Then $FF_{field}^*(\Pi_{\cF})$ is isomorphic to $\cF$.
\end{lemma}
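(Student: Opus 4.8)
The plan is to show that the composite transduction $FF_{field}^* \circ PP_{hilbert}^*$ acts as the identity (up to isomorphism) on Pythagorean fields, by tracking how the geometric constructions in $\Pi_{\cF}$ unwind to the field operations of $\cF$. The cleanest route is to identify the segment arithmetic of $\Pi_{\cF}$ explicitly and check it matches the arithmetic of $\cF$. Concretely, I would fix the reference segment $1 = [A_0,A_1]$ used in the definition of $FF_{field}$ to be $A_0 = (0,0)$ and $A_1 = (1,0)$ in the Cartesian plane $\Pi_{\cF} = PP_{hilbert}^*(\cF)$, and fix the auxiliary orthogonal line through $A_0$ to be the $y$-axis. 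With this choice there is a natural candidate isomorphism $h$ sending the equivalence class $[(0,0),(a,0)]$ of a positive segment to the element $a \in \cF$ (for $a > 0$), extended to signed segments in the usual way by the quotient construction that produces the ordered field from the semiring of segments.

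The key steps, in order, are: (1) invoke Theorem \ref{th:corr-segment}(i)--(ii) to know that $FF_{field}^*(\Pi_{\cF})$ is indeed a Pythagorean ordered field and that $PP_{hilbert}^*(FF_{field}^*(\Pi_{\cF}))$ is isomorphic to $\Pi_{\cF}$ — this already pins down the field up to the ambiguity we must resolve; (2) verify that the geometric definition of segment \emph{addition} in $\Pi_{\cF}$ — laying segments end to end along a line via the betweenness relation $Be$, which in $\Pi_{\cF}$ is the coordinate betweenness of Section \ref{se:analytic} — computes, on representatives based at the origin along the $x$-axis, exactly $a + b$ in $\cF$; (3) verify the analogous statement for segment \emph{multiplication}: the construction in Proposition \ref{le:multiplication} uses a parallelism-and-orthogonality configuration which, when read off in coordinates, is precisely the similar-triangles construction, so the product of segments of lengths $a$ and $b$ has length $ab$ computed in $\cF$; (4) check that the neutral elements and the order match (the zero segment $[A_0,A_0]\mapsto 0$, the unit segment $[A_0,A_1]\mapsto 1$, and $[A_0,(a,0)]$ precedes $[A_0,(b,0)]$ in the segment order iff $a < b$ in $\cF$); (5) conclude that $h$, extended by the standard field-of-fractions-style construction from the ordered semiring of segments to the ordered field, is a field isomorphism $FF_{field}^*(\Pi_{\cF}) \cong \cF$.

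I expect step (3), the multiplication, to be the main obstacle — not because it is deep, but because the geometric multiplication of segments is defined via an auxiliary configuration (parallel and orthogonal lines with a fixed unit leg) whose coordinate description has to be computed carefully to see that it really yields $ab$ rather than $ab$ scaled by some artifact of the choice of auxiliary line; one must check the construction is independent of the choices made (which is part of Proposition \ref{le:multiplication}) and that with the normalization $A_0=(0,0)$, $A_1=(1,0)$ the scaling factor is $1$. A secondary subtlety is bookkeeping the passage from the semiring of \emph{positive} segments to the full ordered field: the map $h$ is defined on positive segments, and one has to confirm that the standard construction (Grothendieck-style completion plus inversion, as in the lemma building $\Q$ from $\N$) applied to a semiring already isomorphic to the positive part of $\cF$ returns $\cF$ itself — this is routine but needs to be stated. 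Finally I would remark that, as the footnote in the excerpt flags, this lemma is exactly the ``onto'' ingredient needed to combine Theorem \ref{th:corr-segment} with Lemma \ref{le:onto}(iv) in the undecidability argument for Hilbert and Euclidean planes.
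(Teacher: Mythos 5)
Your proposal is correct and follows essentially the same route as the paper: the paper's own proof also identifies each $a \geq 0$ in $\cF$ with the segment $[(0,0),(a,0)]$ on the line $y=0$ and composes the resulting isomorphism with the one between segment arithmetic and $FF_{field}^*(\Pi_{\cF})$. You simply spell out the verifications (that geometric addition, multiplication, unit and order on coordinate segments reproduce the operations of $\cF$, and the semiring-to-ordered-field passage) that the paper's terser proof leaves implicit.
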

\begin{proof}
For every $a \in \cF$ with $a \geq 0$ there is a segment in $\Pi_{\cF}$
of the form $[(0,0), (a,0)]$ on the line $y=0$,
and every segment $[(0,0), (b,0)]$
on the line $y=0$ corresponds to an element $b \in \cF$.
We conclude that there is an isomorphism $f$  of ordered fields between $\cF$ and the 
segments on the line $y=0$.
Similarly, there is an isomorphism  of ordered fields $g$ between the
segments on the line $y=0$ and the field $FF_{field}^*(\Pi_{\cF})$.
Composing the two isomorphisms gives the required isomorphism between
$\cF$ and $FF_{field}^*(\Pi_{\cF})$.
\hfill $\Box$
\end{proof}

\begin{corollary}
\label{co:onto-1}
\ 
\begin{enumerate}[(i)]
\item 
$FF_{field}^*$ is onto as a transduction from Hilbert planes to ordered Pythagorean fields.
\item 
$FF_{field}^*$ is onto as a transduction from Euclidean planes to ordered Euclidean fields.
\end{enumerate}
\end{corollary}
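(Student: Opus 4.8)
The plan is to read ``$FF_{field}^*$ is onto from $M$ to $N$'' as the conjunction of two conditions — (a) $FF_{field}^*(\Pi)\in N$ for every $\Pi\in M$, and (b) every $\cF\in N$ is isomorphic to $FF_{field}^*(\Pi)$ for some $\Pi\in M$ — and to verify each directly from the correctness theorems already proved. For part (i) I take $M$ to be the Hilbert planes satisfying the Parallel Axiom and $N$ the ordered Pythagorean fields; for part (ii) I take $M$ the Euclidean planes and $N$ the ordered Euclidean fields.

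For (i), condition (a) is just Theorem~\ref{th:corr-segment}(i)--(ii): $FF_{field}^*(\Pi)$ is a characteristic-$0$ field carrying a unique compatible order, and it is Pythagorean, hence lies in $N$. Condition (b) is where Lemma~\ref{le:new-1} does the work: given an ordered Pythagorean field $\cF$, the plane $\Pi_{\cF}=PP_{hilbert}^*(\cF)$ is a Hilbert plane satisfying the Parallel Axiom by Theorem~\ref{th:cart-o}(i), so $\Pi_{\cF}\in M$, and $FF_{field}^*(\Pi_{\cF})\simeq\cF$ by Lemma~\ref{le:new-1}. For (ii) the argument is identical, with Theorem~\ref{th:corr-segment}(iv) added for condition (a) — a Euclidean plane is in particular a Hilbert plane with the Parallel Axiom, and its segment field is Euclidean — and Theorem~\ref{th:corr-segment}(v) (equivalently Theorem~\ref{th:cart-o}(ii)) added for (b), to see that $\Pi_{\cF}$ is a Euclidean plane whenever $\cF$ is an ordered Euclidean field.

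The one step that needs care — and the only thing I expect to be a genuine (if minor) obstacle — is that Lemma~\ref{le:new-1} is stated for \emph{Pythagorean} fields, so before invoking it in part (ii) one must observe that an ordered Euclidean field is automatically Pythagorean: the field is formally real, so any sum of two squares is $\ge 0$, and being Euclidean it then has a square root. Everything else is bookkeeping. I would close by recording why onto-ness is worth isolating as a corollary here: it is exactly the hypothesis that Lemma~\ref{le:onto}(iv) (and Lemma~\ref{le:interpretability}) requires in order to transfer undecidability from the theories of (ordered) Pythagorean and Euclidean fields back to the geometries of Hilbert planes and Euclidean planes.
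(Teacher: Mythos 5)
Your proposal is correct and follows essentially the route the paper intends: the corollary is presented there as an immediate consequence of Lemma~\ref{le:new-1} combined with the correctness theorems (Theorem~\ref{th:corr-segment} for the forward direction and Theorem~\ref{th:cart-o} for producing a preimage $\Pi_{\cF}=PP_{hilbert}^*(\cF)$), which is exactly your decomposition into conditions (a) and (b). Your added observation that an ordered Euclidean field is Pythagorean, needed to apply Lemma~\ref{le:new-1} in part (ii), is a correct and worthwhile explicit check that the paper leaves tacit.
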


Using Ziegler's Theorem \ref{th:Ziegler}, 
Lemma \ref{le:onto},
Theorem \ref{th:corr-segment} and
and 
Corollary \ref{co:onto-1}
we conclude:
\begin{theorem}
\label{th:euclid-undec}
\ 
\begin{enumerate}[(i)]
\item 
The consequence problem for the Hilbert plane which satisfies the Parallel Axiom is undecidable.
\item 
The consequence problem for the Hilbert plane is undecidable.
\item 
The consequence problem for the Euclidean Plane is undecidable.
\end{enumerate}
\end{theorem}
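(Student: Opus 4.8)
The plan is to mimic, for Hilbert and Euclidean planes, exactly the pattern used to prove the undecidability of $T_{pappus}$ in Theorem~\ref{th:affine-undec}, but with segment arithmetic $FF_{field}$ in place of the planar ternary ring transduction $RR_{field}$, and with Ziegler's Theorem~\ref{th:Ziegler} supplying the undecidable target theory instead of J. Robinson's Proposition~\ref{th:JRobinson}. The reason we must switch to Ziegler is exactly the one flagged just before the statement: segment arithmetic coordinatizes a Hilbert plane satisfying (ParAx) by an \emph{ordered Pythagorean} field, and a Euclidean plane by an \emph{ordered Euclidean} field, and we do not know that the complete theories of the minimal such fields $\bP$ and $\bE$ are undecidable — so the ``alternative proof'' route via a fixed field $\bQ$ is unavailable. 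What we do know (Theorem~\ref{th:Ziegler}) is that the theory of ordered Pythagorean fields, and the theory of ordered Euclidean fields, are undecidable, being either finite or of the form $T^*=T\cup\{\mathbf{n}\neq 0 : n\in\N\}$ with $T$ finite and having $\R$ as a model.

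The key steps, in order, are: (1) Recall from Theorem~\ref{th:corr-segment} that $FF_{field}$ is a first order translation scheme such that for every Hilbert plane $\Pi$ satisfying (ParAx) the structure $FF_{field}^*(\Pi)$ is an ordered Pythagorean field, and $FF_{field}^*(\Pi)$ is Euclidean exactly when $\Pi$ is a Euclidean plane; and from Theorem~\ref{th:cart-o} that $PP_{hilbert}^*$ maps ordered Pythagorean (resp. Euclidean) fields to Hilbert planes satisfying (ParAx) (resp. Euclidean planes). (2) Invoke Corollary~\ref{co:onto-1}: combining Theorem~\ref{th:corr-segment}(ii) (which says $PP_{hilbert}^*(FF_{field}^*(\Pi))\simeq\Pi$) with the freshly-proved Lemma~\ref{le:new-1} (which says $FF_{field}^*(PP_{hilbert}^*(\cF))\simeq\cF$), we conclude that $FF_{field}^*$ restricted to the class of Hilbert planes satisfying (ParAx) is \emph{onto} the class of ordered Pythagorean fields, and likewise $FF_{field}^*$ restricted to Euclidean planes is onto the ordered Euclidean fields. (3) Apply Lemma~\ref{le:onto}(iv) in contrapositive form: if $T_{p-hilbert}$ (resp. $T_{euclid}$) were decidable then, since $FF_{field}^*$ maps its models onto all models of the theory of ordered Pythagorean (resp. Euclidean) fields, that theory of fields would be decidable — contradicting Theorem~\ref{th:Ziegler}. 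Hence $T_{p-hilbert}$ and $T_{euclid}$ are undecidable, giving parts (i) and (iii). (4) For part (ii), the theory $T_{hilbert}$ of the Hilbert plane is a subtheory of $T_{p-hilbert}$, obtained by dropping the single axiom (ParAx); since $T_{p-hilbert} = T_{hilbert}\cup\{(\mathrm{ParAx})\}$ and $T_{hilbert}$ decidable would force $T_{hilbert}\cup\{(\mathrm{ParAx})\}$ decidable by Lemma~\ref{le:deduction}, the undecidability of $T_{p-hilbert}$ propagates down to $T_{hilbert}$.

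The step I expect to carry the real weight is step~(2), i.e. the ontoness of $FF_{field}^*$, and more precisely the ingredient Lemma~\ref{le:new-1} — that composing ``build the Cartesian plane over $\cF$'' with ``read off the segment-arithmetic field'' recovers $\cF$ up to isomorphism. This is the analogue for Hilbert planes of Theorem~\ref{th:szmielew} for Pappian planes, and it is exactly the point the paper stresses is routinely omitted in the literature: without it, one only knows that $FF_{field}^{\sharp}$ hits \emph{some} formulas, not that the transduction is surjective onto \emph{all} models of the field theory, and the reduction in Lemma~\ref{le:onto}(iv) genuinely needs surjectivity. Its proof is the explicit coordinate computation sketched in Lemma~\ref{le:new-1}: segments of the form $[(0,0),(a,0)]$ on the line $y=0$ of $PP_{hilbert}^*(\cF)$ are in order-field isomorphism with $\cF$, and with the abstractly-defined segment field, so the composite is the identity up to isomorphism. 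Everything else is bookkeeping: the correctness theorems are quoted, Ziegler's Theorem is quoted, and Lemmas~\ref{le:onto} and~\ref{le:deduction} are exactly the general-purpose transfer tools assembled earlier for this purpose.
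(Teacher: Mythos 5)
Your proposal matches the paper's own argument essentially step for step: it derives ontoness of $FF_{field}^*$ from Theorem~\ref{th:corr-segment} together with Lemma~\ref{le:new-1} (Corollary~\ref{co:onto-1}), feeds this into Lemma~\ref{le:onto}(iv) against the Ziegler-undecidable theories of ordered Pythagorean and ordered Euclidean fields, and handles part (ii) by the same finite-extension argument (Lemma~\ref{le:deduction}) the paper uses for dropping (ParAx). Your emphasis on Lemma~\ref{le:new-1} as the load-bearing ontoness ingredient is exactly the point the paper itself stresses, so there is nothing to correct.
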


\subsection{Wu's geometry}
There are two systems of orthogonal geometry, Wu's orthogonal geometry $T_{o-wu}$ and 
Wu's metric geometry $T_{m-wu}$.

Recall that in Wu's orthogonal geometry we have as basic relation
incidence $A \in l$ and Orthogonality $Or(l_1, l_2)$,
but neither betweenness nor equidistance.
We also require that models of Wu's geometry are infinite Pappian planes without finite lines.

\begin{theorem}
\label{th:orthogonal}
\begin{enumerate}[(i)]
\item
Let  $\cF$ be a Pythagorean field of characteristic $0$. 
Then $PP_{wu}(\cF)$ is a metric Wu plane. 
\item
Conversely, let $\Pi$ is a metric Wu plane
then $RF_{field}^*(\Pi)$
is a Pythagorean field of characteristic $0$.
\item
Furthermore, $RF_{field}^*(PP_{wu}(\cF))$ is isomorphic to $\cF$
and $PP_{wu}( RF_{field}(\Pi)$ is isomorphic to $\Pi$.
\end{enumerate}
\end{theorem}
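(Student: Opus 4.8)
The plan is to show that Theorem~\ref{th:orthogonal} is an instance of the general correctness machinery already developed for $PP_{wu}$ and $RF_{field}$, combined with the coordinatization results for Pappian planes. For part~(i), I would invoke Theorem~\ref{th:cart-a}(iii): if $\cF$ is a Pythagorean field of characteristic $0$, then $PP_{wu}^*(\cF)$ satisfies (I-1)--(I-3), (O-1)--(O-5), the Parallel Axiom, the Axiom of Infinity, the Axiom of Desargues, and the Axiom of Symmetric Axis, which are exactly the axioms of a metric Wu plane $T_{m-wu}$. So (i) is immediate from the earlier correctness theorem; there is nothing left to prove beyond matching definitions.

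For part~(ii), let $\Pi$ be a metric Wu plane. Since $\Pi$ satisfies (I-1)--(I-3), the Parallel Axiom, Pappus (which follows in the presence of Desargues plus the metric structure, or is built into the definition of the ambient Pappian plane as stipulated just before the theorem), and the Axiom of Infinity, Theorem~\ref{pr:ptr}(ii) applies: $RF_{field}^*(\Pi)$ is a field of characteristic $0$. It remains to see that this field is Pythagorean. Here I would argue geometrically inside $\Pi$: the orthogonality relation $Or$ together with the Axiom of Symmetric Axis lets one carry out the standard "right-triangle" construction that realizes $\sqrt{x^2+y^2}$ as a segment; transporting this through the coordinatization $RF_{field}$ shows every sum of two squares in $RF_{field}^*(\Pi)$ is a square. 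Concretely, one shows that the metric-Wu axioms force the geometric mean / Pythagorean construction to close up, and then reads off the algebraic identity $\forall x\forall y\,\exists u\,(u^2 = x^2+y^2)$ in the coordinate field. This step is where the \emph{metric} part of Wu's axioms (AxSymAx or AxTrans) is genuinely used, as opposed to the merely orthogonal part.

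For part~(iii), the two isomorphisms are the round-trip identities for the translation schemes. The direction $RF_{field}^*(PP_{wu}^*(\cF)) \simeq \cF$ should follow by the same argument as Theorem~\ref{th:szmielew} / Lemma~\ref{le:new-1}: the coordinate field extracted from the analytic plane $\Pi_{wu}(\cF)$ via planar ternary rings is canonically isomorphic to $\cF$, because the distinguished lines $\ell_0, m_0$ can be taken to be the coordinate axes and the ternary operation $T(a,x,b)$ computes exactly $ax+b$ in $\cF$. The reverse direction $PP_{wu}^*(RF_{field}^*(\Pi)) \simeq \Pi$ follows from Theorem~\ref{th:szmielew}(i) applied to the underlying Pappian plane (giving an incidence isomorphism $PP_{\in}^*(RR_{field}^*(\Pi)) \simeq \Pi$), upgraded to the vocabulary $\tau_{wu}$ by checking that the analytic orthogonality relation $Or_{\cF}$ on $RF_{field}^*(\Pi)$ corresponds under this isomorphism to the given orthogonality on $\Pi$ --- which is forced because orthogonality in a metric Wu plane is determined by the coordinatization (the metric axioms pin down the "orthogonality form" up to scaling, and normalizing via the choice of unit segment fixes it).

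The main obstacle I anticipate is part~(ii), specifically proving that $RF_{field}^*(\Pi)$ is Pythagorean: unlike the purely incidence-theoretic content (which is handed to us by Theorem~\ref{pr:ptr}), this requires extracting a concrete algebraic closure property from the metric axioms, and one must be careful that the Pythagorean construction carried out in $\Pi$ actually descends to the coordinate field rather than to some larger or differently-normalized structure. A secondary technical point is the compatibility of orthogonality under the round-trip isomorphism in part~(iii): one must verify that the isomorphism furnished by Theorem~\ref{th:szmielew}, which is only asserted as an incidence isomorphism, respects $Or$ --- this should reduce to the observation that in a metric Wu plane orthogonality is a first-order definable notion over the coordinatized structure, so any incidence isomorphism that also preserves the coordinate-field structure automatically preserves it. Everything else is routine bookkeeping with the translation-scheme formalism of Section~\ref{se:transductions}.
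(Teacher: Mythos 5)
Your proposal is correct and matches the paper's (largely implicit) treatment: the paper states Theorem~\ref{th:orthogonal} without giving a proof, since part (i) is already contained in Theorem~\ref{th:cart-a}(iii), part (ii) follows from Theorem~\ref{pr:ptr}(ii) together with the facts imported from \cite{bk:Wu1994}, and part (iii) is the $\tau_{wu}$-analogue of the round-trip arguments of Theorem~\ref{th:szmielew} and Lemma~\ref{le:new-1} --- which is exactly your decomposition, and the two steps you single out as genuinely nontrivial (that AxSymAx forces the coordinate field to be Pythagorean, and that the given orthogonality on $\Pi$ corresponds to the standard analytic orthogonality under the round-trip isomorphism) are precisely the ingredients the paper delegates to Wu's correctness results. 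One small caution: do not lean on the parenthetical claim that Pappus ``follows in the presence of Desargues plus the metric structure''; the paper instead simply stipulates, just before the theorem, that models of Wu's geometry are infinite Pappian planes without finite lines, and it is that stipulation (your alternative reading) which licenses the appeal to Theorem~\ref{pr:ptr}(ii).
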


\begin{corollary}
\label{cor:o-onto}
$RF_{field}^*$ maps metric Wu planes onto Pythagorean fields.
\end{corollary}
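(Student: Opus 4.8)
The plan is to read off Corollary \ref{cor:o-onto} directly from Theorem \ref{th:orthogonal}, which is the hard analytic/synthetic work and which we are allowed to assume. The statement ``$RF_{field}^*$ maps metric Wu planes onto Pythagorean fields'' means precisely that for every Pythagorean field $\cF$ (of characteristic $0$) there exists a metric Wu plane $\Pi$ with $RF_{field}^*(\Pi) \simeq \cF$. So the whole argument is a one-line surjectivity bookkeeping, exactly parallel to Corollary \ref{cor:affine} and Corollary \ref{co:onto-1}.

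First I would fix an arbitrary Pythagorean field $\cF$; since a Pythagorean field (in the sense of the paper, living inside an ordered field) is of characteristic $0$, Theorem \ref{th:orthogonal}(i) applies and gives that $\Pi := PP_{wu}(\cF)$ is a metric Wu plane. Next, Theorem \ref{th:orthogonal}(iii) tells us that $RF_{field}^*(PP_{wu}(\cF))$ is isomorphic to $\cF$, i.e.\ $RF_{field}^*(\Pi) \simeq \cF$. Hence every Pythagorean field is in the image of $RF_{field}^*$ restricted to the class of metric Wu planes, which is the assertion. One should also note, for completeness, that by Theorem \ref{th:orthogonal}(ii) the image of $RF_{field}^*$ on metric Wu planes lands \emph{inside} the class of Pythagorean fields of characteristic $0$, so the map is genuinely ``onto that class'' and not merely ``onto a superset''.

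There is essentially no obstacle here: all the content — the correctness of the transduction $PP_{wu}$, the definability of the ternary-ring coordinatization $RF_{field}$, and the two round-trip isomorphisms — has already been absorbed into Theorem \ref{th:orthogonal} (which itself rests on Theorem \ref{pr:ptr}, Proposition \ref{th:Artin}, and Wu's correctness theorem \ref{th:cart-a}(iii)). The only thing worth being slightly careful about is the characteristic: one must invoke the fact, recorded in Section \ref{se:fdecidability}, that an ordered (hence Pythagorean) field is automatically of characteristic $0$, so that the hypothesis of Theorem \ref{th:orthogonal}(i) is met without extra assumptions. With that remark in place the corollary follows immediately.

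\begin{proof}[Proof of Corollary \ref{cor:o-onto}]
Let $\cF$ be a Pythagorean field. Being Pythagorean (as an ordered field), $\cF$ has characteristic $0$, so by Theorem \ref{th:orthogonal}(i) the structure $\Pi := PP_{wu}(\cF)$ is a metric Wu plane. By Theorem \ref{th:orthogonal}(iii), $RF_{field}^*(\Pi) = RF_{field}^*(PP_{wu}(\cF))$ is isomorphic to $\cF$. Thus every Pythagorean field arises, up to isomorphism, as $RF_{field}^*(\Pi)$ for some metric Wu plane $\Pi$. Conversely, by Theorem \ref{th:orthogonal}(ii), whenever $\Pi$ is a metric Wu plane the structure $RF_{field}^*(\Pi)$ is a Pythagorean field of characteristic $0$. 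Hence $RF_{field}^*$ maps the class of metric Wu planes onto the class of Pythagorean fields.
\hfill $\Box$
\end{proof}
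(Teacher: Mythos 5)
Your proof is correct and follows exactly the route the paper intends: the corollary is an immediate consequence of Theorem \ref{th:orthogonal}, with part (i) providing a metric Wu plane $PP_{wu}(\cF)$ for each field $\cF$, part (iii) giving $RF_{field}^*(PP_{wu}(\cF))\simeq\cF$, and part (ii) confirming the image lies in the right class. One small caveat: in Wu's setting the coordinate fields are plain (unordered) Pythagorean fields, so characteristic $0$ is not automatic from an ordering as you claim; it should simply be read as part of the corollary's intended class, matching the hypotheses of Theorem \ref{th:orthogonal}(i).
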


\begin{theorem}
\label{th:wu-undecidable}
\item
The consequence problem for (Wu-metric) is undecidable.
\item
The consequence problem for
(Wu-orthogonal) is undecidable.
\end{theorem}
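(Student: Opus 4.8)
\medskip
\noindent\textbf{Proof strategy.}
The plan is to mimic the proof of Theorem~\ref{th:euclid-undec}, but feeding in the planar--ternary--ring coordinatization $RF_{field}$ of Pappian planes (Theorem~\ref{pr:ptr}) in place of segment arithmetic, and using the undecidability of the theory of \emph{Pythagorean fields of characteristic~$0$} in place of the undecidability of ordered Pythagorean fields. For part~(i) [Wu--metric] I would put $\Phi = RF_{field}$, let $T = T_{m-wu}$ be the axioms of a metric Wu plane, and let $T'$ be the first order theory of Pythagorean fields of characteristic~$0$. First, observe that $T'$ is undecidable: it has the shape $T_0 \cup \{\mathbf{n} \neq 0 : n \in \N\}$, where $T_0$ consists of the finitely many field axioms together with the single Pythagorean axiom, and all of $T_0$ holds in $\langle\R,+,\times\rangle$; so Theorem~\ref{th:Ziegler}(i) applies (this is among the theories shown undecidable in Section~\ref{se:fdecidability}). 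Second, check that $\Phi^*$ maps $Mod(T)$ \emph{onto} $Mod(T')$: by Theorem~\ref{th:orthogonal}(ii) the image of every metric Wu plane is a Pythagorean field of characteristic~$0$, and by Corollary~\ref{cor:o-onto} every such field arises in this way. Now Lemma~\ref{le:onto}(iv), read contrapositively, finishes: were $T_{m-wu}$ decidable then $T'$ would be decidable, a contradiction. Hence the consequence relation of $T_{m-wu}$ is undecidable (it is computably enumerable, being recursively axiomatized).

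For part~(ii) [Wu--orthogonal] I would not repeat the coordinatization but transfer undecidability downward along the inclusion $T_{o-wu} \subseteq T_{m-wu}$. Since $T_{m-wu} = T_{o-wu} \cup \{(\mathrm{AxSymAx})\}$ is obtained from the orthogonal axioms by adjoining the single sentence asserting that any two intersecting non-isotropic lines have a symmetric axis, Lemma~\ref{le:deduction} gives: if $T_{o-wu}$ were decidable then so would be $T_{o-wu} \cup \{(\mathrm{AxSymAx})\} = T_{m-wu}$. Contraposing part~(i), $T_{o-wu}$ is undecidable. This is precisely the device used to pass from Theorem~\ref{th:affine-undec}(i) to Theorem~\ref{th:affine-undec}(ii), and it has the advantage of never requiring us to decide whether an orthogonal Wu plane that fails (AxSymAx) is still Pappian.

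The substantial content is all imported --- the first order definability and correctness of the planar--ternary--ring coordinatization (Theorems~\ref{pr:ptr} and~\ref{th:orthogonal}), its \emph{surjectivity} as a transduction (Corollary~\ref{cor:o-onto}), and Ziegler's Theorem~\ref{th:Ziegler} itself. Within the present argument the one step that genuinely needs care, and the one routinely glossed over in the literature, is the onto-ness of $\Phi^*|_{Mod(T)}$ in part~(i): without it Lemma~\ref{le:onto}(ii) shows that undecidability need not descend through a transduction, so the reduction collapses. A secondary point to watch is that the theory of Pythagorean characteristic-$0$ fields really be presented in the $T \cup \{\mathbf{n} \neq 0\}$ shape to which Theorem~\ref{th:Ziegler} literally applies, rather than in some other finite axiomatization.
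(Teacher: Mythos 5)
Your proposal is correct and follows essentially the same route as the paper: coordinatize metric Wu planes by the planar ternary ring transduction $RF_{field}$ (Theorems \ref{pr:ptr} and \ref{th:orthogonal}), use the surjectivity in Corollary \ref{cor:o-onto} together with Lemma \ref{le:onto}(iv) to transfer the Ziegler-type undecidability of Pythagorean fields of characteristic $0$, and obtain the orthogonal case by the one-extra-axiom argument (Lemma \ref{le:deduction}). The only cosmetic difference is that you invoke Ziegler's Theorem \ref{th:Ziegler}(i) over $\langle\R,+,\times\rangle$ while the paper's proof cites the $ACF_0$ version; both apply since the finite Pythagorean field axioms hold in $\R$ and in $\C$.
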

\begin{proof}
(i):
Use Ziegler's Theorem \ref{th:Ziegler} for $ACF_0$,
Lemma \ref{le:onto},
Theorem \ref{pr:ptr}
and Corollary \ref{cor:o-onto}.
\\
(ii):
We observe that (Wu-metric) is obtained from (Wu-orthogonal)
by adding one more axiom. This gives that if (Wu-orthogonal) were decidable, so would (Wu-metric)
be decidable, which contradicts (i).
\hfill $\Box$
\end{proof}

\subsection{Origami geometry}

In Origami Geometry we have also points and lines,
the incidence relation $A \in l$, the orthogonality relation $Or(l_1, l_2)$,
 a relation $SymP(A,l,B)$ and a relation $d(A,l_1, l_2)$.

The intended interpretation of $SymP(A,l,B)$ states that
$A$ and $B$ are symmetric with respect to $l$, i.e., $l$ is perpendicular to the
line $AB$ and intersects $AB$ at a point $C$ such that $Eq(A,C)$ and $Eq(C,B)$.

The intended interpretation of $d(A,l_1, l_2)$ states that the point $A$ has the same 
perpendicular distance from $l_1$ and $l_2$.

Clearly, $SymP(A,l,B)$ and $d(A,l_1, l_2)$
are definable in a Hilbert plane using $\in, Eq, Or$.

An {\em Origami Plane} is an infinite Pappian plane without finite lines 
which satisfies additionally the axioms
(H-1) to (H-7).

\begin{theorem}[{\cite{alperin2000mathematical}}]
\label{th:origami}
If $\Pi$ is an Origami plane, then $RF_{field}^*(\Pi)$ is a Vieta field.
\\
Conversely, for every Vieta field $\cF$ the structure $PP_{origami}^*(\cF)$
is an Origami field.
\end{theorem}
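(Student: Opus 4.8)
The plan is to run both directions through the first-order coordinatization of Section~\ref{se:coord}, reading each folding axiom as a polynomial-solvability statement about the coordinate field. Two preliminary reductions set this up. First, $\tau_{o-origami}=\tau_{wu}$, and by the Proposition stated just after the definition of an affine Origami plane the relations $SymLine$ and $Peq$ appearing in (H-1)--(H-7) are first-order definable from $\in$, $Eq$, $Or$; hence (H-1)--(H-7) are genuine $\tau_{wu}$-sentences, and $PP_{origami}$ is just $PP_{wu}$ composed with these definitions, so $PP_{origami}^*(\cF)$ and $PP_{wu}^*(\cF)$ satisfy the same $\tau_{origami}$-sentences. Second, one checks that an Origami plane, being an infinite Pappian plane without finite lines in $\tau_{wu}$ that also satisfies (H-3) and (H-4), is in particular a \emph{metric Wu plane} --- the orthogonality axioms (O-1)--(O-5) and the symmetric-axis axiom being consequences of (H-3), (H-4) and the Pappian incidence pattern --- so Theorem~\ref{th:orthogonal} applies to it.

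\emph{The converse is the easy half.} Let $\cF$ be a Vieta field; for the statement to be well-posed one reads ``$PP_{origami}^*(\cF)$ is an Origami plane'' and takes $\cF$ of characteristic $0$ (equivalently, consistent with $ACF_0$) --- this is the case needed for the undecidability corollary (Theorem~\ref{th:undec-origami}), and without it (InfLines) can fail, e.g.\ for an algebraically closed field of positive characteristic. Then Theorem~\ref{th:cart-a}(i),(ii) makes $PP_{origami}^*(\cF)=PP_{wu}^*(\cF)$ an infinite Pappian plane with no finite lines, and Theorem~\ref{th:cart-a}(iv) makes it satisfy (H-1)--(H-7); by definition it is an Origami plane.

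\emph{The forward direction.} Let $\Pi$ be an Origami plane. Being an infinite Pappian plane without finite lines, Theorem~\ref{pr:ptr} gives that $\cF:=RF_{field}^*(\Pi)$ is a field of characteristic $0$; since $\Pi$ is a metric Wu plane, Theorem~\ref{th:orthogonal}(ii),(iii) gives in addition that $\cF$ is Pythagorean of characteristic $0$ and that $PP_{wu}^*(\cF)\simeq\Pi$ as $\tau_{wu}$-structures. Each Huzita axiom $\eta$ is a $\tau_{wu}$-sentence true in $\Pi$, so by the Fundamental Property of translation schemes (Proposition~\ref{pr:fundamental}) $\cF\models(PP_{wu})^{\sharp}(\eta)$. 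It remains to recognise these translated sentences as the defining properties of a Vieta field. Using the analytic definitions of $\in$, $Eq$, $Or$ from Section~\ref{se:analytic}, $(PP_{wu})^{\sharp}(\text{H-5})$ asserts that a certain quadratic, with coefficients polynomial in the coordinates of the given data, always has a root; specialising the data yields a square root of an arbitrary $a\in\cF$, and completing the square (char $\neq 2$) then disposes of all polynomials of degree $\le 2$.

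\emph{The main obstacle} is the analysis of $(PP_{wu})^{\sharp}(\text{H-6})$. The fold placing $P_1$ onto $\ell_1$ and $P_2$ onto $\ell_2$ is the classical \emph{Beloch fold} --- a line tangent simultaneously to the parabola with focus $P_1$, directrix $\ell_1$ and to the parabola with focus $P_2$, directrix $\ell_2$ --- and its slope satisfies a cubic whose coefficients, after the affine normalisations supplied by (H-1)--(H-5), can be matched to those of an arbitrary monic depressed cubic $x^{3}+px+q$ over $\cF$. One must check that this realises the \emph{general} cubic rather than a proper subfamily, and --- for a self-contained proof --- carry out the underlying ``Lill's method / Beloch square'' computation purely algebraically inside the abstract field $\cF$, where there is no order or topology to lean on, only polynomial identities. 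Granting this, every polynomial of degree $\le 3$ over $\cF$ has a root, so $\cF$ is a Vieta field. This computation is exactly the content of Alperin's analysis \cite{alperin2000mathematical}, and I expect it, not any logical step, to be where all the work lies.
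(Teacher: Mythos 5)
The paper itself offers no proof of Theorem~\ref{th:origami}: it is stated as a citation to Alperin \cite{alperin2000mathematical}, whose analysis is carried out analytically for subfields of the reals/complex numbers. So your proposal has to be judged against the route the paper's machinery implicitly suggests, and in broad outline yours is that route: coordinatize via $RF_{field}$, transfer the Huzita axioms through the isomorphism $PP_{wu}^*(RF_{field}^*(\Pi))\simeq\Pi$, and read them as solvability of quadratics (H-5) and of the general cubic via the Beloch fold (H-6), deferring that computation to Alperin. Your remark that the converse needs characteristic $0$ (a Vieta field of positive characteristic, e.g.\ an algebraically closed field of characteristic $p$, breaks (InfLines)) is a legitimate point that the paper's statement glosses over.

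The genuine gap is the step you dispose of with ``one checks'': that an Origami plane is a metric Wu plane, i.e.\ that (O-1)--(O-5), the Desargues axioms and (AxSymAx) follow from the Pappian incidence structure together with (H-1)--(H-7). Your forward direction stands or falls with this, because $RF_{field}$ is built from incidence alone; to know that the abstract relations $Or$, $SymLine$, $Peq$ of $\Pi$ are carried to the analytic relations of Section~\ref{se:analytic} over $\cF=RF_{field}^*(\Pi)$ you invoke Theorem~\ref{th:orthogonal}(iii), whose hypothesis is precisely that $\Pi$ is a metric Wu plane. But (H-4) only yields (O-2) and (O-3); symmetry of orthogonality (O-1), the existence of non-isotropic lines (O-4), and above all the altitude concurrence (O-5) --- which in Wu's system is an independent, substantive axiom and is the engine of his coordinatization of orthogonality --- do not obviously follow from closure under folds, nor does (AxSymAx). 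Either these derivations must be supplied, or one needs a separate correctness theorem for the Origami coordinatization (the analogue of Theorem~\ref{th:orthogonal} with (H-1)--(H-7) in place of the Wu axioms); Alperin's paper does not provide the latter in the abstract setting, since he works inside $\C$ where the intended interpretations of $Or$ and $SymLine$ are built in. As it stands, the reduction of the abstract theorem to Alperin's cubic computation is therefore incomplete at exactly this point.
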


\begin{corollary}
\label{cor:origami-onto}
$RR_{field}^*$ maps Origami planes onto Vieta fields.
\end{corollary}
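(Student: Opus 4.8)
The plan is to read the statement off directly from Theorem \ref{th:origami} together with the correctness of the coordinatization transduction $RF_{field}$ established in Section \ref{se:ptr}; this is entirely parallel to the way Corollary \ref{cor:affine} follows from Theorem \ref{th:szmielew} and Corollary \ref{cor:o-onto} follows from Theorem \ref{th:orthogonal}. Recall that ``$RF_{field}^*$ maps Origami planes \emph{onto} Vieta fields'' asserts two things: first, that $RF_{field}^*(\Pi)$ is a Vieta field whenever $\Pi$ is an Origami plane; second, that every Vieta field arises, up to isomorphism, as $RF_{field}^*(\Pi)$ for some Origami plane $\Pi$. (The corollary is stated with the name $RR_{field}^*$; this is the same transduction as the $RF_{field}^*$ of Section \ref{se:ptr}, and I would simply note this.)

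The first half is exactly the first sentence of Theorem \ref{th:origami}. For the second half I would fix an arbitrary Vieta field $\cF$. By the second sentence of Theorem \ref{th:origami}, the structure $\Pi := PP_{origami}^*(\cF)$ is an Origami plane; in particular it is an infinite Pappian plane without finite lines, so Theorem \ref{th:szmielew} — equivalently the correctness of $RR_{ptr}$ and $RF_{field}$, Theorem \ref{pr:ptr} — applies to it. Since the coordinate field built by the planar-ternary-ring construction depends only on the incidence reduct, and the extra relations $Or$ and $Eq$ of $\tau_{origami}$ over $\cF$ are interpreted exactly by the field-theoretic formulas of Section \ref{se:analytic}, we obtain
$$
RF_{field}^*(\Pi) = RF_{field}^*(PP_{origami}^*(\cF)) \simeq \cF .
$$
Hence $\cF$ lies in the image of $RF_{field}^*$ restricted to the class of Origami planes, which is the claim.

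The main point requiring care — and the only real obstacle — is to confirm that the round-trip isomorphism $RF_{field}^*(PP_{origami}^*(\cF)) \simeq \cF$ is genuinely an instance of the already-established correctness results rather than something new: enlarging the vocabulary from $\tau_{\in}$ to $\tau_{origami}$ must not disturb the construction. This is immediate once one observes that the distinguished lines and points $\ell_0, m_0, d, O, I$ used to define $RF_{field}^*$ all live in the incidence reduct, so by Theorem \ref{th:szmielew}(ii) the field they produce is isomorphic to $\cF$ and the added relations contribute nothing. With this remark in place no further computation is needed, and the corollary follows.
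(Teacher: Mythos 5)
Your argument is correct and takes essentially the same route as the paper, which presents Corollary \ref{cor:origami-onto} as an immediate consequence of Theorem \ref{th:origami} combined with the already-established incidence-level coordinatization, i.e.\ exactly the round-trip isomorphism $RF_{field}^*(PP_{origami}^*(\cF)) \simeq \cF$ that you make explicit (together with the observation that $RR_{ptr}$/$RF_{field}$ only use the incidence reduct). The one small imprecision is that the round-trip fact you need is Proposition \ref{th:Artin}(iii) (item (C) of the introduction) rather than Theorem \ref{th:szmielew}(ii), which only asserts that the coordinate field has characteristic $0$; this does not affect the validity of your proof.
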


Using Ziegler's Theorem \ref{th:Ziegler}, Lemma \ref{le:onto} and \ref{le:deduction}
we can now apply Theorem \ref{th:origami} and Corollary \ref{cor:origami-onto}
to conclude:

\begin{theorem}
\label{th:undec-origami}
\begin{enumerate}[(i)]
\item
The consequence problem for Origami Planes is undecidable.
\item
The consequence problem for the Huzita axioms (H-1)-(H-7) is undecidable.
\end{enumerate}
\end{theorem}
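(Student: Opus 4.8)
The plan is to run the reduction exactly as for Wu's geometry (Theorem~\ref{th:wu-undecidable}), with \emph{Vieta fields} playing the role that Pythagorean fields of characteristic $0$ played there, and with Ziegler's Theorem~\ref{th:Ziegler} invoked in its $ACF_0$ form. Write $T_{origami}$ for the theory of Origami planes and $T_{vieta}$ for the theory of Vieta fields. The first thing I would do is record that $T_{vieta}$ is undecidable: this is an instance of the theorem on undecidable theories of fields in Section~\ref{se:fdecidability}, because the Vieta axioms amount to the field axioms together with the two first-order sentences saying that every polynomial of degree $2$ and every polynomial of degree $3$ has a root, and --- since Origami planes are required to have no finite lines --- one also adjoins the characteristic-$0$ scheme $\{\mathbf{n}\neq 0 : n\in\N\}$; thus $T_{vieta}$ has the form $T\cup\{\mathbf{n}\neq 0 : n\in\N\}$ with $T$ finite, and it has an algebraically closed field of characteristic $0$ as a model, so Theorem~\ref{th:Ziegler}(ii) applies.

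Next I would assemble the transfer. By Theorem~\ref{th:origami} the first-order transduction $RF_{field}^*$ (whose defining scheme is the one built from planar ternary rings in Section~\ref{se:ptr}, Theorem~\ref{pr:ptr}) sends every Origami plane to a Vieta field, and $PP_{origami}^*$ sends every Vieta field back to an Origami plane; and Corollary~\ref{cor:origami-onto} records that $RF_{field}^*$ maps the class of all Origami planes \emph{onto} the class of all Vieta fields. With $T=T_{origami}$, $T'=T_{vieta}$ and $\Phi=RF_{field}$ this is precisely the hypothesis required by Lemma~\ref{le:onto}(iv). Hence for (i): were $T_{origami}$ decidable, Lemma~\ref{le:onto}(iv) would make $T_{vieta}$ decidable, contradicting the undecidability of $T_{vieta}$ established above. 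For (ii): the theory of Origami planes is obtained from the Huzita axioms (H-1)--(H-7) (collected, with the affine background, in $T_{a-origami}$) by adjoining the single axiom (Pappus) --- an affine plane satisfying (Pappus) is Pappian --- so decidability of the Huzita theory would, by Lemma~\ref{le:deduction}, yield decidability of $T_{origami}$, contradicting (i).

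The genuinely substantive point is not in this logical plumbing but in Theorem~\ref{th:origami} and Corollary~\ref{cor:origami-onto}, which I am taking as given: that $RF_{field}$ is first-order definable and, above all, that $RF_{field}^*$ is \emph{surjective} from Origami planes onto Vieta fields. This surjectivity is the hypothesis most often silently dropped (cf.\ the remark after Lemma~\ref{le:onto}), and it is where all of the geometry is actually used. Beyond that, the only delicate matters are the two pieces of bookkeeping flagged above: checking that $T_{vieta}$ genuinely fits Ziegler's template --- finite modulo the characteristic-$0$ scheme, with an algebraically closed characteristic-$0$ model, so that here, unlike in the Pythagorean and Euclidean cases, one must pass through a complex rather than a real model --- and checking that going between (H-1)--(H-7) and the full theory of Origami planes only requires adjoining finitely many further axioms, so that Lemma~\ref{le:deduction} applies verbatim.
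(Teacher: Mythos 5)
Your proposal is correct and follows essentially the same route as the paper, which derives (i) from Ziegler's Theorem~\ref{th:Ziegler} in its $ACF_0$ form together with Theorem~\ref{th:origami}, the surjectivity recorded in Corollary~\ref{cor:origami-onto}, and Lemma~\ref{le:onto}, and gets (ii) by adjoining finitely many further axioms via Lemma~\ref{le:deduction}. Your explicit check that the Vieta axioms fit Ziegler's template (finitely many field axioms plus the characteristic-$0$ scheme, with an algebraically closed model) just spells out what the paper leaves implicit.
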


\section{Decidability for fragments of first order logic}
\label{se:decidable}
Problems in high-school geometry are usually of the form
\begin{quote}
Given a configuration between points $\bar{p}$ (and lines) described by a quantifier-free formula
$\phi(\bar{p})$ show that these points also satisfy a quantifier-free formula $\psi(\bar{p})$
$$
\sigma(\bar{p}):
\forall \bar{p} (\phi(\bar{p} \rightarrow \psi(\bar{p}))
$$
\end{quote}
A typical example would be:
\begin{quote}
Of the three altitudes $\ell_1,\ell_2, \ell_3$ of a triangle $P_1P_2P_2$ 
which intersect pairwise at the points 
$P_{1,2}$
$P_{1,3}$
$P_{2,3}$, show that 
$P_{1,1}=P_{1,2}=P_{1,3}$.
\end{quote}
The formula $\sigma$ is a universal Horn formula in $\fU\fH= \fU \cap \fH$.

In the literature the following was observed:

\begin{proposition}
\label{pr:universal}
\ \\
\begin{description}
\item[(\cite{kapur1988refutational})]
The universal consequences of $T_{m-wu}$ are decidable.
\item[(\cite{pambuccian1994ternary})]
The universal consequences of 
$T_{p-hilbert}$  and $T_{euclid}$ are decidable.
\end{description}
\end{proposition}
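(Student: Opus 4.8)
The plan is to derive both clauses of Proposition~\ref{pr:universal} from the decidability of the universal consequences of the associated theories of fields, Proposition~\ref{pr:f-universal}, by transporting the question across the coordinatization transductions set up in Section~\ref{se:coord}. The mechanism is Lemma~\ref{le:onto}(iv) (decidability transfers along an onto transduction) together with the Fundamental Property of Translation Schemes, Proposition~\ref{pr:fundamental}, in its quantifier-free form: if $\Phi$ is quantifier-free and $\theta$ is universal, then $\Phi^{\sharp}(\theta)$ is again universal. So a universal geometric statement pulls back to a universal field statement and vice versa, and nothing leaves the universal fragment along the way.

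First I would treat $T_{m\text{-}wu}$. By Theorem~\ref{th:orthogonal} and Corollary~\ref{cor:o-onto}, the transduction $RF_{field}^*$ maps metric Wu planes \emph{onto} Pythagorean fields of characteristic $0$, and by Theorem~\ref{th:cart-a}(iii) the inverse direction $PP_{wu}^*$ sends such fields back to metric Wu planes, with $RF_{field}^*(PP_{wu}(\cF))\simeq\cF$; moreover both $RF_{field}$ and $PP_{wu}$ are quantifier-free (Proposition~\ref{pr:PP} and the definitions in Section~\ref{se:ptr}). Let $F$ be the theory of Pythagorean fields of characteristic $0$; it has $\R$ as a model, hence is consistent with $ACF_0$, so by Proposition~\ref{pr:f-universal}(i) its universal consequences are decidable and in fact coincide with the universal consequences of $ACF_0$. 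Now given a universal geometric sentence $\theta$ over $\tau_{wu}$, I would argue $T_{m\text{-}wu}\models\theta$ iff $F\models PP_{wu}^{\sharp}(\theta)$: the forward direction uses that every Pythagorean field $\cF$ of characteristic $0$ yields a model $PP_{wu}^*(\cF)\models T_{m\text{-}wu}$ on which $\theta$ holds, so by Proposition~\ref{pr:fundamental} $\cF\models PP_{wu}^{\sharp}(\theta)$; the converse uses onto-ness, i.e.\ every metric Wu plane $\Pi$ is (isomorphic to) $PP_{wu}^*(RF_{field}^*(\Pi))$ with $RF_{field}^*(\Pi)\models F$, so $F\models PP_{wu}^{\sharp}(\theta)$ forces $\Pi\models\theta$. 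Since $PP_{wu}^{\sharp}(\theta)$ is universal and the universal consequences of $F$ are decidable, decidability of the universal consequences of $T_{m\text{-}wu}$ follows. (This is essentially the content of \cite{kapur1988refutational}, where the decision procedure is realized concretely via the Nullstellensatz and Gr\"obner bases rather than quantifier elimination.)

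For $T_{p\text{-}hilbert}$ and $T_{euclid}$ the argument is parallel but over \emph{ordered} fields, using $FF_{field}$ instead of $RF_{field}$. By Theorem~\ref{th:corr-segment} and Corollary~\ref{co:onto-1}, $FF_{field}^*$ maps P-Hilbert planes onto ordered Pythagorean fields and Euclidean planes onto ordered Euclidean fields, with $PP_{hilbert}^*$ as a quantifier-free inverse (Lemma~\ref{le:new-1} supplies $FF_{field}^*(PP_{hilbert}^*(\cF))\simeq\cF$). Both classes of ordered fields contain $\R$, hence are consistent with $RCF$, so Proposition~\ref{pr:f-universal}(ii) gives decidability of their universal consequences. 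The same pull-back/push-forward computation with $\theta$ a universal sentence over $\tau_{hilbert}$ then yields $T_{p\text{-}hilbert}\models\theta$ iff (ordered Pythagorean fields)~$\models PP_{hilbert}^{\sharp}(\theta)$, and likewise for $T_{euclid}$ and ordered Euclidean fields, so both sets of universal consequences are decidable. The main obstacle is not logical but bookkeeping: one must be careful that the congruence, betweenness and orthogonality predicates occurring in a universal geometric formula really do translate under $PP_{hilbert}$ (resp.\ $PP_{wu}$) to quantifier-free field formulas \emph{in $\tau_{f\text{-field}}$}, so that the universal fragment is genuinely preserved --- this is exactly the point where the analytic definitions of Section~\ref{se:analytic} are needed, since over the relational $\tau_{field}$ the line predicates would introduce existential quantifiers and the argument would collapse.
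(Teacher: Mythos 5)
Your proposal is correct and takes essentially the same route as the paper: the paper proves the general Theorem~\ref{th:universal} (quantifier-free coordinatization transductions, their invertibility/onto-ness, and Proposition~\ref{pr:f-universal} reducing universal field sentences to $RCF$ resp.\ $ACF_0$, decided by Tarski's results), and Proposition~\ref{pr:universal} is exactly the specialization you carry out directly for $T_{m\text{-}wu}$, $T_{p\text{-}hilbert}$ and $T_{euclid}$. One cosmetic slip: consistency of the theory of Pythagorean characteristic-$0$ fields with $ACF_0$ follows because $\C$ (indeed any algebraically closed field, in which every element is a square) is a common model, not because $\R$ is a model --- $\R \not\models ACF_0$; the ordered cases are fine since $\R \models RCF$.
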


Using a simple model theoretic argument we can generalize this 
to Theorem \ref{th:universal} below.

\ifskip
\else
We need some preparation.

First we have to distinguish between the vocabulary of (ordered) fields 
$\tau_{+,\times}$ 
($\tau_{o,+,\times}$) 
where addition and multiplication
are binary functions, and $-x$ and $\frac{1}{x}$ are unary functions,
and
$\tau_{field}$ 
($\tau_{ofield}$) 
where addition and multiplication are ternary relations, and there are no
symbols for $-x$ and $\frac{1}{x}$.
The distinction is necessary, because the notion of substructure is different.

Next we need the a special case
of Tarski's Theorem for universal formulas proven in every
textbook on model theory, e.g., \cite{bk:Hodges93}.
\begin{lemma}
\label{le:substructures}
Let $\cF$ be a field and $\cF_0$ be a subfield.
Let $\theta \in \FOL(\tau_{+,\times})$ be a universal sentence, and $\cF \models \theta$
Then $\cF_0 \models \theta$.
\\
The same also holds for ordered fields.
\end{lemma}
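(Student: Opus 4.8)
The plan is to derive the statement from the elementary model-theoretic fact that quantifier-free formulas are absolute between a structure and any of its substructures, where the functional vocabulary $\tau_{+,\times}$ is used precisely so that ``subfield of $\cF$'' coincides with ``substructure of $\cF$''. Write the hypothesised universal sentence as $\theta = \forall x_1 \cdots \forall x_m\, B(x_1,\dots,x_m)$ with $B$ quantifier-free over $\tau_{+,\times}$, and assume $\cF \models \theta$; the goal is $\cF_0 \models \theta$.

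First I would note that, since $\cF_0$ is a subfield, it contains $0,1$ and is closed under $+$, $\cdot$, $x\mapsto -x$ and $x\mapsto \tfrac{1}{x}$, so a one-line induction on terms shows that every $\tau_{+,\times}$-term $t(\bar x)$ takes the same value in $\cF_0$ as in $\cF$ whenever $\bar x$ is assigned a tuple from $\cF_0$. Next, by induction on the construction of the quantifier-free formula $B$, I would conclude that $\cF_0 \models B[\bar a]$ iff $\cF \models B[\bar a]$ for every $\bar a \in \cF_0^m$: the atomic case $t_1 = t_2$ is immediate from term-absoluteness, and the Boolean connectives are trivial. Finally, fixing an arbitrary $\bar a \in \cF_0^m$, from $\cF \models \theta$ we get $\cF \models B[\bar a]$, hence $\cF_0 \models B[\bar a]$; as $\bar a$ was arbitrary, $\cF_0 \models \theta$. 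For ordered fields the argument is word for word the same over $\tau_{o,+,\times}$, the only addition being the atomic clause $t_1 \le t_2$ in the induction, which is absolute because $\le_{\cF_0}$ is by definition the restriction of $\le_{\cF}$ to $\cF_0$.

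There is no real technical obstacle; the single point that needs attention --- and the reason the lemma is phrased for the functional vocabulary $\tau_{+,\times}$ rather than the relational $\tau_{field}$ --- is that term-absoluteness relies on $\cF_0$ being closed under all the primitive operations. Over the relational presentation a substructure need not even be a field, and a universal $\tau_{+,\times}$-sentence translates only into an $\forall\exists$-sentence of $\FOL(\tau_{field})$, for which downward preservation fails. Apart from this bookkeeping remark, the lemma is exactly the substructure-preservation direction of the {\L}o\'s--Tarski theorem, so one may alternatively just cite \cite{bk:Hodges93}.
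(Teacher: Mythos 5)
Your proof is correct. Note that the paper does not actually prove this lemma: it states it as ``a special case of Tarski's Theorem for universal formulas proven in every textbook on model theory'' and simply cites \cite{bk:Hodges93}, which is exactly the fallback you mention in your last sentence. What you have written out --- absoluteness of terms because $\cF_0$ is closed under $+$, $\cdot$, $-x$, $\tfrac{1}{x}$ and contains $0,1$, then absoluteness of quantifier-free formulas by induction, then instantiating the universal quantifiers at tuples from $\cF_0$ --- is precisely the standard proof of that preservation fact, so your argument supplies the details the paper delegates to the textbook rather than taking a different route. Your bookkeeping remark about why the functional vocabulary $\tau_{f-field}$ is essential (substructures in the relational vocabulary $\tau_{field}$ need not be fields, and the translation of a universal functional sentence is only $\forall\exists$) also matches the paper's own discussion of the relational versus functional presentations in Section 2, where it notes exactly this quantifier cost of translation. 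One cosmetic point: the version of the lemma in the body of the paper allows $\theta$ to be a universal formula with parameters from $\cF_0$; your argument covers this without change, since the parameters are elements of the substructure and are interpreted identically there, but it is worth saying so explicitly.
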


We also need the characterizations of the first order theory of the real and complex numbers.

Let $ACF_0 \subseteq \FOL(\tau_{+,\times})$ be theory of algebraically closed
fields of characteristic $0$, and
Let $RCF \subseteq \FOL(\tau_{o+,\times})$ be theory of real closed ordered fields.

\begin{proposition}
\label{pr:tarski}
$ACF_0$ and $RCF$
are complete, i.e., for every 
$\tau_{field}$-sentence 
respectively $\tau_{o,+,\times}$-sentence,
$\theta$ we have that 
$ACF_0 \models \theta$ or
$ACF_0 \models \neg\theta$ but not both,
and the same for
a $\tau_{o,+,\times}$-sentence and $RCF$.
\end{proposition}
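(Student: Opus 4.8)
The plan is to read off completeness from the quantifier elimination already recorded in Propositions~\ref{th:tarski} and~\ref{th:seidenberg}, working in the functional vocabularies $\tau_{f-field}$ and $\tau_{f-ofield}$ for which those results are stated (the switch to the relational vocabularies used in the statement is harmless, since it changes neither the class of models nor the notion of elementary equivalence on which completeness depends). Both theories are consistent, having $\C$ and $\R$ as models, and this already yields the ``but not both'' half, since no consistent theory can prove a sentence together with its negation. So it suffices to show that for every sentence $\theta$ at least one of $T \models \theta$, $T \models \neg\theta$ holds, for $T \in \{ACF_0, RCF\}$.

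Given such a $\theta$, quantifier elimination produces a quantifier-free sentence $\theta'$ with $T \models (\theta \leftrightarrow \theta')$; being quantifier-free and without free variables, $\theta'$ is a Boolean combination of equations $s = t$ between closed terms (and, in the ordered case, also of inequalities $s \leq t$). The heart of the matter is the claim that each closed term denotes one and the same element of the prime field $\Q$ in every model of $T$. This is proved by an easy induction on terms: in a field of characteristic $0$ the prime field $\Q$ embeds uniquely, and the only subtle constructor, the unary inverse, is handled by the standard convention $0^{-1}=0$, which makes evaluation total and model-independent (the characteristic-$0$ axioms $\mathbf{n}\neq 0$, and in the ordered case the order axioms, are precisely what pin these values down). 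Hence each atomic sentence occurring in $\theta'$ is true in all models of $T$ or false in all of them, so $\theta'$ has a truth value that does not depend on the model; therefore $T \models \theta'$ or $T \models \neg\theta'$, and thus $T \models \theta$ or $T\models\neg\theta$.

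The one step demanding care is the coherence of term evaluation across models; the rest is bookkeeping. A reader preferring a purely model-theoretic route may, for $ACF_0$, invoke the \L{}o\'s--Vaught test via Steinitz's classification of algebraically closed fields of characteristic $0$ by transcendence degree \cite{ar:steinitz}, which makes $ACF_0$ $\aleph_1$-categorical with no finite models; for $RCF$, which is \emph{not} categorical in any uncountable cardinality, one instead combines the model completeness implied by Proposition~\ref{th:seidenberg} with the fact that the field of real algebraic numbers embeds into every real closed field and is thus a prime model. I would keep the quantifier-elimination argument as the main proof, since Propositions~\ref{th:tarski} and~\ref{th:seidenberg} have just been quoted, and relegate these alternatives to a remark.
\hfill $\Box$
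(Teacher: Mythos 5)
Your proof is correct, but note that the paper never proves this proposition at all: completeness of $ACF_0$ and $RCF$ is simply quoted as Tarski's classical result, and in fact Propositions~\ref{th:seidenberg} and~\ref{th:tarski} already assert completeness verbatim, so the statement follows by citation and your appeal to those propositions only for their quantifier-elimination part has a mildly redundant flavour. What your route buys is a self-contained derivation of completeness from quantifier elimination, and you correctly isolate the only substantive step: a quantifier-free sentence is a Boolean combination of (in)equalities between closed terms, and the theory decides these because every closed term evaluates, via the unique embedding of $\Q$ into any model of characteristic $0$ (with a fixed stipulation such as $0^{-1}=0$ for the inverse symbol), to one and the same rational in every model, with the characteristic-$0$ axioms giving injectivity and, in the ordered case, the order axioms deciding comparisons. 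This is precisely the issue the paper itself flags in the remark following Proposition~\ref{th:tarski}, where deciding equality and comparison of constant terms of $\tau_{f-field}$ and $\tau_{f-ofield}$ is noted as an extra ingredient for decidability; your argument makes the same dependence explicit for completeness. Two small caveats: the convention $0^{-1}=0$ is not optional bookkeeping but genuinely needed, since without some fixed stipulation the sentence asserting $1/0=0$ is undecided and completeness in the functional signature fails, so you are right to single it out; and your transfer to the relational vocabularies is legitimate because relational sentences translate truth-preservingly into the functional language, even though, as the paper remarks, quantifier elimination itself is unavailable in $\FOL_{field}$ and $\FOL_{ofield}$. Your alternative arguments (\L{}o\'s--Vaught via Steinitz's classification for $ACF_0$; model completeness together with the prime model of real algebraic numbers for $RCF$, which is indeed not categorical in any uncountable power) are also sound and would serve equally well as a cited justification.
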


\begin{lemma}
\label{le:universal}
\begin{enumerate}
\item
Let $F$ be a set of 
$\tau_{+,\times}$-sentences
such that all its models are fields of characteristic $0$, and $F$ is consistent with $ACF_0$.
If $\theta$ is a universal
$\tau_{+,\times}$-sentence,
then
$$
F \models \theta
\mbox{ iff }
ACF_0 \models \theta.
$$
\item
Let $F$ be a set of 
$\tau_{o,+,\times}$-sentences
such that all its models are ordered fields, and $F$ is consistent with $RCF$.
If $\theta$ is a universal
$\tau_{o,+,\times}$-sentence,
then
$$
F \models \theta
\mbox{ iff }
RCF \models \theta.
$$
\end{enumerate}
\end{lemma}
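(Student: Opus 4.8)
The plan is to run exactly the argument already used for Proposition~\ref{pr:f-universal}, of which this lemma is essentially the "with sentences in place of parameters" restatement; so both directions of each biconditional will rest on just two ingredients: the completeness of the ambient theory ($ACF_0$ for part~(i), $RCF$ for part~(ii)), and the downward transfer of universal sentences to substructures recorded in Lemma~\ref{le:substructures}, supplemented by the classical algebraic fact that every model of $F$ embeds into a model of the ambient theory.

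For part~(i), the first step is to observe that $ACF_0 \models F$. Since $F$ is consistent with $ACF_0$, fix a model $\mathfrak{M} \models F \cup ACF_0$. If some $\phi \in F$ had $ACF_0 \not\models \phi$, then by completeness of $ACF_0$ (Proposition~\ref{th:tarski}) we would get $ACF_0 \models \neg\phi$, hence $\mathfrak{M} \models \neg\phi$, contradicting $\mathfrak{M} \models F$; so $ACF_0 \models \phi$ for every $\phi \in F$. Consequently every model of $ACF_0$ is a model of $F$, and the left-to-right implication $F \models \theta \Rightarrow ACF_0 \models \theta$ is immediate. For the converse, assume $ACF_0 \models \theta$ and let $\cF$ be an arbitrary model of $F$, i.e.\ a field of characteristic $0$. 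By Steinitz's theorem~\cite{ar:steinitz}, $\cF$ has an algebraic closure $\overline{\cF}$, which is an algebraically closed field of characteristic $0$, so $\overline{\cF} \models ACF_0$ and hence $\overline{\cF} \models \theta$. Since $\cF$ is a subfield of $\overline{\cF}$ it is a substructure over the functional vocabulary $\tau_{+,\times}$, and $\theta$ is universal, so Lemma~\ref{le:substructures} yields $\cF \models \theta$. As $\cF$ was arbitrary, $F \models \theta$.

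Part~(ii) is the word-for-word analogue: replace "field of characteristic $0$", "algebraic closure", "$ACF_0$", and Proposition~\ref{th:tarski} by "ordered field", "real closure", "$RCF$", and Proposition~\ref{th:seidenberg} respectively. The only extra input is the classical theorem that every ordered field admits a real closure --- an extension which is real closed and into which the given field embeds as an ordered subfield, hence as a $\tau_{o,+,\times}$-substructure --- after which the ordered version of Lemma~\ref{le:substructures} supplies the downward transfer of the universal $\tau_{o,+,\times}$-sentence $\theta$ to every model of $F$.

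I do not expect a genuine obstacle; the argument is short. The one point that must be respected is the choice of vocabulary: the embedding of $\cF$ into its closure has to be a true substructure embedding, which is guaranteed precisely because $\tau_{+,\times}$ and $\tau_{o,+,\times}$ carry the unary operations $-x$ and $\frac{1}{x}$, so that subfields are substructures; over the purely relational $\tau_{field}$ this fails, and universal sentences would not descend for free. Beyond that, everything rides on the two classical pillars --- completeness of $ACF_0$ and $RCF$ (Tarski--Seidenberg, Propositions~\ref{th:tarski} and~\ref{th:seidenberg}), and the existence of algebraic and real closures --- both of which are available.
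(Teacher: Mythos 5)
Your proposal is correct and follows essentially the same route as the paper's own proof: establish $ACF_0\models F$ (resp.\ $RCF\models F$) from completeness and consistency for the forward direction, and for the converse embed an arbitrary model of $F$ into its algebraic (resp.\ real) closure via Steinitz and pull the universal sentence $\theta$ back down by Lemma~\ref{le:substructures}. Your added remarks on why $ACF_0\models F$ and on the functional vocabulary making subfields genuine substructures merely spell out steps the paper leaves implicit.
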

\begin{proof}
(i):
As $ACF_0$ is complete and $F$ is consistent with $ACF_0$ we have that
$ACF_0 \models F$.
Let $F \models \theta$. Then also $ACF_0 \models \theta$.
\\
Conversely, assume $ACF_0 \models \theta$. Now we use that $\theta$ is universal.
By Lemma \ref{le:substructures}, 
$\theta$ holds in every subfield $\cF$ of an algebraically closed fields of characteristic $0$.
By a classical theorem of Algebra, \cite{ar:steinitz},
every field of characteristic $0$ has an algebraically closed extension which satisfies
$ACF_0$. Hence $T \models \theta$.
\\
(ii): The proof is similar, using real closures instead.
\hfill $\Box$
\end{proof}
\fi 

The reader can easily verify the following:

\begin{proposition}
\label{qfree-transductions}
The formulas in the definition of the translation scheme 
$$
PP = \langle
\phi_{points},
\phi_{lines},
\phi_{\in},
\phi_{Eq},
\phi_{Or},
\phi_{An},
\phi_{Be}
\rangle
$$
can be written in the vocabulary $\tau_{o,+, \times}$ as quantifier-free formulas
$$
QP = \langle
\psi_{points},
\psi_{lines},
\psi_{\in},
\psi_{Eq},
\psi_{Or},
\psi_{An},
\psi_{Be}
\rangle
$$
In particular if 
$\theta \in \FOL_{hilbert}$
is a universal formula, 
then $ \hat{\theta}= QP^{\sharp}(\theta) \in \FOL(\tau_{o,+,\times}$
is a universal formula. 
\end{proposition}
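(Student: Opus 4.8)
The plan is to read the component formulas of the translation scheme $PP = PP_{hilbert}$ off Definition \ref{transduction-PP} together with the analytic definitions collected in Section \ref{se:analytic}, and to observe that each of them can be taken quantifier-free once one works over the \emph{functional} vocabulary of ordered fields. Recall that $\tau_{o,+,\times}$, which is (up to naming) the vocabulary $\tau_{f-ofield}$ of Section \ref{se:fdecidability}, carries binary function symbols for $+$ and $\cdot$, unary function symbols for $-x$ and $x^{-1}$, the constants $0,1$ and the relation $\leq$. The point that makes everything work is that over this signature division is a \emph{term}, not something that has to be eliminated with an existential quantifier --- this is exactly the difference between $\tau_{field}$ and $\tau_{f-field}$ pointed out in Section \ref{se:analytic}. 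So first I would simply go down the list: $\phi_{points}$ and $\phi_{lines}$ may be taken to be $\top$ (a point is any pair, a line any triple of field elements); $\psi_{\in}$ is the equation $ax+by+c=0$; $\psi_{Eq}$ is $(x_1-x_2)^2+(y_1-y_2)^2=(x_3-x_4)^2+(y_3-y_4)^2$; $\psi_{Or}$ is $a_1a_2+b_1b_2=0$; and $\psi_{Be}$ is the conjunction of the collinearity determinant equation with the $\leq$-ordering disjunction written out in Section \ref{se:analytic}. Each of these is a Boolean combination of polynomial equalities and $\leq$-atoms, hence quantifier-free in $\tau_{o,+,\times}$.

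The only component needing a moment's care is equiangularity $\psi_{An}$: following Section \ref{se:analytic} it is expressed through slopes and tangents, and hence through division, so that congruence of two angles becomes an equation between rational functions of the point coordinates (together, if one insists on unsigned angles, with a condition on the sign of the relevant dot products). Over $\tau_{o,+,\times}$ the inverse operation is in the signature, so these are honest terms and $\psi_{An}$ is literally quantifier-free; a reader who prefers not to commit to a convention for $0^{-1}$ can instead clear denominators and split into the finitely many cases according to which denominators vanish, each case again a Boolean combination of polynomial (in)equalities and $\leq$-atoms. I expect this bit of bookkeeping for $An$ to be the only part of the argument that is not entirely mechanical, and it is still routine.

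For the ``in particular'' clause the plan is to quote the Fundamental Property of Translation Schemes, Proposition \ref{pr:fundamental}: by the previous two paragraphs $QP=\langle\psi_{points},\psi_{lines},\psi_{\in},\psi_{Eq},\psi_{Or},\psi_{An},\psi_{Be}\rangle$ is a quantifier-free translation scheme, so a universal $\theta\in\FOL(\tau_{hilbert})$ is sent to a universal $\hat\theta=QP^{\sharp}(\theta)$. The one extra remark, needed because the geometric vocabulary is two-sorted, is that a point variable of $\theta$ is replaced by a pair of field variables and a line variable by a triple, each universal quantifier being relativized by the guard $\phi_{points}$ respectively $\phi_{lines}$; since these guards are $\top$ (at worst quantifier-free), relativization replaces $\forall$ by $\forall$ in front of a quantifier-free matrix, and universality is preserved. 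Hence $\hat\theta\in\FOL(\tau_{o,+,\times})$ is universal, as claimed. The main obstacle, such as it is, is therefore purely notational: exhibiting equiangularity as a genuinely quantifier-free formula, and checking that the two-sorted relativization introduces no quantifiers.
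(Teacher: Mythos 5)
Your proposal is correct and follows essentially the route the paper intends: it leaves the proposition as an easy verification based on the quantifier-free analytic definitions of $\in$, $Eq$, $Or$, $An$ (via slopes/tangents for rectangular triangles) and $Be$ over the functional vocabulary of ordered fields in Section \ref{se:analytic} (cf.\ Proposition \ref{pr:PP}), combined with the Fundamental Property of Translation Schemes (Proposition \ref{pr:fundamental}) for the preservation of universality. Your extra care about equiangularity (division as a term, or clearing denominators) and about the two-sorted relativization with trivial guards only makes explicit what the paper treats as routine.
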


The general form of Proposition \ref{pr:universal} can now be stated as follows:
\begin{theorem}
\label{th:universal}
$T \subseteq \FOL(\tau_{hilbert})$ be a geometrical theory such  that
\begin{enumerate}[(i)]
\item
$T \models T_{pappus}$.
\item
The set of formulas $$
F_T = \{ \phi \in \FOL(\tau_{o,+,\times}): \phi =  PP^{\sharp}(\psi) \wedge \psi \in T \}
$$ 
is consistent with $RCF$.
\item 
For every $\Pi$ with $\Pi \models T$
$PP^*(RR^*(\Pi))$ is isomorphic to $\Pi$.
\end{enumerate}
Then
the universal consequences of $T$ are decidable.
\\
The analogous statement also holds for
$T \subseteq \FOL(\tau_{wu})$ and
$T \subseteq \FOL(\tau_{origami})$ where $RCF$ is replaced by $ACF_0$.
\end{theorem}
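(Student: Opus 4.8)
The plan is to decide ``$T\models\theta$?'' for a universal $\theta\in\FOL(\tau_{hilbert})$ by mechanically translating it into a universal sentence over the functional vocabulary of ordered fields and then invoking Tarski's decision procedure for $RCF$ (Proposition \ref{th:seidenberg}). Concretely, I would first put $\hat\theta = QP^{\sharp}(\theta)$; by Proposition \ref{qfree-transductions} this is again a universal sentence, now in $\FOL(\tau_{o,+,\times})$, and it is logically equivalent over ordered fields to $PP^{\sharp}(\theta)$. The heart of the argument is the equivalence
$$
T\models\theta \qquad\Longleftrightarrow\qquad RCF\models\hat\theta ,
$$
after which the algorithm ``compute $\hat\theta$, run quantifier elimination'' decides the universal consequences of $T$, since $RCF$ is decidable.

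For the forward direction, suppose $T\models\theta$ and let $\cO$ be an arbitrary model of $RCF$. Since $RCF$ is complete (Proposition \ref{th:seidenberg}) and $F_T$ is consistent with $RCF$ by hypothesis (ii), we get $RCF\models F_T$, hence $\cO\models PP^{\sharp}(\psi)$ for every $\psi\in T$. By the Fundamental Property of translation schemes (Proposition \ref{pr:fundamental}) this means $PP^*(\cO)\models\psi$ for all $\psi\in T$, i.e.\ $PP^*(\cO)\models T$, so $PP^*(\cO)\models\theta$; a second application of the Fundamental Property gives $\cO\models PP^{\sharp}(\theta)$, equivalently $\cO\models\hat\theta$. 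As $\cO$ was arbitrary, $RCF\models\hat\theta$.

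For the converse, suppose $RCF\models\hat\theta$ and let $\Pi\models T$. By hypotheses (i) and (iii), $\Pi$ is an infinite Pappian plane and $\Pi\simeq PP^*(RR^*(\Pi))$; write $\cF=RR^*(\Pi)$, which by Theorem \ref{pr:ptr} together with hypothesis (i) is a field of characteristic $0$, and which (using the order definable from betweenness, equivalently segment arithmetic, Theorem \ref{th:corr-segment}) can be viewed as an ordered field $\cF_o$. Being an ordered field of characteristic $0$, $\cF_o$ is formally real and embeds, \emph{as a substructure in the functional vocabulary} $\tau_{o,+,\times}$, into a real closed field $\cR\models RCF$. Then $\cR\models\hat\theta$, and since $\hat\theta$ is universal, Lemma \ref{le:substructures} yields $\cF_o\models\hat\theta$, i.e.\ $\cF\models PP^{\sharp}(\theta)$; by the Fundamental Property $PP^*(\cF)\models\theta$, hence $\Pi\simeq PP^*(\cF)\models\theta$. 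As $\Pi\models T$ was arbitrary, $T\models\theta$. The cases $T\subseteq\FOL(\tau_{wu})$ and $T\subseteq\FOL(\tau_{origami})$ run identically, except that the coordinate field need not be orderable: one replaces $RCF$ by $ACF_0$, uses that every field of characteristic $0$ embeds into an algebraically closed field of characteristic $0$ (Steinitz, \cite{ar:steinitz}), and invokes Proposition \ref{th:tarski} and Proposition \ref{pr:f-universal}(i) in place of their real-closed counterparts.

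I expect the main obstacle to be the converse direction, and in particular the bookkeeping that makes Lemma \ref{le:substructures} applicable to an \emph{arbitrary} model of $T$: one must extract a coordinate field, certify its characteristic using the coordinatization theorem (Theorem \ref{pr:ptr}, resp.\ Theorem \ref{th:szmielew}) and hypothesis (i), pass to the functional vocabulary so that ``substructure'' coincides with ``subfield'', check (via Proposition \ref{qfree-transductions}) that the translation keeps universal sentences universal, and — crucially — use hypothesis (iii) to carry the geometric conclusion back from $PP^*(RR^*(\Pi))$ to $\Pi$ itself; without (iii) one only learns that $PP^*(RR^*(\Pi))$ satisfies $\theta$. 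This is precisely where the ``onto''/invertibility subtleties emphasized earlier in the paper do real work. The forward direction is by contrast a routine unwinding of the Fundamental Property together with the completeness of $RCF$ and hypothesis (ii).
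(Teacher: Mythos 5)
Your proposal is correct and takes essentially the same route as the paper: translate the universal sentence by the quantifier-free scheme $QP$, prove $T \models \theta$ iff $RCF \models QP^{\sharp}(\theta)$ using the Fundamental Property, hypothesis (iii) to pass between $\Pi$ and its coordinate field, and hypothesis (ii) plus completeness for the other direction, then decide by Proposition \ref{th:seidenberg} (resp.\ \ref{th:tarski}). The only difference is presentational: where the paper cites Proposition \ref{pr:f-universal} as a black box for the field-level equivalence, you unfold it inline (completeness of $RCF$/$ACF_0$, Lemma \ref{le:substructures}, and embedding into a real/algebraic closure), which if anything makes the role of hypotheses (ii) and (iii) more explicit than the paper's own terse argument.
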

\begin{proof}[of Theorem {\ref{th:universal}}]
Let $T$ be as required and $\theta$ be universal.
We want to show that $T \models \theta$ iff $RFC \models QP^{\sharp}(\theta)$. 
The latter can be decided using
the Theorems \ref{th:seidenberg}. 

We have: $T \models \theta$ iff for every plane $\Pi$ with $\Pi \models T$ also $\Pi \models \theta$.
\\
By Theorem \ref{th:szmielew} $\Pi$ is isomorphic $PP^*(RR^*(\Pi))$  and also to $QP^*(RR^*(\Pi))$.
By Theorem \ref{pr:Fund} $\Pi \models \theta$ iff $(RR^*(\Pi) \models QP^{\sharp}(\theta)$.
By the definition of $QP$ the formula $QP^{\sharp}(\theta)$ is universal.
Therefore $(RR^*(\Pi) \models QP^{\sharp}(\theta)$ iff $RCF \models QP^{\sharp}(\theta)$
by Lemma \ref{pr:f-universal}.

In the case of fields rather than ordered fields, we 
show that $T \models \theta$ iff $ACF_0 \models QP^{\sharp}(\theta)$ 
which can be decided using
Theorem \ref{th:tarski}. 
\hfill $\Box$
\end{proof}
Proposition \ref{pr:universal} now follows easily
using Theorems \ref{th:seidenberg} and \ref{th:tarski}.

We also get:
\begin{corollary}
The universal consequences of $T_{a-origami}$ formulated as formulas in $\FOL(\tau_{wu})$
are decidable.
\end{corollary}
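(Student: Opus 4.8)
The plan is to deduce the corollary as an instance of Theorem \ref{th:universal} in its $\tau_{wu}$/$ACF_0$ variant. By the proposition showing that $SymLine$ and $Peq$ are first-order definable from $Eq$ and $Or$, the theory $T_{a-origami}$ is logically equivalent, after replacing each occurrence of $SymLine$ and $Peq$ by its $\tau_{wu}$-definition, to a theory $T' \subseteq \FOL(\tau_{wu})$ with the same models; moreover a universal $\tau_{wu}$-sentence is a consequence of $T'$ precisely when it is a consequence of $T_{a-origami}$ read over $\tau_{wu}$. So it suffices to verify the three hypotheses of Theorem \ref{th:universal} for $T'$ and then invoke the decidability of $ACF_0$ (Proposition \ref{th:tarski}).

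First I would check condition (i): every model of $T'$ is a Pappian plane, i.e. $T' \models T_{pappus}$. This is the fact that the Huzita--Hatori folds performed on an affine plane with infinite lines already force the Desargues and Pappus configurations, which is established in \cite{alperin2000mathematical} and is in any case implicit in the coordinatization of Origami planes by Vieta fields used for Theorem \ref{th:undec-origami}. Next I would verify condition (ii): $F_{T'} = \{PP_{wu}^{\sharp}(\psi) : \psi \in T'\}$ is consistent with $ACF_0$. Here I would exhibit the field of complex numbers $\C$ as a witness. Since $\C$ is algebraically closed it is a Vieta field; being of characteristic $0$, Theorem \ref{th:cart-a}(i),(ii),(iv) then shows that $PP_{wu}^*(\C) = \Pi_{wu}(\C)$ is an infinite Pappian plane satisfying (H-1)--(H-7), hence a model of $T'$. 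By the Fundamental Property of translation schemes (Proposition \ref{pr:fundamental}), $\C \models PP_{wu}^{\sharp}(\psi)$ for every axiom $\psi$ of $T'$, so $\C \models F_{T'}$, and of course $\C \models ACF_0$.

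The substantive point is condition (iii): for every $\Pi \models T'$ one must have $PP_{wu}^*(RF_{field}^*(\Pi)) \simeq \Pi$. Starting from $\Pi$, Theorem \ref{th:origami} gives that $\cF := RF_{field}^*(\Pi)$ is a Vieta field, and by Theorem \ref{pr:ptr}(ii) it has characteristic $0$ because $\Pi$ has no finite lines; applying the Vieta property to the polynomial $x^2-c$ shows that every element of $\cF$ has a square root, so $\cF$ is in particular Pythagorean. It remains to see that not just incidence but also the metric predicates $Eq$ and $Or$ of $\Pi$ are recovered exactly by their analytic definitions over $\cF$ — that is, that the round-trip isomorphism $PP_{\in}^*(\cF_\Pi)\simeq\Pi$ of Theorem \ref{th:szmielew} upgrades to an isomorphism for $PP_{wu}$, as in the metric coordinatization statement of Theorem \ref{th:orthogonal}(iii). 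I expect this to be the main obstacle: one has to match the orthogonality relation obtained from the ternary-ring/segment arithmetic of $\Pi$ with the given relation $Or$ in $\Pi$, and this relies on $\Pi$ behaving like a metric Wu plane, a property that must itself be extracted from the Origami axioms (again via \cite{alperin2000mathematical}). Once (i)--(iii) are in hand, Theorem \ref{th:universal} yields that the universal consequences of $T'$ — equivalently, of $T_{a-origami}$ over $\tau_{wu}$ — are decidable, which is the assertion of the corollary.
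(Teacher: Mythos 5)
Your proposal follows essentially the same route as the paper: the paper's own proof of this corollary is a single sentence invoking the Vieta-field characterization of Origami planes (Theorem \ref{th:origami}) to instantiate Theorem \ref{th:universal} in its $\tau_{wu}$/$ACF_0$ variant, and your checking of hypotheses (i)--(iii) --- consistency with $ACF_0$ witnessed by $\C$, Pappus, and the round-trip isomorphism extracted from the Alperin coordinatization --- is exactly the detail the paper leaves implicit. So the approach is the same, merely spelled out more fully than the paper's one-line argument.
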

\begin{proof}
This follows from the characterization of the fields corresponding to $T_{origami}$
as the Vieta fields (Theorem \ref{th:origami}).
\end{proof}

More decidability for the universal consequences can be obtained 
from axiomatizations of geometrical constructions using more than just ruler and compass,
cf.  \cite{pambuccian2008axiomatizing}.

\section{Conclusions}
\label{se:conclu}
We have discussed the decidability of the consequence problem for various axiomatizations of Euclidean
geometry. The purpose of the paper was to make the metamathematical methods discussed
in \cite{bk:Schwabhaeuser1983}  and in \cite{balbiani2007logical}
more accessible to the research communities of symbolic computation
and automated theorem proving.  In particular, we wanted to draw attention to Ziegler's Theorem
\ref{th:Ziegler}, and spell out in detail what is needed to draw its consequences for geometrical theories. 
We have also listed some open problems concerning the decidability of theories of fields
if restricted to fragments of first order logic such as $\fU, \fE, \fH$.

In writing this expository paper we also included
new applications of these methods to Wu's orthogonal geometry and to the geometry
of paper folding Origami. These results, both undecidability of first order consequences and
decidability of universal consequences, can be easily extended to theories of geometric constructions
going beyond ruler and compass or paper folding, cf. \cite{bk:Hartshorne2000,pambuccian2008axiomatizing}.

From a complexity point of view, we see that the consequence problem for first order
formulas is either undecidable or, in the case of Tarski's decidability results,
prohibitively difficult. We have also shown that in the cases discussed, the consequence
problem for universally quantified formulas is decidable, possibly in nondeterministic polynomial time.
What is left open, and remains a challenge for future research, is the decidability question
for existential and $\forall \exists$-Horn formulas $\fE$ and $\fH$.

\small
\subsubsection*{Acknowledgements}
This paper has its origin in my lecture notes on automated theorem proving \cite{JAM-ATP},
developed in the last 15 years. 
I was motivated to develop this material further, when I prepared a lecture
on P. Bernays and the foundations of geometry, which I gave
at the occasion of the unveiling in summer 2017 of a plaque at the house where P. Bernays 
used to live in G\"ottingen, before going into forced exile in 1933.
P. Bernays edited Hilbert's \cite{hilbert1902foundations} from the 5th (1922) till the 10th edition (1967),
see also \cite{hilbert1971foundations,hilbert2013grundlagen}.
I am indebted to R. Kahle, who invited me to give this lecture.
Without this invitation this paper would not have been written.
I would also like to thank L. Kovacs for her patience and flexibility concerning the
deadline for submitting this paper to the special issue
on {\em Formalization of Geometry and Reasoning} of the {\em Annals of Mathematics and Artificial Intelligence}.
Special thanks are due to the anonymous referees and to J. Baldwin for critical remarks
and suggestions, as well as for pointing out various imprecisions, which I hope were
all corrected.

I was lucky enough to know P. Bernays personally, as well as some other 
pioneers of the modern foundations of geometry, among them
R. Baer, 
H. Lenz, 
W. Rautenberg,
W. Schwabh\"auser, 
W. Szmielew and
A. Tarski. 
I dedicate this paper to them, and to  my wonderful teacher of descriptive geometry, 
M. Herter, at the Gymnasium Freudenberg
in Zurich, Switzerland.
Blessed be their memory.


\bibliographystyle{alpha}      
\bibliography{r-ref}   
\end{document}